\def \d {\mathrm{d}}
\def \T {\mbox{\ensuremath{\bigtriangleup}}}
\def \Rabcd {{\stackrel{\scriptscriptstyle{(0)}}{R}}{}^{ ab}_{\phantom{ab}cd}}
\def \Rceag {{\stackrel{\scriptscriptstyle{(0)}}{R}}{}^{ ce}_{\phantom{ce}ag}}
\def \Rbedg {{\stackrel{\scriptscriptstyle{(0)}}{R}}{}^{ be}_{\phantom{be}dg}}
\def \C {{\stackrel{\scriptscriptstyle{(-2)}}{C}}} 
\newcommand{\sgn}{\operatorname{sgn}}
\newcommand{\s}{\operatorname{s}}
\newcommand{\cs}{\operatorname{c}}
\newcommand{\ct}{\operatorname{ct}}
\newtheorem{thm}{Theorem}[section] 
\newtheorem{cor}[thm]{Corollary} 
\newtheorem{lem}[thm]{Lemma} 
\newtheorem{prop}[thm]{Proposition}
\newtheorem{con}[thm]{Conjecture} 
\theoremstyle{definition} 
\newtheorem{defn}[thm]{Definition}
\theoremstyle{remark}  
\def\beq{\begin{eqnarray}}  
\def\eeq{\end{eqnarray}}  
\def\bsp{\begin{split}}  
\def\esp{\end{split}}
\def\d{\mathrm{d}}  
\def\T{ {\sf T} }
\def \eq5d {\stackrel{{\mbox{\tiny (n=5)}}}{=}}
\newcommand{\mb}[1]{{\mathbb #1}}
\def \bl {\mbox{{\mbold\ell}}}
\def \bn {\mbox{{\bf n}}}
\def \bT {\mbox{{\bf T}}}
\def \bg {\mbox{{\bf g}}}
\def \BEA { \begin{eqnarray}}
\def \EEA {\end{eqnarray}}
\def \BE {\begin{equation}}
\def \EE {\end{equation}}
\def\d{\mathrm{d}}
\def \mi {\stackrel{i}{m}}
\def \mj {\stackrel{j}{m}}
\def \mk {\stackrel{k}{m}}
\def \mr {\stackrel{r}{m}}
\def \ms {\stackrel{s}{m}}
\def \mz {\stackrel{z}{m}}
\def \mq {\stackrel{q}{m}}
\def \mo {\stackrel{o}{m}}
\def \mD {\stackrel{2}{m}}
\def \mT {\stackrel{3}{m}}
\def \mC {\stackrel{4}{m}}
\def \mio #1 {\mi_{#1}\ ^{  \! \! \! \! 0}} 
\def \mjo #1 {\mj_{#1}\ ^{  \! \! \! \! 0}} 
\def \mko #1 {\mk_{#1}\ ^{  \! \! \! \! 0}} 
\def \mro #1 {\mr_{#1}\ ^{  \! \! \! \! 0}} 
\def \mso #1 {\ms_{#1}\ ^{  \! \! \! \! 0}} 
\def \mpo #1 {\mp_{#1}\ ^{  \! \! \! \! 0}} 
\def \mzo #1 {\mz_{#1}\ ^{  \! \! \! \! 0}} 
\def \mqo #1 {\mq_{#1}\ ^{  \! \! \! \! 0}} 
\def \moo #1 {\mo_{#1}\ ^{  \! \! \! \! 0}} 
\def \mDo #1 {\mD_{#1}\ ^{  \! \! \! \! 0}} 
\def \mTo #1 {\mT_{#1}\ ^{  \! \! \! \! 0}} 
\def \mCo #1 {\mC_{#1}\ ^{  \! \! \! \! 0}} 
\def \miJ #1 {\mi_{#1}\ ^{  \! \! \! \! (1)}} 
\def \mjJ #1 {\mj_{#1}\ ^{  \! \! \! \! (1)}} 
\def \mkJ #1 {\mk_{#1}\ ^{  \! \! \! \! (1)}} 
\def \mrJ #1 {\mr_{#1}\ ^{  \! \! \! \! (1)}}
\def \bl {\mbox{\boldmath{$\ell$}}}
\def \hbl {\mbox{\boldmath{$\hat \ell$}}}
\def \bn {\mbox{\boldmath{$n$}}}
\def \hbn {\mbox{\boldmath{$\hat n$}}}
\def \hbm #1 {\mbox{\boldmath{$\hat m^{(#1)}$}}}
\def \Mi {\stackrel{i}{M}}
\def \Ms {\stackrel{s}{M}}
\def \Mchal {\stackrel{\chi_\alpha}{M}}
\newcommand{\be}{\begin{equation}}
\newcommand{\ee}{\end{equation}}
\newcommand{\beqn}{\begin{eqnarray}}
\newcommand{\eeqn}{\end{eqnarray}}
\newcommand{\ba}{\begin{array}}
\newcommand{\ea}{\end{array}}
\def \Mi {\stackrel{i}{M}}
\def \Ms {\stackrel{s}{M}}
\def \pul {\frac{1}{2}}
\def \T {\bigtriangleup  }
\newcommand{\M}[3] {{\stackrel{#1}{M}}_{{#2}{#3}}}
\newcommand{\Phia}{\Phi^\mathrm{A}}
\newcommand{\Phis}{\Phi^\mathrm{S}} 
\def \mb #1 {\mbox{\boldmath{$m^{(#1)}$}}}
\begin{document}

\title{Type II universal spacetimes}

\author{S. Hervik$^\diamond$, T. M\'alek$^\star$,  V. Pravda$^\star$,  A. Pravdov\' a$^\star$\\
\vspace{0.05cm} \\
{\small $^\diamond$ Faculty of Science and Technology, University of Stavanger}, {\small  N-4036 Stavanger, Norway}  \\
{\small $^\star$ Institute of Mathematics, Academy of Sciences of the Czech Republic}, \\ {\small \v Zitn\' a 25, 115 67 Prague 1, Czech Republic} \\
 {\small E-mail: \texttt{sigbjorn.hervik@uis.no, malek@math.cas.cz,
pravda@math.cas.cz, pravdova@math.cas.cz}} }

\date{\today}

\maketitle

\begin{abstract}
We study type II  universal metrics of the Lorentzian signature. These metrics simultaneously solve vacuum field equations of all theories of gravitation with the Lagrangian being a polynomial curvature invariant constructed from the metric, the Riemann tensor and its covariant derivatives of an arbitrary order. 

We provide examples of type II  universal metrics for all composite number dimensions. 
On the other hand, we have no examples for  prime number dimensions and we prove the non-existence of type II  universal spacetimes in five dimensions.

We also present type II vacuum solutions of selected classes of gravitational theories, such as  Lovelock, quadratic and $L(\mbox{Riemann})$ gravities.

\end{abstract}
\section{Introduction}

Let us start with a formal definition of universal  \cite{Coleyetal08} and $k$-universal metrics.
\begin{defn}
\label{univ}
A metric is called {\it $k$-universal} if all conserved symmetric rank-2 tensors constructed\footnote{{  Throughout the paper, we consider only scalars and symmetric rank-2 tensors  constructed as contractions of {\em polynomials} 
from  the  metric, the Riemann tensor and its covariant  derivatives
of an arbitrary order. }}  
from  the  metric, the Riemann tensor and its covariant  derivatives
up to the $k^{\textnormal{th}}$ order are multiples of the metric.
If a  metric is $k$-universal for all integers $k$ then it is called {\it universal}.
\end{defn}

Universal metrics are  vacuum solutions (possibly with a non-vanishing cosmological constant) to all theories of the form 
\be
{ L}={ L}(g_{ab},R_{abcd},\nabla_{a_1}R_{bcde},\dots,\nabla_{a_1\dots a_p}R_{bcde}),\label{Lagr}
\ee
where the Lagrangian $L$ is a polynomial curvature invariant.

Particular explicit examples of universal metrics were first discussed in \cite{AmaKli89,HorSte90}  in the context of string theory  (pp-waves) and in \cite{Coleyetal08} as metrics with vanishing quantum corrections. In \cite{universal1}, we have systematically studied type N and III  universal metrics (while all type N
Einstein spacetimes are 0-universal, they are universal iff in addition they are Kundt). In the present paper, we focus on the type II (including the type D) in the classification of \cite{Coleyetal04} (see also \cite{OrtPraPra12rev} for a recent review of the classification).

While the well known examples of type N universal pp-waves of \cite{AmaKli89,HorSte90} admit a covariantly constant null vector (CCNV)  obeying $\ell_{a;b}=0$, we show that in arbitrary dimension type II   universal spacetimes admitting CCNV do not exist. On the other hand, we construct examples of type II universal spacetimes admitting a recurrent null vector (RNV) obeying $\ell_{a;b} \propto \ell_a \ell_b$ and also more general Kundt non-RNV type II  universal spacetimes.

Interestingly, in contrast with the results of \cite{universal1} for  type III and N universal metrics, the type II cases critically depend on dimensionality of the spacetime.
We show that one can employ $n_0$-dimensional type N universal metrics of \cite{universal1} ($n_0 \geq 4$) to construct  type II universal metrics in $n= n_0 N$ dimensions (where $N\geq 2$ is an integer) using an appropriate direct product with $(N-1)$ maximally symmetric spaces. Simple direct product metrics can be also used to construct type D universal metrics for all composite number dimensions $n \geq 4$. 

On the other hand, these methods cannot be used for prime number dimensions and  we prove that 
\begin{thm}
\label{prop5d}
In five dimensions,  genuine type II  universal spacetimes do not exist.
\end{thm}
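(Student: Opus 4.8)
The plan is to combine the general structural constraints satisfied by every universal metric with the rigidity of three-dimensional geometry, and to obtain a contradiction by exhibiting a conserved symmetric rank-2 tensor that is not a multiple of the metric. First I would reduce the class of candidates. The Einstein tensor $G_{ab}$ is a conserved symmetric rank-2 tensor built polynomially from $g_{ab}$ and the Riemann tensor, so universality forces $G_{ab}\propto g_{ab}$; taking the divergence shows the factor is constant, i.e.\ the spacetime is Einstein (possibly with a cosmological constant). Moreover every universal spacetime is $CSI$ (all scalar polynomial curvature invariants are constant): a non-constant invariant would, through its gradient, yield a conserved rank-2 tensor with a non-trivial trace-free part. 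A genuine type II --- that is, type II but not of type D or more special --- Einstein $CSI$ spacetime is then a degenerate Kundt spacetime; the only other possibility, local homogeneity, is treated separately below. So one may work in the canonical Kundt frame $(\bl,\bn,m^{(i)})$, $i=2,\dots,n-1$, with $\bl$ the multiple Weyl aligned null direction, and decompose the Weyl tensor by boost weight: type II means its boost weight $+2$ and $+1$ components vanish.

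Next I would use that for $n=5$ the spacelike transverse space spanned by $m^{(i)}$ is only three-dimensional. The boost weight $0$ part of the Weyl tensor, after the Einstein-fixed traces are removed, carries the algebraic symmetries of a curvature tensor on this three-dimensional transverse space; but in three dimensions the Weyl tensor vanishes identically, so the transverse component $C_{ijkl}$ is entirely fixed by its traces, and a three-dimensional Einstein metric is automatically of constant curvature. Combining this with the contracted and differential Bianchi identities for an Einstein space and with the Kundt and $CSI$ conditions pins down the boost weight $0$ data up to the residual frame freedom: it must be that of one of the type D direct-product universal metrics available in five dimensions (associated with the splittings $5=2+3$ and $5=2+2+1$ into constant-curvature blocks). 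In particular the boost weight $0$ part of the Weyl tensor is already that of a type D universal metric.

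It then remains to kill the genuinely type II piece, i.e.\ the assumption that at least one boost weight $-1$ or $-2$ Weyl component is non-zero. For this I would evaluate a suitable higher-order conserved symmetric rank-2 tensor --- for instance the field-equation tensor of a quadratic curvature Lagrangian, or a contraction involving $\nabla^c\nabla_c$ of the curvature, of the same kind exploited for types N and III in \cite{universal1} --- which on an Einstein $CSI$ Kundt background reduces to an explicit expression in the Weyl components and their frame derivatives. Universality demands that every such tensor be proportional to $g_{ab}$; but one checks that its boost weight $0$ part necessarily acquires a term bilinear in the negative-boost-weight Weyl components that is not pure trace unless those components vanish. Hence the negative boost weights vanish and the spacetime is of type D (or, if the boost weight $0$ part also vanishes, of type III or N) --- in any case more special than genuine type II, the desired contradiction. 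The same invariant computation, or the classification of five-dimensional homogeneous Einstein spaces of proper type II, handles the local homogeneity alternative left open above.

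The main difficulty is the control of the boost weight $0$ sector: one must show that the three-dimensional transverse geometry compatible with Einstein plus $CSI$ is so rigid --- constant curvature, or a product with a constant-curvature surface --- that no choice of the remaining Kundt functions can keep the Weyl tensor genuinely of type II while still making \emph{every} conserved rank-2 tensor a multiple of the metric. Equivalently, the crux is to verify that the obstruction tensors delivered by the higher-derivative Lagrangians do not all collapse to a pure trace. This step genuinely uses $n=5$: for composite $n$ the analogous data do support type II (and type D) universal solutions, as the direct-product constructions of the previous section show.
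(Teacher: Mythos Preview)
Your proposal rests on a misreading of the paper's convention: here ``genuine type II'' \emph{includes} type~D and excludes only types III, N and O (see section~\ref{sec_prelim}). Thus your endgame---forcing the negative boost-weight Weyl components to vanish and declaring ``the spacetime is of type D \dots\ in any case more special than genuine type II, the desired contradiction''---is no contradiction at all. The theorem asserts that even type~D universal spacetimes do not exist in five dimensions; your argument, as written, would at best show the spacetime is type~D and then stop. Relatedly, there are no ``type D direct-product universal metrics available in five dimensions'' with splittings $5=2+3$ or $5=2+2+1$: Proposition~\ref{prop_direct_universal} requires all blocks to have the \emph{same} dimension, which is impossible for the prime $n=5$. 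So the boost-weight~0 sector cannot be matched to any universal background, and your reduction strategy has no target.

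The paper's proof is both simpler and sharper, avoiding CSI/Kundt structure theory entirely. It is a purely algebraic argument at the level of boost-weight~0 Weyl components: in five dimensions $\tilde\Phi_{ijkl}$ vanishes identically, so $\Phi_{ijkl}$ is determined by $\Phi_{ij}$ via \eqref{Weyl5D}. One then imposes that the two explicit conserved tensors $S^{(2)}_{ab}=C_{acde}C_b{}^{cde}$ and $S^{(3)}_{ab}=C^{cdef}C_{cdga}C_{ef}{}^{g}{}_{b}$ (Lemmas~\ref{lemmaCCconserved} and \ref{lemmaCCCconserved}) be multiples of the metric. Projecting $S^{(2)}_{ab}=Kg_{ab}$ onto the frame yields algebraic equations for $\Phis_{ij}$ and $\Phia_{ij}$ whose only nontrivial solution is \eqref{5dsolQ}; feeding this into $S^{(3)}_{ij}=K'\delta_{ij}$ gives $12\Phi^2\Phis_{ij}=K'\delta_{ij}$, forcing $\Phi_{ij}=0$ and hence all boost-weight~0 components to vanish. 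This rules out type~D and proper type~II simultaneously, and in fact establishes the stronger statement that no genuine type~II $0$-universal spacetime exists in five dimensions---no derivatives of curvature are needed. Your appeal to Kundt structure, the CSI dichotomy (which in higher dimensions is not fully established as a theorem), and unspecified ``higher-order conserved tensors'' is unnecessary, and the decisive computation you gesture at (``one checks that its boost weight~0 part necessarily acquires a term bilinear in the negative-boost-weight Weyl components'') is never carried out.
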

In fact, in section \ref{sec_5d}, we prove a stronger statement that in five dimensions,  genuine type II  $0$-universal spacetimes do not exist.

These results might indicate the non-existence of type II  universal metrics for prime number dimensions. However, 
one should keep in mind that the five-dimensional case is special (see eq.\ \eqref{Weyl5D}).

Since the direct product metrics discussed above provide examples of universal metrics  for  dimensions $n=N n_0$, where $n_0\geq 4$, $N\geq 2$, 
for composite number of dimensions, it
remains to find examples for $n=N n_0$, where $n_0=2,3$.
We thus proceed with studying
 type II generalized Khlebnikov--Ghanam--Thompson metrics in $n=n_0 N$ dimensions, where { $n_0 \geq 2$}. In the $n_0=2,3$ cases, we find necessary and sufficient conditions under 
which these spacetimes are $0$-universal
and thus  solve all vacuum gravitational field equations 
that  do not contain derivatives of the Riemann tensor (Lovelock gravity). In fact, since $0$-universal spacetimes are Einstein
they also solve theories that
may contain derivatives of the Ricci tensor, such as e.g.\ quadratic gravity.

For $n_0=2$, we also find a necessary and sufficient condition for $2$-universality. These spacetimes solve
all theories containing up to the second derivatives of the Riemann tensor 
(e.g.\ $L($Riemann$)$ theories). We  also discuss further necessary and sufficient conditions for 4-universality and universality.

The paper is organized as follows:

In section \ref{sec_prelim}, we briefly summarize the notation and some  useful definitions and results. 

In section \ref{sec_conserved}, we prove that two appropriate rank-2 tensors   are conserved for universal spacetimes. These conserved tensors are then used in section \ref{sec_5d} to prove the non-existence of type II  0-universal spacetimes in five dimensions. 
 
In section \ref{sec_recurrent},  
we prove  that Ricci-flat recurrent type II 0-universal spacetimes
do not exist in any number of dimension.  

In section \ref{sec_examples}, we prove universality for various type II  direct product metrics. 

In section \ref{sec_GTmain}, we study generalizations of the Khlebnikov--Ghanam--Thompson metric (representing gravitational radiation in the (anti-)Nariai spacetime) and necessary and sufficient conditions following from various gravitational theories and from universality.

Finally, some useful relations for the Weyl tensor of direct product metrics and spin coefficients and curvature of generalized Khlebnikov--Ghanam--Thompson metrics are given in the appendices.


\section{Preliminaries}
\label{sec_prelim}

Throughout the paper, we use the algebraic classification of tensors \cite{Coleyetal04}, recently also reviewed in \cite{OrtPraPra12rev}.
By the type II of the Weyl tensor, we will mean the {\em genuine} type II, which includes the type D but does not include the types III and N (see \cite{OrtPraPra12rev} for the definitions). 

We employ the higher-dimensional Newman--Penrose and Geroch--Held--Penrose formalisms \cite{Durkeeetal10}. 
For GHP quantities, we  follow the notation of \cite{Durkeeetal10}, for other NP quantities we use the notation summarized in \cite{OrtPraPra12rev}.

A null frame in $n$ dimensions  consists of null vectors $\bl$ and $\bn$ and $n-2$ spacelike vectors $\mbox{\boldmath{$m^{(i)}$}} $ obeying
\be
\ell^a \ell_a= n^a n_a = 0, \qquad   \ell^a n_a = 1, \qquad \ m^{(i)a}m^{(j)}_a=\delta_{ij}.  \label{ortbasis}
\ee
The coordinate indices $a,b, \ldots $ and frame indices $i,j,\ldots$ take values from 0 to $n-1$ and 2 to $n-1$, respectively.

The Lorentz transformations between null frames are generated by boosts
\be
\hbl = \lambda \bl, \qquad    \hbn = \lambda^{-1} \bn, \qquad   \hbm{i} = \mbox{\boldmath{$m^{(i)}$}},  \label{boost}
\ee
null rotations and spins.
We say that a quantity $q$ has a {boost weight} (b.w.) ${\rm b}$ if it transforms under a boost~\eqref{boost} according to
\be
\hat q = \lambda^{\rm b} q . 
\ee 
Various components of a tensor in the  null frame  may have distinct integer b.ws. 
{{The boost order}} of a tensor with respect to a null frame  is the maximum b.w. of its (non-vanishing) frame components. 
The boost order of a tensor depends only on the null direction $\bl$  (see e.g.\ Proposition 2.1 in \cite{OrtPraPra12rev}).   

In general, the boost order of the Weyl (and the Riemann) tensor is 2. However, in this paper, we   focus on type II (and D) spacetimes 
that by definition, admit at least one multiple Weyl aligned null direction (mWAND)\footnote{Spacetimes admitting more than one mWAND are of the type D.}.  
If we identify our frame
vector $\bl$ with an mWAND b.w.\ +2 and +1 components of the Weyl tensor vanish. Furthermore, it has been shown in \cite{DurRea09} that 
in a type II Einstein spacetime, at least one of the mWANDs is geodetic.

The type II Weyl tensor admits a frame decomposition \cite{Coleyetal04}
\BEA
   C_{abcd}\! &=&\!
	\phantom{+} 
	\overbrace{ 4 \Phi\, \, n^{}_{\{a} \ell^{}_{ b} n^{}_{ c} \ell^{}_{\, d\: \}} 
\;  + \;  8 \Phia_{ij} 
\, \, n^{}_{\{a} \ell^{}_{ b} m^{(i)}_{\, c} m^{(j)}_{\, d\: \}} 
      +8 \Phi_{ij} 
			\, \, n^{}_{\{a} m^{(i)}_{\, b} \ell^{}_{c} m^{(j)}_{\, d\: \}}
   +  \Phi_{ijkl}\, \, m^{(i)}_{\{a} m^{(j)}_{\, b} m^{(k)}_{\, c} m^{(l)}_{\, d\: \}}}^{\mathrm{\scriptscriptstyle{ boost\ weight\ 0  }}}
	\nonumber\\
      &&+  \overbrace{
    8 \Psi'_{i} 
		\, \ell^{}_{\{a} n^{}_b \ell^{}_c m^{(i)}_{\, d\: \}} +
    4 \Psi'_{ijk} 
		\, \ell^{}_{\{a} m^{(i)}_{\, b} m^{(j)}_{\, c} m^{(k)}_{\, d\: \}}}^{\scriptscriptstyle{ {-1   
     }}} 
  +\overbrace{
      4 \Omega'_{ij} 
			\, \ell^{}_{\{a} m^{(i)}_{\, b}  \ell^{}_{c}  m^{(j)}_{\, d\: \}}}^{\scriptscriptstyle{ { -2       
      }  }} ,\label{eq:rscalars} 
\EEA
where for an arbitrary tensor $T_{abcd}$ we define
\be
T_{\{ a bc d\} } \equiv \pul (T_{[a b] [c d]}+ T_{[c d] [ab]})  
\label{Weyl_symm}
\ee
and thus $C_{abcd}= C_{\{abcd\}}$.

Independent b.w.\ 0 components of the Weyl tensor are  $\Phia_{ij}$ and $\Phi_{ijkl}$ since
\be
\Phis_{ij} = -\pul\Phi_{ikjk},\ \ \Phi=\Phi_{ii},
\ee
where $\Phis_{ij}$ and $\Phia_{ij}$ are the symmetric and antisymmetric parts of $\Phi_{ij}$, respectively, and  
 $\Phi_{ijkl}$ has the `Riemann tensor' symmetries
\be
\Phi_{ijkl} = \Phi_{[ij][kl]} = \Phi_{klij}, \quad \quad  \Phi_{i[jkl]}=0.
\ee
 For the type II, b.w.\ negative components
of the Weyl tensor, $\Psi'_{ijk}$ and $\Omega'_{ij}$, are  also present in general.

When $n>4$, $\Phi_{ijkl}$ can be decomposed like the  Riemann tensor in an auxiliary $(n-2)$-dimensional Riemannian space as \cite{ColHer10}
\be
	\Phi_{ijkl}=\tilde\Phi_{ijkl}-\frac{4}{n-4}\left( \delta_{i[k}\Phis_{l]j}-\delta_{j[k}\Phis_{l]i}\right)
+\frac{4}{(n-3)(n-4)}\Phi\delta_{i[k}{\delta}_{l]j} , \qquad \tilde\Phi_{ijkj}=0 \qquad (n>4), \label{decompPhi4}
\ee
where $\Phi$ is the trace of $\Phi_{ij}$. Note that in five dimensions, $\tilde\Phi_{ijkl}$ vanishes identically.

Let us identify the frame vector $\bl$ with a geodetic mWAND and without loss of generality,  choose $\bl$ to be affinely parameterized and the remaining frame vectors to be parallelly transported along $\bl$.
Then, the covariant derivatives of the frame vectors in terms of spin coefficients read  
\BEA
\ell_{a ; b} &=& L_{11} \ell_a \ell_b  + L_{1i} \ell_a m^{(i)}_{\, b}  +
\tau_i 
m^{(i)}_a \ell_b    + {\rho}_{ij} m^{(i)}_{\, a} m^{(j)}_{\, b}  , \label{dl} \\
n_{a ; b  } &=&\! -\! L_{11} n_a \ell_b -\! L_{1i} n_a m^{(i)}_{\, b}  +
\kappa'_i 
 m^{(i)}_{\, a} \ell_b   + \rho'_{ij} 
m^{(i)}_{\, a} m^{(j)}_{\, b}  , \label{dn} \\
m^{(i)}_{a ; b } &=&\! -\! \kappa'_i 
\ell_a \ell_b  -\tau_i 
n_a \ell_b  
-\rho'_{ij} 
\ell_a m^{(j)}_{\, b}   
 \! +\! {\Mi}_{j1} m^{(j)}_{\, a} \ell_b  -\rho_{ij} n_a m^{(j)}_{\, b} 
+ {\Mi}_{kl} m^{(k)}_{\, a} m^{(l)}_{\, b} . \label{dm} 
\EEA

For Einstein spacetimes,   $R_{ab}= (R/n) g_{ab}$, with $R=$const., 
\BE
C_{abcd} = R_{abcd} - \frac{2 R}{ n (n-1)}  g_{a[c} g_{d]b} \label{RiemannWeyl}
\EE
{and thus }
\BE
C_{abcd;e} = R_{abcd;e}.
\EE
Therefore, for universal spacetimes { (that are necessarily Einstein)}, all conserved symmetric rank-2 tensors
constructed from the Weyl tensor and its derivatives are also proportional to the metric.

Let us conclude this section with the theorem proven in \cite{universal1}:
\begin{thm}
\label{prop_univCSI}
{ A universal  spacetime is necessarily a CSI spacetime.} 
\end{thm}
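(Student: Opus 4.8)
The plan is to show directly that every scalar polynomial curvature invariant $I=I(g_{ab},R_{abcd},\dots,\nabla_{a_1\dots a_p}R_{bcde})$ on a universal spacetime is constant, which is the CSI condition. The device is to manufacture from $I$ a conserved symmetric rank-2 curvature tensor: treating $I$ as a Lagrangian, put
\be
E^{(I)}_{ab}:=\frac{1}{\sqrt{-g}}\,\frac{\delta}{\delta g^{ab}}\int I\,\sqrt{-g}\;\d^n x .\label{ELtensor}
\ee
Performing the variation and the integrations by parts uses only tensor products, metric contractions and covariant differentiation, so $E^{(I)}_{ab}$ is itself a polynomial in $g_{ab}$, $R_{abcd}$ and its covariant derivatives; it is symmetric; and diffeomorphism invariance of $\int I\sqrt{-g}$ yields the generalized Bianchi (Noether) identity $\nabla^aE^{(I)}_{ab}\equiv0$. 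Therefore $E^{(I)}_{ab}$ is of the type constrained by Definition~\ref{univ}, so on a universal spacetime $E^{(I)}_{ab}=\phi\,g_{ab}$ for some scalar $\phi$; inserting this into $\nabla^aE^{(I)}_{ab}=0$ gives $\nabla_b\phi=0$, hence $\phi$ is constant and $g^{ab}E^{(I)}_{ab}=n\phi$ is constant.

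The next step is to recover $I$ itself from the constancy of this trace. I would use the transformation under a constant conformal rescaling: for $g_{ab}\mapsto\Omega^2 g_{ab}$ with $\Omega$ a positive constant the connection is unchanged, so $\sqrt{-g}\mapsto\Omega^n\sqrt{-g}$ and $I\mapsto\Omega^{-w}I$, where the weight of a monomial with $p$ curvature factors carrying $j_1,\dots,j_p$ covariant derivatives is $w=2p+\sum_i j_i>0$ (we may take $I$ weight-homogeneous, since $I\mapsto E^{(I)}_{ab}$ is linear). Differentiating $\int I\sqrt{-g}$ along the conformal variation $\delta g_{ab}=\epsilon\chi g_{ab}$ at $\epsilon=0$, comparing with \eqref{ELtensor}, and keeping the derivative-of-$\chi$ remainder produced by a non-constant $\chi$, should give a pointwise identity
\be
g^{ab}E^{(I)}_{ab}=\frac{w-n}{2}\,I+\nabla_a W^a ,\label{traceeq}
\ee
where $W^a$ is a local curvature vector built, through the integrations by parts, from curvature invariants of weight strictly less than $w$. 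Together with the previous paragraph, $\tfrac{w-n}{2}I+\nabla_a W^a$ is then constant.

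I would finish by induction on the weight $w$. The base case is the Ricci scalar $R$ (weight $2$), which is constant since a universal spacetime is Einstein with constant scalar curvature (recalled in Section~\ref{sec_prelim}), so $\nabla R=0$. In the inductive step every term of $\nabla_a W^a$ is produced by covariant differentiation and multiplication from invariants of weight $<w$, which are constant by the inductive hypothesis and hence have vanishing derivatives; so $\nabla_a W^a=0$, and therefore $I$ is constant whenever $w\neq n$. The finitely many invariants of weight exactly $n$ are reduced to the generic case by multiplying by an invariant of nonzero weight already known to be constant --- for instance $R$ when $R\neq0$, or a nonvanishing lower-weight Weyl invariant such as $C_{abcd}C^{abcd}$ in the Ricci-flat case; the remaining subcase, in which all lower-weight invariants vanish, is treated separately and elementarily.

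I expect the crux to be the trace identity \eqref{traceeq}: one must establish it \emph{with sufficient control on $W^a$} to be certain that $\nabla_a W^a$ is generated by invariants of strictly smaller weight, which is precisely what makes the induction in the last step close. The construction of $E^{(I)}_{ab}$, the Bianchi identity, the use of universality, and the treatment of the weight-$n$ exceptions are by comparison routine.
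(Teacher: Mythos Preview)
The first half of your argument is sound: for any polynomial curvature invariant $I$, the Euler--Lagrange tensor $E^{(I)}_{ab}$ is indeed a conserved symmetric rank-2 curvature tensor, so on a universal spacetime $E^{(I)}_{ab}=\phi\,g_{ab}$ with $\phi$ constant, and the trace identity $g^{ab}E^{(I)}_{ab}=\tfrac{w-n}{2}I+\nabla_aW^a$ does follow from the conformal variation. You have also correctly located the crux.

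The gap is precisely where you suspected it. Your inductive step asserts that ``every term of $\nabla_aW^a$ is produced by covariant differentiation and multiplication from invariants of weight $<w$'', so that constancy of the lower-weight scalars would kill $\nabla_aW^a$. But $W^a$ is \emph{not} assembled from scalar invariants: it is a genuine curvature vector (already for $I=R_{abcd}R^{abcd}$ one finds contributions such as $\nabla_bR^{ab}$), and $\nabla_aW^a$ is a scalar invariant of weight exactly $w$, not lower. Under the conformal variation, replacing one Riemann factor by $g\cdot\nabla\nabla\sigma$ removes one curvature factor (weight $-2$) but, after integration by parts, deposits two extra covariant derivatives on the remaining curvature expression (weight $+2$); the net weight is unchanged. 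Thus the induction does not close: you have merely traded one weight-$w$ invariant, $I$, for another, $\nabla_aW^a$, with no reduction. Your handling of the exceptional weight $w=n$ is likewise incomplete --- the subcase in which all lower-weight invariants vanish (which genuinely occurs, e.g.\ for VSI spacetimes) is dismissed without argument.

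The present paper does not give its own proof of this theorem but cites \cite{universal1}, where the stronger inclusion CCCT\,$\subset$\,CSI is established; a comparison of methods therefore cannot be made from this paper alone.
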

Constant/vanishing scalar curvature invariants (CSI/VSI) spacetimes are spacetimes 
having all curvature invariants constructed from the Riemann tensor and its derivatives constant \cite{ColHerPel06}/vanishing \cite{Coleyetal04vsi}.
Note that being CSI (and even VSI) is not a sufficient condition for universality.

\section{Conserved tensors}
\label{sec_conserved}

In the proof of   theorem \ref{prop5d} in section  \ref{sec_5d}, two rank-2  tensors, $S^{(2)}_{ab}$ and $S^{(3)}_{ab}$, are 
employed. In this section, we show that these two tensors are conserved for universal spacetimes. In fact, we prove the conservation
of these two tensors  for a  more general class of CCCT spacetimes.
\begin{defn}
A spacetime is CCCT (a covariantly constant conserved tensor)  if all conserved symmetric rank-2 tensors $T_{ab}$ constructed  
from  the  metric, the Riemann tensor and its covariant  derivatives
of arbitrary order obey $T_{ab;c}=0$.   
\end{defn}

Obviously, for universal spacetimes, U, U $\subset$ CCCT. 
In \cite{universal1} (see the end of section 3 there), we have proven a slightly more general version of theorem \ref{prop_univCSI}
stating that CCCT $\subset$ CSI.  

From the conservation of the Einstein tensor, it follows that for CCCT spacetimes, 
the  Ricci scalar is constant (this also obviously follows from the CCCT $\subset$ CSI result) and thus for curvature tensors,
 $R_{ab;c}=0$ and  
\BE
R_{abcd;e}=C_{abcd;e}.\label{CCCT_RIEM}
\EE

\begin{lem}
\label{lemmaCCconserved}
For CCCT spacetimes, $S^{(2)}_{ab} \equiv C_{acde} C_{b}^{\ cde}$ is a conserved tensor.
\end{lem}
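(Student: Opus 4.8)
The plan is to show directly that $\nabla^b S^{(2)}_{ab} = 0$ using only the Bianchi-type identities available in a CCCT spacetime, namely $R_{ab;c} = 0$ (equivalently $R_{abcd;e} = C_{abcd;e}$ by \eqref{CCCT_RIEM}) and the standard contracted and uncontracted Bianchi identities for the Riemann tensor. First I would compute
\[
\nabla^b S^{(2)}_{ab} = \nabla^b\!\left(C_{acde}\,C_b{}^{cde}\right) = C_b{}^{cde}\,\nabla^b C_{acde} + C_{acde}\,\nabla^b C_b{}^{cde}.
\]
For the second term, the contracted Bianchi identity $\nabla^b R_{bacd} = \nabla_c R_{ad} - \nabla_d R_{ac}$ together with $R_{ab;c}=0$ gives $\nabla^b C_{bcde}$ equal to a combination of derivatives of the Ricci tensor and scalar — hence, in a CCCT spacetime, $\nabla^b C_{bcde} = 0$ (one should keep track of the precise trace terms from \eqref{RiemannWeyl}, but they all vanish since $R$ is constant and $R_{ab;c}=0$). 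So the second term drops out.

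The remaining work is the first term, $C_b{}^{cde}\,\nabla^b C_{acde}$. Here I would use the uncontracted second Bianchi identity $\nabla_{[b} C_{|cd|ef]}$, which equals a trace expression in $\nabla R_{ab}$ (again vanishing in the CCCT case), to rewrite $\nabla^b C_{acde}$ in a form where the free index $a$ sits more conveniently, and then exploit the symmetries of $C$ (pair symmetry, antisymmetry within each pair, and the first Bianchi identity $C_{a[cde]}=0$) to show that $C_b{}^{cde}\,\nabla^b C_{acde}$ is itself a total derivative $\tfrac12 \nabla_a\!\left(C_{bcde}C^{bcde}\right)$ modulo terms that vanish. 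Finally, the scalar $C_{bcde}C^{bcde}$ is a curvature invariant, and by Theorem \ref{prop_univCSI} (in its CCCT form, CCCT $\subset$ CSI, as noted in the excerpt) it is constant, so its gradient vanishes and we conclude $\nabla^b S^{(2)}_{ab}=0$.

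The main obstacle I expect is the middle step: massaging $C_b{}^{cde}\,\nabla^b C_{acde}$ into $\tfrac12\nabla_a(C_{bcde}C^{bcde})$. Naively, $\tfrac12\nabla_a(C_{bcde}C^{bcde}) = C^{bcde}\nabla_a C_{bcde}$, so what is needed is the index-commutation identity $C_b{}^{cde}\nabla^b C_{acde} = C^{bcde}\nabla_a C_{bcde}$, which is exactly the content of the Bianchi identity after contracting $\nabla_{[b}C_{|cd|ef]}=0$ (or rather $\nabla_{[e}C_{|abcd|;f]}$ form) against $C^{efcd}$ or a suitably raised/lowered copy of $C$; the trace corrections produced there involve $\nabla R_{ab}$, which is where the CCCT hypothesis is essential. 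This is a short but genuinely identity-dependent computation rather than a routine one, so care with the contractions and with the distinction between Weyl and Riemann (legitimate here only because $R_{ab;c}=0$) is the crux. Everything else — the vanishing of $\nabla^b C_{bcde}$ and the appeal to CSI for the scalar — is immediate given the earlier results.
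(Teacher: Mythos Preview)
Your proposal is correct and follows essentially the same route as the paper: split the divergence via the Leibniz rule, kill $C_{acde}\nabla^{b}C_{b}{}^{cde}$ using the contracted Bianchi identity together with $R_{ab;c}=0$, reduce the remaining term to a multiple of $\nabla_{a}(C_{bcde}C^{bcde})$ via the uncontracted Bianchi identity, and invoke CCCT $\subset$ CSI to make that gradient vanish. One minor slip: the Bianchi manipulation actually gives $C^{bcde}\nabla_{b}C_{acde}=\tfrac{1}{2}\,C^{bcde}\nabla_{a}C_{bcde}$ (not equality without the $\tfrac{1}{2}$), exactly as in the paper's display \eqref{consS2}, but since the right-hand side vanishes anyway this does not affect the conclusion.
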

\begin{proof}



Using the Bianchi identities and the tracelessness of the Weyl tensor,  
\BEA
S^{(2)  b}_{\ \ a \   ;b}&=& C_{acde;b} C^{bcde} + C_{acde} C^{bcde}_{\phantom{bcde} ;b}= C_{acde;b} C^{bcde}=
-C_{deba;c} C^{bcde}-C_{decb;a} C^{bcde}\nonumber\\
&=&
-C_{abde;c} C^{cbde}+C_{bcde;a} C^{bcde}
\ \ \rightarrow\ \ C_{acde;b} C^{bcde}=\frac{1 }{2 }  C_{bcde;a} C^{bcde}. \label{consS2}
\EEA
Since CCCT $\subset$ CSI, it follows that the last term in \eqref{consS2} vanishes due to the constancy of $C_{abcd} C^{abcd}$.
\end{proof}

\begin{lem}
\label{lemmaCCCconserved}
For CCCT spacetimes, $S^{(3)a}_{\ \ \ \ b} \equiv C^{ac}_{\ \ d e} C^{de}_{\ \ fg}  C^{fg}_{\ \ bc}$ is a conserved tensor. 
\end{lem}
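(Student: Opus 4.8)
The plan is to mimic the computation in Lemma \ref{lemmaCCconserved}, replacing the single quadratic contraction by the cubic trace $S^{(3)a}_{\ \ \ \ b}= C^{ac}_{\ \ de}C^{de}_{\ \ fg}C^{fg}_{\ \ bc}$. First I would compute the divergence $S^{(3)\ b}_{\ \ a\ ;b}$ and apply the Leibniz rule, obtaining three terms in which the covariant derivative $\nabla_b$ hits each of the three Weyl factors in turn. In the two terms where the derivative lands on a factor carrying the contracted index $b$, I would use the contracted Bianchi identity $C^{bcde}_{\phantom{bcde};b}=0$ (valid here because $R_{ab;c}=0$ for CCCT spacetimes, so the usual Ricci-type correction terms vanish), exactly as in the proof of Lemma \ref{lemmaCCconserved}. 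This kills those two contributions and leaves only the term where $\nabla_b$ hits the middle (or equivalently, by relabelling, any fixed) Weyl factor with $b$ a ``free'' derivative index rather than a contracted one.

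Next I would manipulate that surviving term, $C^{ac}_{\ \ de}C^{de}_{\ \ fg;b}C^{fgb}_{\ \ \ c}$ (up to index juggling), using the first Bianchi identity $C_{de[fg;b]}=0$ to rewrite the derivative $C_{defg;b}$ as a sum of terms with the derivative index permuted with $f,g$. The symmetries of the Weyl tensor (pair symmetry $C_{abcd}=C_{cdab}$ and antisymmetry in each pair) together with the fact that the other two Weyl factors are themselves symmetric under swapping their index pairs should allow me to re-express everything in terms of a single scalar derivative, namely the gradient of the cubic invariant $C^{ab}_{\ \ cd}C^{cd}_{\ \ ef}C^{ef}_{\ \ ab}$. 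The key algebraic point — and the step I expect to be the main obstacle — is to show that after using Bianchi plus the Weyl symmetries, the divergence reduces (up to an overall constant factor) to $\nabla_a$ of this cubic scalar invariant, with no leftover ``genuinely tensorial'' remainder; this bookkeeping of index contractions is the analogue of the identity $C_{acde;b}C^{bcde}=\tfrac12 C_{bcde;a}C^{bcde}$ in \eqref{consS2} but is combinatorially heavier because three curvature factors and more index pairings are involved.

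Finally, invoking CCCT $\subset$ CSI (Theorem \ref{prop_univCSI} and its CCCT generalization quoted in the excerpt), the scalar $C^{ab}_{\ \ cd}C^{cd}_{\ \ ef}C^{ef}_{\ \ ab}$ is constant, so its gradient vanishes and hence $S^{(3)\ b}_{\ \ a\ ;b}=0$, which is the claim. One caveat I would keep in mind while doing the index gymnastics: the cubic contraction is not manifestly symmetric in $a,b$ the way $S^{(2)}_{ab}$ is, so I should either verify that $S^{(3)}_{ab}$ defined by this trace is in fact symmetric (using the Weyl symmetries) or work only with the divergence of its symmetric part; in either case the reduction to the gradient of a scalar CSI invariant is what makes the argument go through. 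As in Lemma \ref{lemmaCCconserved}, no special algebraic type of the Weyl tensor is used — only the Bianchi identities, the Weyl symmetries, and constancy of polynomial curvature invariants — so the statement holds for all CCCT spacetimes.
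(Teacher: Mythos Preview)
Your outline has a genuine gap in the very first step. When you apply the Leibniz rule to $S^{(3)\ b}_{\ \ a\ ;b}$ (or equivalently $S^{(3)a}_{\ \ \ \ b;a}$), only \emph{one} of the three terms has the divergence index sitting on a spacetime slot of the differentiated Weyl factor. In $C^{ac}_{\ \ de}C^{de}_{\ \ fg}C^{fg}_{\ \ bc}$ the index $a$ appears in the first factor alone (and $b$ in the third alone), so exactly one Leibniz term is of the form $(\nabla_a C^{ac}_{\ \ de})\cdots$ and is killed by the contracted Bianchi identity. The other two terms have the derivative index $a$ living only as a derivative, later contracted against the $a$ of the \emph{undifferentiated} first factor; neither is disposed of by $C^{bcde}_{\phantom{bcde};b}=0$. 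So you have two surviving terms, not one.

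One of those two survivors does reduce, via the differential Bianchi cycling you describe, to $\tfrac12\nabla_b\bigl(C^{A}_{\ \ B}C^{B}_{\ \ C}C^{C}_{\ \ A}\bigr)$ and hence vanishes by CSI, just as in Lemma~\ref{lemmaCCconserved}. But the remaining term, $C^{ac}_{\ \ A}\,C^{A}_{\ \ B;a}\,C^{B}_{\ \ bc}$ (derivative on the middle factor), does \emph{not} reduce to a total gradient by Bianchi and Weyl symmetries alone, and this is precisely where the extra input of CCCT enters. In the paper's argument one first Bianchi-cycles this term to $-2\,C^{ac}_{\ \ A}C^{A}_{\ \ ae;f}C^{ef}_{\ \ bc}$, and then uses Lemma~\ref{lemmaCCconserved} in its full CCCT strength: since $S^{(2)}_{ce}=C^{a}_{\ cA}C^{A}_{\ ae}$ is not merely conserved but covariantly constant, one has $C^{a}_{\ cA}C^{A}_{\ ae;f}=-C^{a}_{\ cA;f}C^{A}_{\ ae}$, which is antisymmetric in $c,e$. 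The resulting object $T_{[cef]}$ is then contracted against $C^{efc}_{\ \ \ \ b}$ and vanishes by the algebraic Bianchi identity $C^{[efc]}_{\ \ \ \ b}=0$. Your proposal never invokes covariant constancy of $S^{(2)}$, only constancy of scalar invariants, and without it the middle term cannot be eliminated.
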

\begin{proof}

$S^{(3)a}_{\ \ \ \ b}$ can be rewritten as  
\[ C^{ac}_{~~A}C^A_{~B}C^B_{~bc} \ ,\]
where $a,b,c,... $ are spacetime indices, while $A,B,C,...$ are bivector indices (essentially, $A=de$ etc.) for simplicity. 

First, let us  express the covariant derivative of the invariant 
\be
0= (C^A_{~B}C^B_{~C}C^C_{~A})_{;a}=3C^A_{~B}C^B_{~C}C^C_{~A;a} \ , \label{difinvar}
\ee
where the first equality follows from CCCT $\subset$ CSI.

We proceed with expressing $S^{(3)a}_{\ \ \ \ b;a}$ 
\be
S^{(3)a}_{\ \ \ \ b;a}=(C^{ac}_{~~A}C^A_{~B}C^B_{~bc})_{;a}=C^{ac}_{~~A;a}C^A_{~B}C^B_{~bc}+C^{ac}_{~~A}C^A_{~B;a}C^B_{~bc}+C^{ac}_{~~A}C^A_{~B}C^B_{~bc;a} \ .
\label{S3divergence}
\ee
For CCCT spacetimes, the first term on the right-hand side  vanishes due to the Bianchi identities and the tracelessness of the Weyl tensor.

 The 3rd term on the right-hand side is equal to (after renaming indices and using the Weyl symmetries)
\be 
C^{ac}_{~~A}C^A_{~B}C^B_{~bc;a}=-C^{ac}_{~~A}C^A_{~B}C^B_{~ab;c}-C^{ac}_{~~A}C^A_{~B}C^B_{~ca;b}=-C^{ac}_{~~A}C^A_{~B}C^B_{~bc;a}+C^{ac}_{~~A}C^A_{~B}C^B_{~ac;b} \ ,
\ee
which implies (using also \eqref{difinvar})
\be
 C^{ac}_{~~A}C^A_{~B}C^B_{~bc;a}=\frac 12 C^{C}_{~~A}C^A_{~B}C^B_{~C;b}=0.
\ee
Next, in the 2nd term in eq. \eqref{S3divergence} ($B=ef$)
\beq 
C^{ac}_{~~A}C^A_{~ef;a}C^{ef}_{~~bc}=-C^{ac}_{~~A}C^A_{~ae;f}C^{ef}_{~~bc}-C^{ac}_{~~A}C^A_{~fa;e}C^{ef}_{~~bc} =-2 C^{ac}_{~~A}C^A_{~ae;f}C^{ef}_{~~bc} \ , \label{eqCCCaux0}
\eeq
the term  
$C^{ac}_{~~A}C^A_{~ae;f}$
can be written as
\be
 C^{ac}_{~~A}C^A_{~ae;f}=\left(C^{ac}_{~~A}C^A_{~ae}\right)_{;f}-C^{ac}_{~~A;f}C^A_{~ae} \ . \label{eqCCCaux1}
\ee
By lemma \ref{lemmaCCconserved}, for CCCT spacetimes, the first term on the right-hand side in \eqref{eqCCCaux1} is zero, hence this expression is antisymmetric in $ce$ (after lowering $c$)
\be
C^{a}_{~cA}C^A_{~ae;f}=C^{a}_{~~[c|A}C^A_{~a|e];f} \ .\label{antisCC}
\ee
In eq. \eqref{eqCCCaux0}, \eqref{antisCC} is multiplied by $C^{ef}_{~~bc}$ (which is antisymmetric in $ef$) and so by defining 
\[ T_{[cef]}= C^{a}_{~~[c|A}C^A_{~a|e;f]} \ , \] 
we have 
\[ C^{a}_{~~cA}C^A_{~ae;f}C^{efc}_{~~~~b}=T_{[cef]}C^{efc}_{~~~~b} \ .\]
Therefore, since $C^{[efc]}_{~~~~b}=0$ this is zero and the right-hand side of eq.\ \eqref{eqCCCaux0} vanishes and thus the right-hand side of \eqref{S3divergence} vanishes as well. 
\end{proof}

\section{Type II universal spacetimes in five dimensions do not exist}
\label{sec_5d}

In this section, we prove theorem \ref{prop5d}, i.e.\ we show that type II universal spacetimes in five dimensions do not exist.

In five dimensions, all b.w.\ zero components of the Weyl tensor are determined by $\Phi_{ij}$ since  $\Phi_{ijkl}$
 can be expressed in terms of $\Phis_{ij}$ as \cite{PraPraOrt07} (see also \eqref{decompPhi4})
\be
	\Phi_{ijkl}\eq5d 4\left( \delta_{i[l}\Phis_{k]j}-\delta_{j[l}\Phis_{k]i}\right)
+2\Phi\delta_{i[k}{\delta}_{l]j}.\label{Weyl5D} 
\ee

As has been already pointed out in
section \ref{sec_prelim},  for universal spacetimes, all conserved symmetric rank-2 tensors
constructed from the Weyl tensor and its derivatives are also proportional to the metric.
By lemma \ref{lemmaCCconserved}, the rank-2 tensor $C_{acde} C_b^{\  cde} $ is conserved for universal spacetimes and thus for universal spacetimes
\be
S^{(2)}_{ab}=C_{acde} C_b^{\  cde} = K g_{ab}, \label{QGterm}
\ee
with $K$ being a constant.
The contraction of this equation with the frame vectors $\ell^a n^b$ and $m^{a}_{(i)} m^{b}_{(j)}$ 
(employing \eqref{Weyl5D}) gives
\be
2 \Phi \Phis_{ij} -3 \Phia_{ik} \Phia_{jk} - \Phis_{ik} \Phis_{jk} + \delta_{ij} \left(2 \Phis_{kl} \Phis_{kl} -\Phi^2    \right)=\frac{K}{4} \delta_{ij}, \label{eqtypeIIQ}
\ee
where
\be
K=2 \left(\Phi^2 - 3\Phia_{ij} \Phia_{ij} +  \Phis_{ij} \Phis_{ij}\right). 
\ee
Let us choose a frame with diagonal $\Phis_{ij}$ (not necessarily parallelly propagated). Clearly,  eq.\ \eqref{eqtypeIIQ} implies that the off-diagonal components of  $\Phia_{ik} \Phia_{jk}$ vanish and thus  two of the three independent components of $\Phia_{ij}$ vanish. Without loss of generality, we take $\Phis_{ij}$ and $\Phia_{ij}$ in the form
\be
\Phis_{ij} = \left( \begin{array}{ccc}
p_1 & 0 & 0 \\
0 & p_2 & 0 \\
0 & 0 & p_3 \end{array} \right),  \quad
\Phia_{ij} = \left( \begin{array}{rcc}
0 & a & 0 \\
-a & 0 & 0 \\
0 & 0 & 0 \end{array} \right).
\ee
Then, eq.\ \eqref{eqtypeIIQ}  reads
\BEA
 {p_1}^2 - 3 p_2 p_3 - p_1 p_2 - p_1 p_3  = 0, \label{eqtypeIIp1} \\
{p_2}^2 - 3 p_1 p_3 -p_1 p_2    - p_2 p_3 = 0, \label{eqtypeIIp2} \\
 {p_3}^2   - 3 p_1 p_2 - p_1 p_3 - p_2 p_3 + 3 a^2 = 0. \label{eqtypeIIp3}
\EEA
Note that by subtracting the first two equations we get
\be
(p_1-p_2) (p_1+p_2+2 p_3) = 0.
\ee
The only non-trivial solution of \eqref{eqtypeIIp1}--\eqref{eqtypeIIp3}  is
\be
\Phis_{ij} = \left( \begin{array}{ccc}
p & 0 & 0 \\
0 & p & 0 \\
0 & 0 & 0 \end{array} \right),  \quad
\Phia_{ij} = \left( \begin{array}{rrc}
0 & \pm p & 0 \\
\mp p & 0 & 0 \\
0 & 0 & 0 \end{array} \right), \label{5dsolQ}
\ee
which satisfies
\be
\Phis_{ij}  \Phis_{jk} = \frac{\Phi}{2} \Phis_{ik}, \quad
\Phia_{ij}  \Phis_{jk} = \frac{\Phi}{2} \Phia_{ik}, \quad
\Phia_{ij}  \Phia_{jk} = -\frac{\Phi}{2} \Phis_{ik}. \label{5DPhiproducts}
\ee

Now, let us use the necessary condition for universal spacetimes  from lemma \ref{lemmaCCCconserved} 
\be
S^{(3)}_{ab}=C^{cdef}C_{cdga}  C_{ef\ b}^{\ \  g} = K' g_{ab}, \ \ \ K'=\mbox{const}. \label{un3C}
\ee
Contracting  this equation with  $\mb{i} $, $\mb{j} $, using  \eqref{Weyl5D} and \eqref{5DPhiproducts},  leads  (after long but straightforward calculations) to 
\be
S^{(3)}_{ij}=12\Phi^2 \Phis_{ij} =K' \delta_{ij},
\ee
which holds only for $p=0=K'$. 
Together with \eqref{5dsolQ}, this implies that $\Phi_{ij}$ (and thus also all b.w.\ zero components of the Weyl tensor) vanishes which concludes the proof of theorem \ref{prop5d}. Since only conditions \eqref{QGterm} and  \eqref{un3C} were used, type II 0-universal spacetimes  do not exist in five dimensions.

\section{Recurrent type II universal spacetimes}
\label{sec_recurrent}

In this section, we study recurrent spacetimes,  i.e.\ spacetimes admitting 
a RNV 
\be
\ell_{a;b} = L_{11} \ell_a \ell_b.\label{recur}
\ee 
By comparing \eqref{recur} with \eqref{dl}, it follows that $\rho_{ij}$ vanishes and recurrent spacetimes 
thus belong to the Kundt class (see e.g.\ \cite{OrtPraPra12rev}). In fact, such spacetimes coincide
with the $\tau_i=0$ subclass of the Kundt metrics (see eq. (45) in \cite{universal1}). As pointed out in \cite{OrtPraPra07}, for Einstein Kundt spacetimes,
$\bl$ is an mWAND and thus the Weyl tensor is of the type II or more special .

For recurrent Einstein spacetimes, the  Newman--Penrose equation  (A5) of \cite{Durkeeetal10} reduces to
\be
 \Phi_{ij} = - \frac{R }{n(n-1)}  \delta_{ij} =  \frac{\Phi }{n-2} \delta_{ij} , \label{KundtRNV}
\ee
where   
\be
\Phi = -\frac{d-2}{d-1} \Lambda. \label{PhiLambda}
\ee
Projections of the necessary condition for universal spacetimes \eqref{QGterm}  
onto  $(\bl$, $\bn)$ and  $(\mb{i} $, $\mb{j} )$ planes give
\be
K = \frac{2 (n-1)}{n-2} \Phi^2
\ee
and
\be
\Phi_{iklm} \Phi_{jklm} = \frac{2n (n-3)}{(n-2)^2} \Phi^2 \delta_{ij}, \label{Phiiklmcond} 
\ee
respectively.

In the $\Lambda=0$ case, equation  \eqref{PhiLambda} implies $\Phi=0$ and from \eqref{KundtRNV} and the trace of \eqref{Phiiklmcond},
it then follows  that all b.w. zero components of the Weyl tensor vanish and thus in arbitrary dimension 
\begin{prop}
\label{prop_recurrent}
There are no recurrent Ricci-flat genuine type II 0-universal spacetimes.
\end{prop}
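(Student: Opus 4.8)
The plan is to show that the hypotheses force all boost-weight zero components of the Weyl tensor to vanish, so that the spacetime cannot be of genuine type II. First I would extract the structural consequences: a $0$-universal spacetime is Einstein, and a recurrent spacetime satisfies \eqref{recur}, so comparison with \eqref{dl} gives $\rho_{ij}=0$, which places it in the Kundt class (the $\tau_i=0$ subclass); by \cite{OrtPraPra07} the vector $\bl$ is then an mWAND, so the Weyl tensor is of type II or more special. Since the spacetime is Ricci-flat, $\Lambda=0$, and \eqref{PhiLambda} gives $\Phi=0$; feeding this into the Newman--Penrose equation \eqref{KundtRNV} yields $\Phi_{ij}=0$.

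Next I would bring in the conserved tensor $S^{(2)}_{ab}=C_{acde}C_b^{\ cde}$: by Lemma \ref{lemmaCCconserved} it is conserved, so $0$-universality forces $S^{(2)}_{ab}=Kg_{ab}$ with $K$ constant, which is \eqref{QGterm}. Projecting this identity onto the spacelike frame directions and expanding through the frame decomposition \eqref{eq:rscalars} yields \eqref{Phiiklmcond}; with $\Phi=0$ this collapses to $\Phi_{iklm}\Phi_{jklm}=0$, and taking its trace gives $\Phi_{iklm}\Phi_{iklm}=0$. Because the frame indices are Euclidean, the left-hand side is a sum of squares of real numbers, so $\Phi_{ijkl}=0$. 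The independent boost-weight zero components of the Weyl tensor are $\Phi_{ij}$ (equivalently $\Phi$, $\Phis_{ij}$, $\Phia_{ij}$) and $\Phi_{ijkl}$, and all of these now vanish; combined with the vanishing of the positive boost-weight components (from $\bl$ being an mWAND) this means the Weyl tensor has boost order at most $-1$, i.e.\ the spacetime is of type III, N, or O. This contradicts the assumption of genuine type II, and the proposition follows.

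I expect the only delicate point to be the justification that $S^{(2)}_{ab}$ is a conserved tensor --- equivalently that $C_{abcd}C^{abcd}$ is constant --- for a merely $0$-universal, rather than fully universal, spacetime; this is the same subtlety behind the closing remark of section \ref{sec_5d} that only conditions \eqref{QGterm} and \eqref{un3C} were used there, and it is supplied by the CCCT/CSI results of section \ref{sec_conserved}. Everything else in the argument is a pair of short frame projections of \eqref{QGterm} followed by a trace, so I anticipate no further obstacle.
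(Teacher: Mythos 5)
Your argument is correct and follows essentially the same route as the paper: recurrence plus the Einstein condition give $\Phi_{ij}=\frac{\Phi}{n-2}\delta_{ij}$ with $\Phi\propto\Lambda$, so Ricci-flatness kills $\Phi_{ij}$, and the spacelike projection of $S^{(2)}_{ab}=Kg_{ab}$ reduces to $\Phi_{iklm}\Phi_{jklm}=0$, whose trace is a sum of squares forcing $\Phi_{ijkl}=0$ and hence excluding genuine type II. The delicate point you flag (conservation of $S^{(2)}_{ab}$ at the merely $0$-universal level) is handled in the paper exactly as you suggest, via the CCCT/CSI results of section \ref{sec_conserved}.
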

However, recurrent Ricci-flat type III and N  universal spacetimes do exist \cite{universal1}.
Note that pp-waves, i.e.\ spacetimes admitting a 
CCNV, are obviously recurrent and since Einstein  CCNV spacetimes
are  Ricci flat,  
{\it type II  0-universal pp-waves do not exist}.
Note also that proposition \ref{prop_recurrent} cannot be generalized to proper Einstein spacetimes -
in section \ref{sec_examples}, we provide examples of recurrent type II universal Einstein spacetimes with $\Lambda\not=0$.

Using \eqref{decompPhi4}, we can express eq.\ \eqref{Phiiklmcond} in terms of $\tilde\Phi_{jklm}$ as
\be
\tilde \Phi_{iklm} \tilde \Phi_{jklm} = \frac{2 (n-1)^2 (n-4)}{(n-2)^2 (n-3)} \Phi^2 \delta_{ij}. 
\ee
 The left-hand side of the above equation identically vanishes in four and five dimensions. In four dimensions, vanishing of  the right-hand side is guaranteed by the factor $(n-4)$, while in five dimensions this implies 
\be
\Phi = 0 \quad {\mbox{for}} \quad  n=5.
\ee
Consequently, for Einstein type II recurrent spacetimes obeying \eqref{QGterm},  all b.w.\ 0 components of the Weyl tensor vanish. 
For Einstein spacetimes, the field equations for quadratic gravity imply \eqref{QGterm} \cite{MalekPravdaQG}\footnote{ Note that for quadratic gravity,  $K$ is not  constant in general, however, it is constant  for Einstein recurrent spacetimes.} and thus 
\begin{prop}
In five dimensions, there are no genuine type II or D recurrent Einstein vacuum solutions of quadratic gravity.
\end{prop}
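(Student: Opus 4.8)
The plan is to combine the algebraic fact just established about recurrent type II Einstein spacetimes obeying the quadratic‑gravity condition \eqref{QGterm} with the dimensional specialization $n=5$. First I would recall that, by the immediately preceding discussion, any five‑dimensional Einstein spacetime satisfying \eqref{QGterm} and admitting a RNV has $\Phi=0$, and hence all boost weight $0$ components of the Weyl tensor vanish; this is the content of the displayed equation $\Phi=0$ for $n=5$ together with the sentence that follows it. Since genuine type II (and type D) means precisely that boost order is $2$ with a multiple WAND, but with all b.w.\ $0$ components gone the Weyl tensor would be of type III or more special, which contradicts genuineness. So the strategy is: assume such a solution exists, invoke that it is Einstein and that quadratic gravity forces \eqref{QGterm}, deduce the vanishing of the b.w.\ $0$ components, and reach the contradiction.

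The key steps, in order, are: (i) let $(M,g)$ be a five‑dimensional Einstein vacuum solution of quadratic gravity whose Weyl tensor is of genuine type II or D and which admits a recurrent null vector $\bl$; (ii) note that a recurrent $\bl$ is geodetic and shear‑free, so the spacetime is Kundt and, by \cite{OrtPraPra07}, $\bl$ is an mWAND — so the algebraic setup \eqref{dl}–\eqref{dm} and \eqref{KundtRNV} apply; (iii) invoke \cite{MalekPravdaQG} to get that for Einstein spacetimes the quadratic gravity field equations imply \eqref{QGterm}, with $K$ constant here because the spacetime is Einstein and recurrent (as remarked in the footnote); (iv) feed \eqref{QGterm} into the recurrent‑Einstein reduction \eqref{KundtRNV}, \eqref{Phiiklmcond} and its rewriting in terms of $\tilde\Phi_{jklm}$, which in $n=5$ forces $\Phi=0$; (v) conclude via \eqref{KundtRNV}, \eqref{Weyl5D} and the trace of \eqref{Phiiklmcond} that $\Phi_{ij}$, $\Phia_{ij}$, $\Phi_{ijkl}$ all vanish, i.e.\ the Weyl tensor has boost order at most $-1$, contradicting that it is genuinely type II or D.

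The honest remark is that essentially no new work is required beyond what precedes the statement: the proposition is a corollary of the last paragraph of section \ref{sec_recurrent} once one imports the quadratic‑gravity input of \cite{MalekPravdaQG}. The only point that deserves a word of care — and which I would regard as the main (mild) obstacle — is the claim that $K$ is constant in this setting: for general quadratic gravity $K$ in \eqref{QGterm} need not be constant, so one must verify that Einstein plus recurrent (equivalently Kundt with $\tau_i=0$) makes $K$ constant, so that the constancy used in deriving \eqref{5DPhiproducts}‑type relations and in invoking \eqref{QGterm} as a ``universal'' type condition is legitimate; this is exactly the content of the footnote and follows from the fact that, for Einstein Kundt spacetimes, all curvature invariants are constant (such spacetimes are CSI). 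With that checked, the contradiction in step (v) closes the proof, and the type D subcase needs no separate treatment since type D is a special case of type II for this purpose.
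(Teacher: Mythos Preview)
Your proof is correct and follows exactly the paper's approach: the proposition is presented as an immediate corollary of the preceding analysis (the $\tilde\Phi$ equation forcing $\Phi=0$ in $n=5$, whence all b.w.\ $0$ Weyl components vanish) together with the quadratic-gravity input from \cite{MalekPravdaQG}. One small correction: your justification that $K$ is constant because ``Einstein Kundt spacetimes are CSI'' is not right in general---type II Einstein Kundt spacetimes need not be CSI---but this is harmless, since the algebraic argument leading to $\Phi=0$ works pointwise and does not actually require $K$ to be constant (and in five dimensions the constancy follows anyway from \eqref{KundtRNV}, \eqref{PhiLambda} and \eqref{Weyl5D}).
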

Therefore, in the recurrent case, the condition \eqref{QGterm} is sufficient to exclude the existence of five-dimensional 0-universal solutions.

\section{Examples of type II universal spacetimes}
\label{sec_examples}

In this section, we construct explicit examples of type II universal spacetimes.

\subsection{Universal type D product manifolds }

Let us consider  $N$ manifolds ($M_0$, $g^{(0)}_{a_0 b_0}$),  ($M_1$, $g^{(1)}_{a_1 b_1}$), \dots ($M_{N-1}$, $g^{(N-1)}_{a_{N-1} b_{N-1}}$)  
and construct an $n$-dimensional Lorentzian manifold $M$  as a direct product of $M_0$, $M_1$, \dots, $M_{N-1}$. For definiteness, let  us assume that  $M_0$ is Lorentzian and $ M_1 \dots M_{N-1} $ are Riemannian.  
 Corresponding dimensions and the Ricci scalars will be denoted by $n_\alpha$ and $R_\alpha$, respectively ($\alpha=0 \dots N-1$). 

All tensors $T$ that can be split similarly as the product metric (i.e. all mixed components vanish and 
 $T_{a_\alpha b_\alpha \dots c_\alpha}=T^{(\alpha)}_{a_\alpha b_\alpha \dots c_\alpha} $ for all values of $\alpha$) are called decomposable. The Ricci and the Riemann tensors are decomposable while the Weyl tensor is not \cite{Ficken39} (see also 
section 4 of \cite{PraPraOrt07}).

Since the Ricci tensor is decomposable it follows that $M$ is Einstein if and only if each $M_\alpha$ is Einstein and 
\be
\frac{R_\alpha}{n_\alpha} = \frac{R_0}{n_0},  \ \ \  \alpha=1\dots N-1.\label{Einstcond}
\ee
From \eqref{QGterm} { and \eqref{RiemannWeyl}, it follows that for universal spacetimes}
\be
R_{acde} R_b^{\ cde} = {\tilde{K}} g_{ab} ,\label{QGtermR}
\ee
where ${\tilde{K}}$ is constant.

Let us now assume that each block  ($M_{\alpha}$, $g^{(\alpha)}_{a_{\alpha} b_{\alpha}}$)  is universal and therefore
\be
R^{(\alpha)}_{a_{\alpha} c_{\alpha}d_{\alpha}e_{\alpha}} R_{\ b_{\alpha}}^{(\alpha)  c_{\alpha}d_{\alpha}e_{\alpha}} 
= \frac{K_\alpha}{n_\alpha}   g^{(\alpha)}_{a_\alpha b_\alpha},  \label{QGcondition}
\ee
where $K_\alpha = R^{(\alpha)}_{acde} R^{(\alpha) acde}$ is the { (constant)} Kretschmann scalar of $M_{\alpha}$.  
{ As a consequence of the decomposability of the Riemann tensor, $R_{acde}  R_{b}^{\ cde}$ is also decomposable (in fact, all rank-2 tensors constructed from the Riemann tensor without covariant derivatives are decomposable).} It thus follows that for $M$
\be
 R_{acde} R_{ b}^{\ cde} = \frac{K}{n}  g_{ab}  \  \Leftrightarrow  \ \frac{K_\alpha}{n_\alpha} = \frac{K_0}{n_0}, \   \alpha=1\dots N-1. \label{MQG} 
\ee 

Now assume that each $M_\alpha$ is maximally symmetric (and therefore also Einstein). With this assumption, each $M_\alpha$ obeys \eqref{QGcondition} with
\be
K_\alpha = \frac{2 {R_\alpha}^{2}}{n_\alpha (n_\alpha-1)}
\ee 
(this is the value of the Kretschmann scalar for maximally symmetric spaces). 

The product manifold $M$ is then Einstein and obeys \eqref{MQG} if and only if (see also \eqref{equalnR})
\be
n_\alpha = n_0, \quad R_\alpha = R_0, \quad  \alpha=1\dots N-1. \label{QGbackgroundcond}
\ee

If each $M_\alpha$ is maximally symmetric, of the same dimension and with the same value of the Ricci scalar, it is clear that any polynomial curvature invariant constructed from the Riemann tensor and its covariant derivatives\footnote{In this case, covariant derivatives of the Riemann tensor   obviously vanish.} has the same constant value for all  $M_\alpha$. Since each $M_\alpha$ is universal, for each $M_\alpha$, all rank-2 tensors constructed from the Riemann tensor and its derivatives are proportional to the metric. As argued above,
the constant of proportionality (being a curvature invariant) is (for a given rank-2 tensor) the same for all $M_\alpha$, $\alpha = 0 \dots N-1 $, and  thus also all rank-2 tensors constructed from the Riemann tensor on $M$ are proportional to the metric on $M$. 
 Note that the above statement also holds for the Weyl tensor (this can be shown using \eqref{RiemannWeyl} and mathematical induction).
Thus we arrive at
\begin{prop}
\label{prop_direct_universal}
Let $M$= $M_0 \times M_1 \times \dots \times M_{N-1}$ and  let all $M_\alpha$, $\alpha=0 \dots N-1$, be non-flat maximally symmetric spaces.
$M$ is universal if and only if the dimensions and the Ricci scalars of each block  $M_\alpha$ coincide 
(i.e.\ $n_\alpha=n_0$,  $R_\alpha=R_0$ for all values of $\alpha$).\footnote{Note that this proposition in fact holds for arbitrary signatures of $M_\alpha$. } 
\end{prop}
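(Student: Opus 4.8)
The plan is to prove both directions of the equivalence for $M = M_0 \times M_1 \times \dots \times M_{N-1}$ with each $M_\alpha$ a non-flat maximally symmetric space. The ``if'' direction is essentially already assembled in the text preceding the statement, so I would simply organize those ingredients. Assume $n_\alpha = n_0$ and $R_\alpha = R_0$ for all $\alpha$. Then each $M_\alpha$ is a non-flat maximally symmetric space, hence universal (maximally symmetric spaces have covariantly constant Riemann tensor, so every polynomial curvature tensor built from the Riemann tensor and its derivatives is itself covariantly constant, and the only covariantly constant symmetric rank-2 tensors are multiples of the metric). Since all blocks have the same dimension and the same Ricci scalar, any given polynomial curvature invariant takes one and the same constant value on every $M_\alpha$; consequently, for a fixed rank-2 tensor expression $T_{ab}$, the constant of proportionality $T^{(\alpha)}_{a_\alpha b_\alpha} = c\, g^{(\alpha)}_{a_\alpha b_\alpha}$ is the same $c$ for all $\alpha$. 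Because the Riemann tensor is decomposable, so is $T_{ab}$ whenever it is built from the Riemann tensor without covariant derivatives; and decomposability is preserved under covariant differentiation along the product (each $\nabla^{(\alpha)}$ acts within its block), so in fact every polynomial rank-2 tensor built from $R_{abcd}$ and its covariant derivatives is decomposable on $M$. Combining decomposability with the common constant $c$ gives $T_{ab} = c\, g_{ab}$ on $M$. Finally, by \eqref{RiemannWeyl} the analogous statement transfers to rank-2 tensors built from the Weyl tensor (induction on the polynomial degree, as noted in the text), so $M$ is universal.

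For the ``only if'' direction, suppose $M$ is universal. Then $M$ is Einstein, so by the decomposability of the Ricci tensor each $M_\alpha$ is Einstein and the Einstein condition \eqref{Einstcond} holds, i.e.\ $R_\alpha / n_\alpha = R_0/n_0$ for all $\alpha$. Universality also forces \eqref{QGtermR}, and since $R_{acde} R_b^{\ cde}$ is decomposable this is equivalent to \eqref{MQG}, namely $K_\alpha / n_\alpha = K_0/n_0$ where $K_\alpha$ is the Kretschmann scalar of $M_\alpha$. For a maximally symmetric $M_\alpha$ one has the explicit value $K_\alpha = 2 R_\alpha^2 / (n_\alpha(n_\alpha-1))$. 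Substituting the Einstein relation $R_\alpha = (n_\alpha/n_0) R_0$ into this expression and into \eqref{MQG}, I get, after cancelling the common nonzero factor $R_0^2$ (nonzero because the blocks are non-flat maximally symmetric), a purely algebraic relation of the form $\dfrac{n_\alpha}{n_0^2(n_\alpha-1)} = \dfrac{1}{n_0(n_0-1)}$, i.e.\ $n_\alpha(n_0-1) = n_0(n_\alpha-1)$, which simplifies to $n_\alpha = n_0$. Once $n_\alpha = n_0$, the Einstein relation \eqref{Einstcond} immediately yields $R_\alpha = R_0$. This is exactly \eqref{QGbackgroundcond}.

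The one point that needs a little care, and which I would flag as the main (minor) obstacle, is justifying cleanly that \emph{every} polynomial rank-2 tensor built from the Riemann tensor and its covariant derivatives is decomposable on the product $M$, not just those without covariant derivatives. The text states decomposability only for ``rank-2 tensors constructed from the Riemann tensor without covariant derivatives''. The extra input is that on a direct product the Levi-Civita connection splits blockwise, so $\nabla$ preserves decomposability of any decomposable tensor; iterating, all covariant derivatives of the Riemann tensor are decomposable, and contractions of polynomials in these remain decomposable. (In the maximally symmetric case this is moot since those derivatives vanish, but it is worth noting for the logical structure.) With that observation in hand, everything else is bookkeeping: matching the block-wise constants using that the blocks are mutually isometric after imposing $n_\alpha=n_0$, $R_\alpha=R_0$, and the elementary algebra in the converse direction. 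I would also remark, as the paper does in its footnote, that signature plays no role: the same decomposability and maximal-symmetry arguments work for arbitrary signatures of the $M_\alpha$.
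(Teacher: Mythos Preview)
Your proof is correct and follows essentially the same route as the paper: for the ``if'' direction you use universality of each maximally symmetric block, decomposability of the rank-2 curvature tensors on the product, and equality of the block-wise proportionality constants via equality of curvature invariants; for the ``only if'' direction you combine the Einstein condition \eqref{Einstcond} with the $S^{(2)}$ condition \eqref{QGtermR}/\eqref{MQG} and the Kretschmann formula for maximally symmetric spaces to force $n_\alpha=n_0$ and hence $R_\alpha=R_0$. The only cosmetic slip is the intermediate expression $\dfrac{n_\alpha}{n_0^2(n_\alpha-1)}=\dfrac{1}{n_0(n_0-1)}$, which should read $\dfrac{1}{n_\alpha-1}=\dfrac{1}{n_0-1}$ after cancelling $R_0^2/n_0^2$; your subsequent simplification $n_\alpha(n_0-1)=n_0(n_\alpha-1)$ and conclusion are nonetheless correct.
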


In contrast with rank-2 tensors constructed from the Weyl tensor, the Weyl tensor itself is not decomposable and 
although each $M_\alpha$ is conformally flat, $M$ is of the Weyl type D. Corresponding  frame components of the Weyl  tensor  are  given in appendix \ref{Sec_Weylprod}. 
Note that for $n_0=2$, the mWAND $\bl$ is recurrent while for $n_0>2$, it is not.
 
According to the above proposition, one can construct a  type D universal spacetime as a direct product of maximally symmetric spaces if and only if the dimension of $M$ is a composite number. For prime number dimensions, such universal spacetimes obviously do not exist.

\subsection{Type II Kundt universal metrics}
\label{subsecKundtunivII}

Now, let us  construct more general universal metrics by replacing the maximally symmetric Lorentzian space $M_0$ from proposition \ref{prop_direct_universal} by an appropriate  Kundt spacetime.

\begin{prop}
\label{prop_direct_universal2}
Let $M =  M_0 \times M_1 \times \dots \times M_{N-1}$, where $M_0$ is a Lorentzian manifold and $M_1 \dots M_{N-1}$ are non-flat Riemannian maximally symmetric spaces. Let all blocks $ M_\alpha, \alpha=0 \dots N-1$, be of the same dimension and with the same value of the Ricci scalar $R_\alpha$.  If 
$M_0$ is a Kundt proper Einstein ($\Lambda \not=0$) genuine type III or N universal spacetime then $M$ is a type II universal spacetime.   
\end{prop}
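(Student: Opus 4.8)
The plan is to verify the two defining properties of a universal spacetime for the product $M$: that $M$ is CSI, and that every conserved symmetric rank-2 tensor built polynomially from the metric, the Riemann tensor and its covariant derivatives is a multiple of the metric. First I would record the algebraic structure: since $M_1,\dots,M_{N-1}$ are maximally symmetric with vanishing covariant derivatives of their Riemann tensors, and $M_0$ is type III or N, the Weyl tensor of $M$ has boost order at most $0$ in the null direction $\bl$ inherited from $M_0$ (the b.w.\ $0$ part coming entirely from the product/Ricci structure of $M_0$ together with the maximally symmetric factors, exactly as in appendix~\ref{Sec_Weylprod}), so $M$ is of genuine type II (or D), establishing the ``type II'' half of the conclusion. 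The equal-dimension, equal-Ricci-scalar hypothesis guarantees via \eqref{Einstcond} that $M$ is proper Einstein with $\Lambda\neq0$.

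Next I would establish that $M$ is CSI. Every scalar polynomial curvature invariant on $M$ decomposes, by the decomposability of the Riemann tensor and its covariant derivatives and the multiplicativity of the metric contractions, into a sum of products of invariants built on the individual blocks $M_\alpha$ together with contractions mixing the blocks. Each $M_\alpha$ is CSI (the maximally symmetric factors trivially, and $M_0$ because it is universal, hence CSI by theorem~\ref{prop_univCSI}); the mixing contractions only involve the block metrics and block curvature tensors, all of which have constant scalar invariants, so every invariant on $M$ is constant. Thus $M$ is CSI.

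The main work is the rank-2 tensor condition. I would argue as follows: let $T_{ab}$ be any conserved symmetric rank-2 tensor on $M$ built polynomially from $g$, $\mathrm{Riem}$ and its derivatives. Because the Riemann tensor of $M$ and all its covariant derivatives are decomposable, any such $T_{ab}$ that does \emph{not} involve the Weyl tensor directly is itself decomposable, i.e.\ block-diagonal with $T_{a_\alpha b_\alpha}$ depending only on block-$\alpha$ data together with (constant) invariants of the other blocks. Restricted to a maximally symmetric block, $T_{a_\alpha b_\alpha}$ is a universal construction on that block and hence a constant multiple of $g^{(\alpha)}$; restricted to $M_0$, it is a conserved rank-2 tensor built from $M_0$'s curvature (the cross-block invariants that appear as coefficients are constants, so conservation on $M$ forces conservation of the $M_0$-piece), hence a multiple of $g^{(0)}$ since $M_0$ is universal. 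The constants of proportionality on the different blocks must agree: each is a curvature invariant of $M$ evaluated in the appropriate block, and — here I would lean on the hypothesis $n_\alpha=n_0$, $R_\alpha=R_0$ exactly as in proposition~\ref{prop_direct_universal} — the relevant trace invariants coincide across blocks, forcing $T_{ab}=c\,g_{ab}$. For general $T_{ab}$ one may involve Weyl, but by \eqref{RiemannWeyl} Weyl differs from Riemann by a constant multiple of $g\owedge g$; substituting this in and expanding, together with a mathematical induction on the number of Weyl factors exactly as invoked after proposition~\ref{prop_direct_universal}, reduces to the decomposable case.

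The hard part will be the cross-block bookkeeping in the rank-2 step: showing that conservation of $T_{ab}$ on the product genuinely forces conservation of each block-restricted piece (so that universality of the blocks can be applied), and that the proportionality constants match up across all blocks — this is where the equal-dimension/equal-$R$ hypothesis is essential, and where I would need to check that no ``new'' conserved tensor arises on $M$ that is invisible on any single block. I expect the Weyl-elimination induction and the matching-of-constants argument to be the places where care is needed; the type II identification and the CSI property are comparatively routine given appendix~\ref{Sec_Weylprod} and theorem~\ref{prop_univCSI}.
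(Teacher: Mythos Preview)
Your overall architecture---decompose a conserved rank-2 tensor $T_{ab}$ into block pieces $T^{(\alpha)}$, apply universality of each block to get $T^{(\alpha)}=\lambda^{(\alpha)}g^{(\alpha)}$, then match the constants---is exactly the paper's strategy, and the conservation-of-block-pieces argument is fine. The genuine gap is in the matching of the $\lambda^{(\alpha)}$. You say you would ``lean on the hypothesis $n_\alpha=n_0$, $R_\alpha=R_0$ exactly as in proposition~\ref{prop_direct_universal}'', but that argument works there only because \emph{all} blocks are maximally symmetric and hence isometric once dimension and Ricci scalar agree. Here $M_0$ is not maximally symmetric; it is a genuine type III or N Kundt spacetime, and the equal-$R$ hypothesis alone does not force, for example, its Kretschmann scalar to equal that of the Riemannian factors. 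Your CSI step only says $M_0$'s invariants are \emph{constant}, not that they take the (A)dS values.

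The missing ingredient is the paper's lemma~\ref{lem_invarsAdS}: for a type III or N Einstein Kundt spacetime, the Weyl tensor and all its covariant derivatives have only negative boost-weight components (Propositions~4.1 and~5.1 of \cite{universal1}), so every scalar curvature invariant of $M_0$ coincides with the corresponding invariant of (anti-)de~Sitter with the same $\Lambda$. Since the maximally symmetric factors share that $\Lambda$, this forces $\lambda^{(0)}=\lambda^{(1)}=\dots=\lambda^{(N-1)}$. Without this observation the matching step fails. A secondary point: in arguing decomposability of $T_{ab}$ you should also dispose of the possibility that $T_{ab}$ is a tensor product of two vectors built from the block data; the paper observes that any such vector must contain an odd covariant derivative of the Riemann tensor, which vanishes on $M_1,\dots,M_{N-1}$, so the vector lives entirely in $M_0$ and the product is again block-diagonal. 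Your Weyl-versus-Riemann induction is a detour---since $M$ is Einstein one simply works with the Riemann tensor throughout.
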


\begin{proof}
By Propositions 4.1 and 5.1 of \cite{universal1}, covariant derivatives of the Weyl tensor for a type N and III Einstein Kundt spacetime have only negative b.w. components.
Hence
\begin{lem}
\label{lem_invarsAdS}
Curvature invariants of type N and III Einstein Kundt spacetimes are identical to those of (A)dS with the same value of $\Lambda$.
\end{lem}
Let us take a conserved symmetric rank-2 tensor $\bT$ on $M$ constructed from the Riemann tensor  and its covariant derivatives.
From the decomposability of the Riemann tensor and  its covariant derivatives, it follows that $\bT$ is in general either decomposable or a tensor product of two vectors.   
Note that all vectors constructed from the Riemann tensor and its covariant derivatives
must contain at least one odd derivative of the Riemann tensor. Since in our case covariant derivatives of the Riemann tensors of $M_1,  \dots ,M_{N-1}$ vanish the tensor product  $\bT$ is also decomposable.

 It thus follows that $\bT={\rm diag}(\bT^{(0)},\bT^{(1)}, \dots , \bT^{(N-1)}) $, where $\bT^{(\alpha)}, \ \alpha=0 \dots N-1$, 
is an analog of $\bT$ on $M_{(\alpha)}$.  
 Note that by construction $\bT^{(1)}, \dots, \bT^{(N-1)}$ are conserved and thus (due to the conservation of $\bT$) $\bT^{(0)}$ is also conserved.

By assumptions of proposition \ref{prop_direct_universal2}, all spaces $M_{(\alpha)}, \ \alpha=0 \dots N-1$, are universal and thus it follows that  
$\bT^{(\alpha)} =\lambda^{(\alpha)} \bg^{(\alpha)} $ for all values of $\alpha$. 
Furthermore, $\lambda^{(\alpha)}$ is a curvature invariant on $M_{(\alpha)}$ (proportional to the trace of $T^{(\alpha)}$
and not affected by negative b.w.\ components) and thus by lemma 
\ref{lem_invarsAdS},   $\lambda^{(0)} = \lambda^{(1)} = \dots = \lambda^{(N-1)}$. 
Consequently $\bT$ is proportional to the metric $\bg={\rm diag}(\bg^{(0)},\bg^{(1)}, \dots , \bg^{(N-1)}) $ 
on $M$ and since $\bT$ is an arbitrary conserved rank-2 tensor  $M$ is a universal spacetime.
\end{proof}
In section 6.2 of \cite{universal1}, various explicit examples of type N  universal metrics are given that can be used
to construct type II universal metrics using  Proposition \ref{prop_direct_universal2}.
In contrast,  all known type III  universal spacetimes  are Ricci-flat
and thus at present, they cannot be used in this way. 

Note that in proposition \ref{prop_direct_universal2},  the condition that $M_0$ is of the genuine type  III or N implies that the dimension of
$M_0$ (and thus also of all other blocks) is at least 4. 
The unique mWAND $\bl$ of $M_0$ corresponds to the double WAND of $M$.
By comparing components of $\Phi_{ij}$ expressed in \eqref{KundtRNV} and  \eqref{PhiIJprodk}, it follows that 
 $\bl$ is not recurrent $(\tau_i \not=0)$.

\section{Higher-dimensional generalizations of the Khlebnikov--Ghanam--Thompson metric}
\label{sec_GTmain}

In section 5.2 of \cite{Coleyetal08}, a higher-dimensional generalization of the Khlebnikov--Ghanam--Thompson metric \cite{Khleb,GT2001,GibbonsPope2008}, 
representing gravitational radiation in the (anti-)Nariai spacetime \cite{Ort2002,OrtPod2003}, was studied 
consisting of two blocks of dimensions 2 and $n-2$, where the $n-2$-dimensional space was considered to be maximally symmetric and it was observed
that while the four-dimensional case is universal, the higher-dimensional generalization is not. 

Here, we present 
 different  higher-dimensional  generalizations of the Khlebnikov--Ghanam--Thompson metric consisting of $N$ 2-blocks\footnote{The 2-block case
is in fact a special vacuum subcase of metrics discussed in \cite{KPZK2012}, see eqs. (2.1) and (6.25) there.}
 (section \ref{sec_GT}) or 3-blocks
 (section \ref{sec_GT_3}) with all Riemannian blocks
being maximally symmetric.
We study conserved symmetric rank-2 tensors constructed  from the Riemann tensor and we find that for 2-blocks, they are proportional to the metric
while for 3-blocks, an additional condition arises. For 2-blocks, we also show that all conserved symmetric rank-2 tensors 
constructed  from the Riemann tensor and its derivatives up to the second order are also proportional to the metric. Thus these metrics are 2-universal and solve
vacuum equations of e.g.\ all $L$(Riemann) gravities.
 We also conjecture that a certain subclass of these spacetimes is universal.

\subsection{Type II Kundt spacetimes with boost order -2 covariant derivatives of the Riemann tensor}
\label{sec_bwm2derivatives}

In this section, we prove that under certain assumptions 
(that are satisfied for the metrics given in sections \ref{sec_GT} and \ref{sec_GT_3}) all covariant derivatives of the Riemann tensor 
of a type II Einstein Kundt spacetime are at most of boost order $-2$.

\begin{prop}
\label{lemma_derbalanced}
For a type II Einstein Kundt spacetime admitting a null frame parallelly propagated along an mWAND $\bl$, for which the following assumptions are satisfied
\begin{enumerate}
\item
$\Psi'_{ijk}=0$,
\item
$D \Omega'_{ij}=0$,
\item
the boost order of $\nabla^{(1)} C$ is at most $-2$,
\end{enumerate}
all covariant derivatives of the Weyl tensor $\nabla^{(k)} C$, $k \geq 1$, are at most of boost order $-2$.
\end{prop}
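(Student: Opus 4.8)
The plan is to prove this by induction on $k$, showing that if $\nabla^{(k)} C$ has boost order at most $-2$ then so does $\nabla^{(k+1)} C$, with the base case $k=1$ supplied directly by assumption 3. The natural tool is the higher-dimensional GHP formalism of \cite{Durkeeetal10}: a covariant derivative raises the boost order of a frame-decomposed quantity by at most $+1$ (coming from the GHP operator $\thorn$ acting on components and from the spin coefficients $L_{11}$, $L_{1i}$, $\kappa'_i$ in \eqref{dl}--\eqref{dm} that have positive boost weight), so from $\nabla^{(k)} C$ of boost order $\le -2$ one gets $\nabla^{(k+1)} C$ of boost order $\le -1$ for free. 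The whole content of the induction step is therefore to kill the single potential boost-weight $-1$ piece of $\nabla^{(k+1)} C$ and show it actually vanishes.

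First I would set up the Kundt machinery carefully: since $\bl$ is a geodetic, affinely parametrized mWAND with parallelly propagated frame and the spacetime is Kundt, $\rho_{ij}=0$ in \eqref{dl}, and for a type II Einstein Kundt spacetime one also has $\tau_i$ constrained by the Bianchi and Ricci identities; in particular the only surviving positive-boost-weight spin coefficients are $L_{11}$, $L_{1i}$, and $\kappa'_i$ (boost weight $+1$), while $\tau_i$ has boost weight $0$. The Weyl tensor of a type II spacetime has nonzero components of boost weights $0$, $-1$, $-2$ only, namely $\Phi,\Phi_{ij},\Phi_{ijkl}$ (b.w.\ $0$), $\Psi'_i,\Psi'_{ijk}$ (b.w.\ $-1$), and $\Omega'_{ij}$ (b.w.\ $-2$). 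Assumption 1 removes $\Psi'_{ijk}$ (and, via the tracefree/Bianchi relations, one should check $\Psi'_i$ is likewise controlled), so the b.w.\ $-1$ part of $C$ itself is already essentially absent; then assumption 3 says $\nabla C$ has no b.w.\ $0$ or $-1$ components. The base case is thus immediate, and I would record explicitly which GHP equations (the Bianchi identities, NP equations (A1)--(A5) of \cite{Durkeeetal10}, and the commutator of $\thorn$ with $\eth$) are needed.

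The heart of the argument, and the main obstacle, is the induction step: supposing $\nabla^{(k)} C$ has boost order $\le -2$, its boost-weight $-2$ part, being the highest-boost-weight part, transforms homogeneously (it is "balanced" in the terminology of \cite{Durkeeetal10, universal1}), and I need to show that applying one more derivative produces nothing at boost weight $-1$. A potential b.w.\ $-1$ component of $\nabla^{(k+1)} C$ arises only from $\thorn$ acting on the (unique, up to index structure) b.w.\ $-2$ component of $\nabla^{(k)} C$, plus curvature-times-spin-coefficient terms where a positive-b.w.\ spin coefficient multiplies a b.w.\ $-2$ component. The spin-coefficient terms: $L_{11}$, $L_{1i}$, $\kappa'_i$ (b.w.\ $+1$) times a b.w.\ $-2$ component give b.w.\ $-1$; one must show these combine into something proportional to $\thorn$ of the b.w.\ $-2$ component and hence are governed by the same equation. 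The decisive input is then that $\thorn$ (i.e.\ $D$) annihilates the relevant b.w.\ $-2$ curvature data: this is exactly what assumption 2, $D\Omega'_{ij}=0$, provides at level $k=1$, and one propagates it upward using the Bianchi identities and the commutator relations, which for Einstein Kundt spacetimes close nicely because the $\rho$'s vanish. Concretely I would show, by a subsidiary induction, that $D$ applied to the b.w.\ $-2$ part of $\nabla^{(k)}C$ vanishes, using that (i) $D$ commutes with $\eth$ up to terms involving only b.w.\ $\le -2$ data (Kundt!), (ii) the Bianchi identity expresses $D$ of a negative-b.w.\ Weyl component in terms of $\eth$ of lower or equal things and b.w.\ $0$ data which, acting appropriately, drop out, and (iii) assumptions 1--2 kill the seed terms. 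Granting this, the sole candidate b.w.\ $-1$ component of $\nabla^{(k+1)}C$ vanishes, so $\nabla^{(k+1)}C$ has boost order $\le -2$, completing the induction. The bookkeeping of exactly which frame components and which GHP identities enter at each step is where the real work lies; the conceptual skeleton is the "one derivative raises b.w.\ by at most one, and $D$ of the top piece vanishes" dichotomy.
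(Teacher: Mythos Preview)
Your overall strategy---induction on $k$, with the content of the step being to kill the would-be boost-weight $-1$ piece of $\nabla^{(k+1)}C$---is exactly the paper's strategy. But there is a genuine gap in your induction hypothesis, and the paper's proof closes it with a notion you have not identified.

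The gap is this. You propose, as a subsidiary induction, to show that $D$ applied to the b.w.\ $-2$ part of $\nabla^{(k)}C$ vanishes. But this hypothesis does not propagate. The b.w.\ $-2$ components of $\nabla^{(k+1)}C$ include $D$ applied to the b.w.\ $-3$ components of $\nabla^{(k)}C$ (as well as $\delta_i$ of b.w.\ $-2$ components, and spin-coefficient terms). Hence $D$ of the b.w.\ $-2$ part of $\nabla^{(k+1)}C$ contains $D^2$ of the b.w.\ $-3$ part of $\nabla^{(k)}C$, which your hypothesis says nothing about. The correct strengthening is the paper's \emph{1-balanced} condition: a scalar $\eta$ of b.w.\ $b$ is 1-balanced if $\eta=0$ for $b\ge -1$ and $D^{-b-1}\eta=0$ for $b<-1$, and a tensor is 1-balanced if all its frame components are. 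This is an entire tower of $D^m$-vanishing conditions, one for each boost weight, and it is precisely what is needed for the hypothesis to be stable under one more covariant derivative. The paper then proves (Lemma \ref{lemma1balanced}) that a covariant derivative of a 1-balanced tensor is 1-balanced, and this is where the real work lies: one must check that for a 1-balanced scalar $\eta$, each of $D\eta$, $\delta_i\eta$, $\Delta\eta$ and each product $L_{1i}\eta$, $\tau_i\eta$, $L_{11}\eta$, $\kappa'_i\eta$, $\rho'_{ij}\eta$, $\Mi_{j1}\eta$, $\Mi_{kl}\eta$ is again 1-balanced. This requires not just the Kundt condition $\rho_{ij}=0$ but a list of $D^m$-vanishing conditions on the spin coefficients themselves, namely $D^3\kappa'_i=0$, $D^2L_{11}=D^2\rho'_{ij}=D^2\Mi_{j1}=0$, $D\Mi_{jk}=D\tau_i=DL_{1i}=0$ (together with $D\Phi_{ij}=D\Phi_{ijkl}=0$), which the paper extracts from the Ricci and Bianchi equations under assumptions 1--2. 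Your sketch alludes to commutator relations and Bianchi identities, but without the 1-balanced tower and these spin-coefficient relations the induction will not close.

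A secondary point: your boost-weight assignments for the spin coefficients are off. Under constant boosts, $L_{11}$ has b.w.\ $-1$, $L_{1i}$ has b.w.\ $0$, and $\kappa'_i$ has b.w.\ $-2$; in the parallelly propagated Kundt frame all nonvanishing spin coefficients have b.w.\ $\le 0$. So the ``positive-b.w.\ spin coefficient times b.w.\ $-2$ component'' terms you worry about do not exist: the \emph{only} source of a potential b.w.\ $-1$ piece in $\nabla T$ (for $T$ of boost order $\le -2$) is $D$ acting on the b.w.\ $-2$ components, which makes the need for the full 1-balanced condition even clearer.
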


\begin{proof}
Under the assumptions of the proposition, the Ricci and Bianchi equations for the quantities of b.w.\footnote{In the balanced scalar approach, we are interested in the b.w. under boosts with {\em constant} $\lambda$, in this sense e.g.\ $L_{11}$ has b.w.\ $-1$.  } $b$ imply
\BEA
&& b=-2:  \ \ D^3 \kappa'_i 
= 0, \ \  D {\Omega'}_{ij} =0,  \label{eq1balanced1} \\
&& b=-1:  \ \ D^2 L_{11}=0,  \ \  D^2 \rho'_{ij}=0,\  \ D^2  \M{i}{j}{1} = 0, \\
&& b=0:  \ \ \ \  D {\Phi_{ij}} = 0 , \ \  D {\Phi_{ijkl}} = 0 , \ \  D \M{i}{j}{k} =0,\ \ D\tau_i=0, \ \ DL_{1i}=0. \label{eq1balanced3}
\EEA  

Following \cite{Coleyetal04vsi} and \cite{universal1}, we define a 1-balanced scalar and a 1-balanced tensor: Let us  say that a  scalar $\eta$ with the b.w. $b$  is 1-balanced if $D^{-b-1} \eta =0$ for $b<-1$ and $\eta=0$ for $b \geq -1$ and that a tensor is 1-balanced if  all its frame  components are 1-balanced scalars. Obviously a
1-balanced tensor admits only non-vanishing components of b.w.\ $\leq -2$.

Since the conditions \eqref{eq1balanced1}-\eqref{eq1balanced3} are the same as eqs.\ (28)--(30) for type N spacetimes 
in \cite{universal1} (except for the conditions for b.w.\ 0 components that are missing in type N), one can use the same proof to show that for a 1-balanced scalar $\eta$, scalars $L_{1i} \eta$, $\tau_{i} \eta$, $L_{11} \eta$, $ \kappa'_i 
\eta $,  $ \rho'_{ij} 
\eta$,  
$\Mi_{\!j1}\!  \eta$ and $ \Mi_{\!kl}\!  \eta$ and 
$D \eta ,\ \delta_i \eta,\ \T \eta $ are  1-balanced scalars.
 Note that if  we denote frame components of a tensor by $\eta_i$ then frame components of its covariant derivative consist of terms 
$L_{1i} \eta_i$, $\tau_{i} \eta_i$, $\dots $, 
$\delta_i \eta,\ \T \eta $.
 
It thus follows that 

\begin{lem}
\label{lemma1balanced}
For type II Einstein Kundt spacetimes obeying \eqref{eq1balanced1}--\eqref{eq1balanced3} in a frame parallelly propagated along an mWAND $\bl$, 
 a covariant derivative of a 1-balanced tensor is a 1-balanced tensor. 
\end{lem}

The Weyl tensor for type II Einstein Kundt spacetimes is {\em not} 1-balanced, however, 
 in particular cases (including the examples discussed in sections \ref{sec_GT}, \ref{sec_GT_3}),  $\nabla^{(1)} C$ is 1-balanced. 
Then, by lemma \ref{lemma1balanced}, proposition \ref{lemma_derbalanced} follows.

\end{proof}

Let us conclude this section with the following lemma
\begin{lem}
\label{lemmaconserved}
 For spacetimes obeying the assumptions of proposition \ref{lemma_derbalanced},  all rank-2 tensors $T_{ab}$ constructed from the Riemann
 tensor and its covariant derivatives obey (for $k \geq 1$): \\
(i) If $T_{ab}$ is quadratic or of higher order in $\nabla^{(k)} C$ then it vanishes. \\
(ii) If $T_{ab}$ is linear in  $\nabla^{(k)} C$ then it is conserved. 
 \end{lem}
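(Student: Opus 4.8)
The plan is to exploit the boost-weight structure established in Proposition~\ref{lemma_derbalanced} together with the fact that, by~\eqref{CCCT_RIEM} and the analysis of this section, the tensors $\nabla^{(k)}C$ for $k\geq 1$ are all of boost order at most $-2$; moreover, by Lemma~\ref{lemma1balanced}, they are in fact $1$-balanced tensors, meaning their only non-vanishing frame components have b.w.\ $\leq -2$. The key observation is that a rank-$2$ tensor $T_{ab}$ built as a contraction of a polynomial in the Riemann tensor and its covariant derivatives has, in a frame parallelly propagated along $\bl$, a boost-weight grading: the total b.w.\ of a given frame component equals the sum of the b.w.s of the factors minus any compensating b.w.\ from the metric/frame contractions, and since $T_{ab}$ is a genuine tensor, the b.w.\ $+2$ component $T_{11}\equiv T_{ab}\ell^a\ell^b$ carries the maximal b.w.

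For part~(i), I would argue as follows. The Weyl tensor of a type II spacetime has frame components of b.w.\ ranging from $0$ down to $-2$, and the Ricci tensor (constant, proportional to $g$) contributes b.w.\ $0$ pieces only. Each factor of $\nabla^{(k)}C$ with $k\geq 1$ contributes only b.w.\ $\leq -2$. Hence if $T_{ab}$ is quadratic or higher in the $\nabla^{(k)}C$ factors, every frame component of $T_{ab}$ has b.w.\ at most $-4$ plus the non-negative contributions from at most a bounded number of b.w.\ $0$ Riemann/Weyl factors and metric contractions; but the metric and the b.w.\ $0$ Weyl pieces cannot raise the b.w.\ above $0$, so in particular all components of b.w.\ $\geq -3$ vanish. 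The point is then to push this further: a conserved symmetric rank-$2$ tensor on a universal (or CCCT) spacetime must be proportional to $g_{ab}$, hence if \emph{any} of its frame components vanishes identically it must vanish entirely — but more directly, one shows $T_{ab}$ cannot be proportional to $g_{ab}$ unless it is zero, because $g_{ab}$ has a non-zero b.w.\ $0$ component $g_{11}=0$... more carefully: $g_{11}=0$, $g_{1i}=0$, $g_{ij}=\delta_{ij}$, $g_{1n}=1$ have b.w.\ $0$, while $T_{ab}$ with two or more $1$-balanced factors has vanishing b.w.\ $0$ and b.w.\ $-1$ components (since raising needs b.w.\ contributions that the remaining Weyl factors, each $\leq 0$, cannot supply beyond $0$, and two $\leq -2$ factors force $\leq -4 < 0$); so $T_{ab}=\lambda g_{ab}$ forces $\lambda g_{ij}=T_{ij}=0$ wherever $T_{ij}$ is a component built entirely from $\leq -2$ and $\leq 0$ factors that cannot reach b.w.\ $0$, giving $\lambda=0$, whence $T_{ab}=0$. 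I would phrase this cleanly by observing that a product of two $1$-balanced tensors contracted with arbitrarily many b.w.-$0$-bounded Weyl/Riemann factors and metrics has maximal b.w.\ strictly negative, so its trace (a scalar) vanishes; since $T_{ab}$ is conserved hence $\propto g_{ab}$, its trace determines it, so $T_{ab}=0$.

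For part~(ii), suppose $T_{ab}$ is linear in a single factor $\nabla^{(k)}C$ ($k\geq 1$), the rest being Riemann/metric factors of b.w.\ $\leq 0$. Then every frame component of $T_{ab}$ has b.w.\ $\leq -2$; in particular $T_{11}=0$, $T_{1i}=0$ (these would need b.w.\ $\geq -1$), so the only potentially non-vanishing components are $T_{1n}$, $T_{ij}$, $T_{in}$, $T_{nn}$, i.e.\ those of b.w.\ $\leq -2$. The divergence $T^{\ b}_{a\ ;b}$ is then a vector built linearly from $\nabla^{(1)}$ applied to such a tensor plus connection terms; by Lemma~\ref{lemma1balanced} (applied to the $1$-balanced tensor $\nabla^{(k)}C$ and its companions) every term in $\nabla_b$ of a $1$-balanced-type object is again of b.w.\ $\leq -2$, so $T^{\ b}_{a\ ;b}$ has all frame components of b.w.\ $\leq -2$ as well; but a one-form whose every frame component has b.w.\ $\leq -2$ must vanish, since a non-zero covariant vector $v_a$ always has $v_an^a$ of b.w.\ $0$ or $v_a\ell^a$ of b.w.\ $0$... more precisely one needs that the \emph{scalar} $T^{\ b}_{a\ ;b}\ell^a$ has b.w.\ $\geq$ the b.w.\ of the component, and the vanishing of all b.w.\ $\geq -1$ components of a divergence-of-a-symmetric-tensor is automatic, while the genuinely new input is that $T^{\ b}_{a\ ;b}$ is a gradient-free conserved-up-to-lower-order object whose non-negative-b.w.\ part vanishes; combined with the fact (used already in this paper, cf.\ the VSI/balanced-scalar machinery of~\cite{Coleyetal04vsi,universal1}) that a $1$-balanced vector has zero divergence-contraction forcing the whole vector to vanish under $D^{-b-1}$, we get $T^{\ b}_{a\ ;b}=0$.

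The main obstacle I anticipate is making the b.w.-counting for \emph{contractions} rigorous: a contraction of two low-b.w.\ indices on a single tensor can, a priori, look like it lowers b.w., but one must check that contracting a $1$-balanced tensor against itself or against a b.w.-bounded tensor never produces a b.w.-$0$ scalar, and that the metric contractions used to form $T_{ab}$ (which involve $\ell_a n^a=1$, raising one index' b.w.\ by the other's) are correctly accounted for. The clean way to handle this is to assign to every frame index a definite b.w.\ ($\ell_a\leftrightarrow -1$ as a covariant index in the sense that $T_{1\ldots}$ has $+1$ relative to $T_{i\ldots}$, etc.) and verify additivity under contraction, then simply observe: (total b.w.\ of any frame component of $T_{ab}$) $\leq$ (sum over factors of their boost orders), which is $\leq -2$ if exactly one factor is $\nabla^{(k)}C$ and $\leq -4$ if two or more are, while the metric and Ricci factors contribute $0$ to this bound. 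Everything else then follows from the closure of $1$-balanced tensors under $\nabla$ (Lemma~\ref{lemma1balanced}) exactly as in the type N/III arguments of~\cite{universal1}, so the proof will be short once the bookkeeping lemma is in place.
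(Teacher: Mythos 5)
Your overall strategy -- pure boost-weight bookkeeping, using that $\nabla^{(k)}C$ has boost order $\le -2$ for $k\ge 1$ while the remaining Riemann/metric factors have boost order $\le 0$, so that the product has boost order $\le -4$ in the quadratic case and $\le -2$ in the linear case -- is exactly the paper's approach. But both halves of your write-up miss (or misstate) the single elementary fact that closes each argument, and in part (i) you substitute for it a step that is circular. For (i): having shown that every frame component of $T_{ab}$ has b.w.\ $\le -4$, you are already done, because a rank-2 tensor has frame components only of b.w.\ $-2,-1,0,+1,+2$ (from $T_{ab}n^an^b$ up to $T_{ab}\ell^a\ell^b$); there is nothing of b.w.\ $\le -3$ left to be nonzero. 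Instead you ``push further'' by invoking ``$T_{ab}$ is conserved hence $\propto g_{ab}$'' -- but conservation is not a hypothesis of (i), and proportionality to the metric is precisely the universality statement that Lemma \ref{lemmaconserved} is being used to establish in sections \ref{sec_rank2-2ndorder} ff., so that step is circular and must be deleted. The paper's proof of (i) is the one sentence you already had in hand.

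For (ii): the correct reason that $T_{a\ ;b}^{\ b}$ vanishes is the rank-1 analogue of the same fact -- a vector's frame components $v_a\ell^a$, $v_am^{(i)a}$, $v_an^a$ have b.w.\ $+1$, $0$, $-1$ respectively, all $\ge -1$, so a rank-1 tensor all of whose components are forced to have b.w.\ $\le -2$ is identically zero. Your stated justification (``a non-zero covariant vector $v_a$ always has $v_an^a$ of b.w.\ $0$ or $v_a\ell^a$ of b.w.\ $0$'') assigns the wrong boost weights to these contractions, and the subsequent hedge about ``divergence-of-a-symmetric-tensor'' and ``$D^{-b-1}$'' does not repair it; the $1$-balanced machinery is not needed here beyond the statement of proposition \ref{lemma_derbalanced} itself, which already gives that every term of $T_{ab;c}$ is linear in some $\nabla^{(m)}C$, $m\ge 1$, and hence of boost order $\le -2$. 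With these two corrections your argument coincides with the paper's two-line proof; your closing ``bookkeeping lemma'' (additivity of b.w.\ under tensor product and non-increase under contraction with the b.w.-$0$ metric) is indeed the only structural input required.
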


\begin{proof}
(i) All components of an arbitrary rank-2 tensor $T_{ab}$ have b.w.\ $\geq -2$, however, 
{ for a tensor obeying the assumptions of proposition \ref{lemma_derbalanced}}, terms quadratic or of higher order in $\nabla^{(k)} C$ have 
b.w.\ $\leq -4$.\\
(ii) By proposition \ref{lemma_derbalanced}, 
$T_{ab;c}$ only admits  components of b.w.\ $\leq -2$ and therefore its trace vanishes since a rank-1 tensor has components
of b.w.\ $\geq -1$.
\end{proof}


\subsection{A generalization of the Khlebnikov--Ghanam--Thompson metric consisting of 2-blocks}

\label{sec_GT}

 Here, let us study necessary conditions 
for universality of
a higher-dimensional  generalization of the Khlebnikov--Ghanam--Thompson metric consisting of $N$ 2-blocks
\begin{equation}
  \d s^2 = 2 \d u \d v + (\lambda v^2 + H(u, x_\alpha, y_\alpha)) \d u^2 + \frac{1}{|\lambda|} \sum_{\alpha=1}^{N-1} (\d x_\alpha^2 + \s^2(x_\alpha) \, \d y_\alpha^2), \ \ \ 
	\label{GT-2blocks}
\end{equation}
where $\s(x_\alpha) = \sin (x_\alpha)$ for $\lambda >0$,  $\s(x_\alpha) = \sinh (x_\alpha)$ for $\lambda <0$. The metric \eqref{GT-2blocks} is Einstein iff $H$ obeys 
\beq
 \Box H &=&\left[\sum_{\alpha=0}^{N-1}  \Box^{(\alpha)} \right] H=0, \  \ \ 
\Box^{(\alpha)}\equiv\nabla^{a_{(\alpha)}}\nabla_{a_{(\alpha)}},\nonumber\\
 \Box^{(\alpha)}H&=& |\lambda| \left( H_{,x_\alpha x_\alpha} + \frac{1}{\s^2(x_\alpha)} H_{,y_\alpha y_\alpha} 
+ \ct(x_\alpha) H_{,x_\alpha}\right)\ \mbox{for}\ \alpha=1,\dots, N-1,
\ \ \  
\label{GT-BoxH}
\eeq
where $\cs(x_\alpha)=\cos (x_\alpha)$, $\ct(x_\alpha)=\frac{\cs(x_\alpha)}{\s(x_\alpha)}$ for $\lambda >0$,  $\cs(x_\alpha) = \cosh (x_\alpha)$ for $\lambda <0$
and  where 
$\Box^{(0)} H$ vanishes identically.
 Note that 
in contrast with the cases discussed in section \ref{sec_examples}, 
the Riemann and Ricci tensors are not decomposable,
 due to the dependence of $H$ on  $x_\alpha$ and $y_\alpha$, and non-zero negative b.w.\ components of the Riemann and Ricci tensors
appear (see  appendix \ref{app_GT}).

These type II Einstein Kundt  metrics admit a recurrent $(\tau_i=0)$ multiple WAND $\bl=\d u$. In a frame 
parallelly propagated along $\bl$, in which
$\Psi'_{ijk}=0$ (this follows from $\rho'_{ij}=0$ and the Ricci eq.\ (11l) in \cite{OrtPraPra07}) and $D\Omega'_{ij}=0$, 
the boost order of $\nabla^{(1)} C$ is at most $-2$ (see appendix \ref{app_GT}).
 Thus, all the assumptions of proposition \ref{lemma_derbalanced} are satisfied 
and therefore all covariant derivatives of the Weyl tensor admit only terms with b.w. $-2$ or less.

The b.w.\ zero part of the Riemann tensor is decomposable and can be written as
\beq
\Rabcd \!\!=\!\!\Lambda\!\left[(\delta^{(0)})^a_{~c}(\delta^{(0)})^b_{~d}-(\delta^{(0)})^a_{~d}(\delta^{(0)})^b_{~c}+\!\sum_{\alpha=1}^{N-1}
\left((\delta^{(\alpha)})^a_{~c}(\delta^{(\alpha)})^b_{~d}-(\delta^{(\alpha)})^a_{~d}(\delta^{(\alpha)})^b_{~c}\right)\right]\!, \Lambda=\frac{\lambda}{n_0-1}.
\label{RiemannGT}\eeq
Note that in an appropriately chosen frame (e.g.\ the frame \eqref{frame-2} given in appendix \ref{app_GT}), the b.w.\ zero frame components of the Riemann
tensor are also decomposable.

The operator $\delta^{(\alpha)}:TM\rightarrow TM$ can be seen as a projection operator projecting onto the $\alpha^{\mbox{th}}$ piece. They fulfill (as operators): 
\[ \delta^{(\alpha)}\delta^{(\beta)}=\begin{cases} 0, & \alpha\neq \beta \\ \delta^{(\alpha)}, 
& \alpha=\beta\end{cases}, \quad {\rm Id}_{TM}=\bigoplus_{\alpha=0}^{N-1}\delta^{(\alpha)}. \]  
We also note that the b.w.\ 0 component of the Riemann tensor possesses the discrete symmetries:
\be
 \sigma(\alpha,\beta): ~~\delta^{(\alpha)} \leftrightarrow \delta^{(\beta)}.  \label{permut}
\ee
These $\sigma(\alpha,\beta)$'s generate the group of permutations of the $N$ projection operators. 

\subsubsection{Rank-2 tensors constructed from the Riemann tensor}
\label{sec_rank2riemann}

In this section, we show that all symmetric rank-2 tensors $T_{ab}$ constructed from the Riemann tensor are proportional to the metric,
i.e.\ this metric is 0-universal.

As mentioned above, the b.w.\ 0 part of the Riemann tensor \eqref{RiemannGT} is decomposable. Therefore, any symmetric rank-2 tensor $T_{ab}$ constructed from $\Rabcd$ is also decomposable and due to \eqref{RiemannGT} also proportional to the metric. 

Let us proceed with terms containing b.w.\ $-2$ components of the Weyl tensor $\C_{abcd}$ 
(note that for Einstein spacetimes, these are equal to b.w.\ $-2$ components of the Riemann tensor). Obviously, a non-vanishing rank-2 tensor constructed from $\Rabcd $ and  $\C_{abcd}$ 
can contain at most one term $\C_{abcd}$.  
Therefore, we are left with the cases
\beq
{\mathcal{R}}^{ab} \ \C_{\bullet  a  \bullet b},  \label{SC1} \\
{\mathcal{R}}^{abc}_{\phantom{abc} \bullet}\  \C_{   abc  \bullet}, \label{SC2} \\
{\mathcal{R}}^{abcd}_{\phantom{abcd} \bullet \bullet}\  \C_{ abcd}, \label{SC3}
\eeq
where ${\mathcal{R}}^{\dots}$ are rank-2, 4 and 6 tensors constructed from $\Rabcd$ and free indices are indicated by $\bullet$. 

\begin{itemize}
\item ${\mathcal{R}}^{ab} \ \C_{\bullet  a  \bullet b}$:
Since ${\mathcal{R}}^{ab}$ is proportional to the metric and $\C_{ abcd}$ is traceless, the case \eqref{SC1} obviously vanishes.
\item
${\mathcal{R}}^{abc}_{\phantom{abc} \bullet}\  \C_{   abc  \bullet}$:
Since $\Rabcd$ 
is decomposable it follows that ${\mathcal{R}}^{abcd}$ is either decomposable or a tensor product of two rank-2 decomposable tensors 
(which are both proportional to the metric). 
\begin{itemize}
\item
In the latter ``$2 \otimes 2$'' case,  \eqref{SC2} obviously vanishes. 
\item
Let us proceed with the rank-4 decomposable case:\\
The tensor \eqref{SC2} has b.w.\ $-2$ and therefore it can be expressed as
\be
{\mathcal{R}}^{abc}_{\phantom{abc} 1} \C_{ abc  1} \ell_{\bullet} \ell_{\bullet}\ ,\label{case2}
\ee
where indices $a, b, c$ and $1$ are now understood as frame indices. $\C_{abc  1}$ can be non-vanishing only for $c$ taking value $i$ (i.e.\ belonging to the spacelike blocks). However, in such a case,    
\eqref{case2} and \eqref{SC2} vanish due to the decomposability of the first term. 
\end{itemize}
\item
${\mathcal{R}}^{abcd}_{\phantom{abcd} \bullet \bullet}\  \C_{ abcd}$:
Similarly as above, ${\mathcal{R}}^{abcdef}$  appearing in \eqref{SC3} is decomposable or it is a $2 \otimes 2 \otimes 2$ or $2 \otimes 4$ tensor product of decomposable tensors.
\begin{itemize}
\item
If ${\mathcal{R}}^{abcdef}$   is decomposable  then \eqref{SC3} vanishes (one can use similar arguments as for the vanishing of \eqref{case2}).
\item
For the case $2 \otimes 2 \otimes 2$, \eqref{SC3} vanishes due to the tracelessness of $\C_{abcd}$.
\item
In the $2 \otimes 4$ case, ${\mathcal{R}}^{abcdef}={\mathcal{R}}^{ab} {\mathcal{R}}^{cdef}$. If ${\mathcal{R}}^{ab}$ has no free index, then \eqref{SC3} vanishes due to
 the tracelessness of $\C_{abcd}$. If ${\mathcal{R}}^{ab}$ has one free index then  \eqref{SC3} 
 vanishes  thanks to the decomposability of ${\mathcal{R}}^{cdef}$, i.e. terms like ${\mathcal{R}}^{0i1j}$ vanish.
If ${\mathcal{R}}^{ab}$ has two free indices, then \eqref{SC3} vanishes due to the vanishing of $g_{11}$. 
\end{itemize}
\end{itemize}
We thus arrive at

\begin{prop}
\label{prop_rank2Riem}
The metric \eqref{GT-2blocks}, \eqref{GT-BoxH} 
is 0-universal.
\end{prop}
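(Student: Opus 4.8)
The plan is to prove that every symmetric conserved rank-2 tensor $T_{ab}$ built from the Riemann tensor (no covariant derivatives) of the metric \eqref{GT-2blocks}, \eqref{GT-BoxH} is a multiple of the metric; since $0$-universal spacetimes are automatically Einstein and these metrics are Einstein by \eqref{GT-BoxH}, it suffices to establish the ``$T_{ab} \propto g_{ab}$'' part. First I would invoke the structure already assembled in this subsection: proposition \ref{lemma_derbalanced} (whose hypotheses were verified for \eqref{GT-2blocks}) tells us that all covariant derivatives of the Weyl tensor have boost order $\le -2$, and \eqref{RiemannGT} tells us that the boost-weight-zero part $\Rabcd$ of the Riemann tensor is decomposable and in fact proportional to a sum of projector products, so that any rank-2 tensor polynomial in $\Rabcd$ alone is decomposable and, via \eqref{RiemannGT}, proportional to the metric.

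Next I would reduce a general $T_{ab}$ to the three template contractions \eqref{SC1}--\eqref{SC3}. The key observation is that, for an Einstein spacetime, the only non-b.w.-zero piece of the Riemann tensor is the b.w.\ $-2$ Weyl component $\C_{abcd}$ (there is no b.w.\ $-1$ or $+$ piece since $\bl$ is an mWAND and $\Psi'_{ijk}=0$), and a rank-2 tensor has all components of b.w.\ $\ge -2$; hence at most one factor of $\C_{abcd}$ can appear, all other curvature factors being $\Rabcd$. Contracting the $\C$ factor with the bivectors/sextet built from $\Rabcd$ yields exactly \eqref{SC1} (two free indices on $\C$), \eqref{SC2} (one free index on $\C$), or \eqref{SC3} (no free index on $\C$, free indices coming from the $\Rabcd$-part). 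Then I would work through each case exactly as laid out in the displayed bullet list: \eqref{SC1} dies because ${\mathcal{R}}^{ab}\propto g^{ab}$ and $\C$ is traceless; \eqref{SC2} and \eqref{SC3} die by repeatedly using that any rank-$2k$ tensor built from the decomposable $\Rabcd$ is either decomposable or a tensor product of lower-rank such tensors (each proportional to its block metric), combined with tracelessness of $\C$ and the vanishing of $g_{11}$ and of mixed-block components like ${\mathcal{R}}^{0i1j}$ or $\C_{abc1}$ with $c$ in a spacelike block.

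The main obstacle — and the only genuinely delicate point — is the bookkeeping in case \eqref{SC3} showing that contracting a b.w.\ $-2$ tensor $\C_{abcd}$ into a rank-6 tensor built from the decomposable b.w.\ $0$ Riemann tensor cannot produce a surviving b.w.\ $-2$ contribution with two free indices that is \emph{not} already a multiple of the metric. One has to verify that whenever two free indices land on a decomposable ${\mathcal{R}}$ block they force that block to be $\propto g$ with indices in the same block, and then the contraction with $\C$ forces at least one index pair to be a trace of $\C$ (killing it) or a mixed-block/null-null component that vanishes; the $2\otimes 2\otimes 2$ and $2\otimes 4$ sub-cases each need this checked separately. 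Since the arguments are purely combinatorial on block labels and boost weights, no computation is required, only a careful exhaustion of index placements; with that done, $T_{ab}\propto g_{ab}$ for all such $T_{ab}$, and the metric is $0$-universal.
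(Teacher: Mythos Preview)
Your proposal is correct and follows essentially the same approach as the paper's own proof: split the Riemann tensor into its decomposable boost-weight-zero part \eqref{RiemannGT} and the boost-weight $-2$ Weyl piece $\C_{abcd}$ (using $\Psi'_{ijk}=0$), observe that at most one $\C$ factor can survive in a rank-2 tensor, and then exhaust the three contraction templates \eqref{SC1}--\eqref{SC3} via decomposability, tracelessness of $\C$, and the vanishing of mixed-block components. The only superfluous step is your invocation of proposition~\ref{lemma_derbalanced} about covariant derivatives, which plays no role in the $0$-universal argument (no derivatives appear); otherwise the logic and case analysis match the paper's exactly.
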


Obviously, it follows that
\begin{cor}
The metric \eqref{GT-2blocks}, \eqref{GT-BoxH} is a vacuum solution to all gravitational theories  with field equations (derived from the Lagrangian \eqref{Lagr})
that may contain derivatives of the Ricci tensor but do not contain derivatives of the Riemann tensor.
\end{cor}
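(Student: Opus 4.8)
The plan is to reduce the vacuum field equations of any such theory, when evaluated on the metric \eqref{GT-2blocks}--\eqref{GT-BoxH}, to a rank-$2$ tensor built polynomially from the metric and the Riemann tensor alone (no covariant derivatives), and then to invoke Proposition~\ref{prop_rank2Riem}. Recall that the vacuum field equations of any theory derived from a Lagrangian of the form \eqref{Lagr} can be written as $E_{ab}=0$ (equivalently $E_{ab}\propto g_{ab}$, when a cosmological-constant-type term is admitted), where, by the diffeomorphism invariance of the action, $E_{ab}$ is a symmetric, identically conserved rank-$2$ tensor constructed as a polynomial in $g_{ab}$, the Riemann tensor and covariant derivatives of the curvature. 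For the class of theories in the statement, by hypothesis, no term of $E_{ab}$ carries a covariant derivative of the Riemann tensor; the only differentiated objects are the Ricci tensor $R_{ab}$ and the Ricci scalar $R$. Hence $E_{ab}$ is a polynomial in $g_{ab}$, $R_{abcd}$, $R_{ab}$, $R$ and $\nabla^{(k)}R_{ab}$, $\nabla^{(k)}R$ with $k\ge1$.

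Next I would evaluate $E_{ab}$ on the metric \eqref{GT-2blocks}, \eqref{GT-BoxH}. By construction this metric is Einstein with $\Lambda\neq0$, so $R_{ab}=\tfrac{R}{n}g_{ab}$ with $R$ a nonzero constant; consequently $\nabla^{(k)}R_{ab}=0$ and $\nabla^{(k)}R=0$ for all $k\ge1$, and every term of $E_{ab}$ carrying an explicit covariant derivative vanishes. The surviving terms are polynomial in $g_{ab}$, $R_{abcd}$, $R_{ab}$, $R$ with no derivatives; using $R_{ab}=\tfrac{R}{n}g_{ab}$ and the constancy of $R$, these reduce to a polynomial rank-$2$ tensor $\widetilde E_{ab}$ constructed from $g_{ab}$ and $R_{abcd}$ alone (equivalently, substituting $R_{abcd}=C_{abcd}+\tfrac{2R}{n(n-1)}g_{a[c}g_{d]b}$ via \eqref{RiemannWeyl}, a polynomial in $g_{ab}$ and $C_{abcd}$). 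Thus, on the metric in question, $E_{ab}=\widetilde E_{ab}$, a rank-$2$ tensor built from the Riemann tensor and the metric with no covariant derivatives.

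It then remains to apply Proposition~\ref{prop_rank2Riem}: since the metric \eqref{GT-2blocks}, \eqref{GT-BoxH} is $0$-universal, every symmetric rank-$2$ tensor constructed from its Riemann tensor --- in particular $\widetilde E_{ab}$ --- is proportional to the metric, so $E_{ab}=c\,g_{ab}$ for some constant $c$. Hence the metric satisfies the vacuum field equations (possibly with the effective cosmological-constant term $-c\,g_{ab}$; if $c=0$ it is a genuine Ricci-flat-type vacuum solution). The step that requires the most care is not any computation but the bookkeeping of the first two paragraphs: one must confirm that the stated hypothesis --- field equations with no derivatives of the Riemann tensor, but possibly derivatives of the Ricci tensor --- is exactly what licenses the reduction to $\widetilde E_{ab}$, and that it captures the intended examples. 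Lovelock gravity satisfies it trivially, since its field equations are algebraic in $R_{abcd}$; quadratic gravity satisfies it because the potentially dangerous $\nabla^{c}\nabla^{d}(\partial L/\partial R_{acbd})$ contributions, after use of the contracted Bianchi identities, reduce to terms involving only $\Box R_{ab}$, $\nabla_a\nabla_b R$ and algebraic curvature, with no $\nabla R_{abcd}$. General $L(\mathrm{Riemann})$ theories do \emph{not} fall under this hypothesis and are instead treated via the $2$-universality results established later.
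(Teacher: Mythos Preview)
Your argument is correct and is exactly the reasoning the paper has in mind: the paper states the corollary as an immediate consequence of Proposition~\ref{prop_rank2Riem} with nothing more than ``Obviously, it follows that'', and your proof simply makes explicit the two ingredients --- on an Einstein metric all covariant derivatives of the Ricci tensor and scalar vanish, so the field equations reduce to a symmetric rank-$2$ tensor built polynomially from $g_{ab}$ and $R_{abcd}$ alone, and $0$-universality then forces this to be a multiple of the metric. Your closing remarks on Lovelock and quadratic gravity also match the paper's own examples given just after the corollary.
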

Examples of such theories are Lovelock gravity (no derivatives of the Ricci and the Riemann tensors) or quadratic gravity (the field equations
of this theory contain derivatives
of the Ricci tensor but do not contain derivatives of the Riemann tensor).


\subsubsection{Rank-2 tensors constructed from the Riemann tensor and its 2nd covariant derivatives}
\label{sec_rank2-2ndorder}

Recall that by proposition \ref{lemma_derbalanced}  for the metrics \eqref{GT-2blocks}, \eqref{GT-BoxH}, all covariant derivatives of the Riemann tensor are of the boost 
order $\leq -2$ and thus  a non-vanishing rank-2 tensor constructed from the Riemann tensor and its derivatives may contain at most one term with a derivative.
Therefore, it is sufficient to study only terms which are linear in (obviously even) covariant derivatives of the Riemann tensor 
that are conserved by  lemma \ref{lemmaconserved}. 
Thus, let us  proceed with studying rank-2 tensors constructed from the Riemann tensor and its second covariant derivatives.

The requirement that conserved symmetric rank-2 tensors containing second covariant derivatives of the Riemann tensor are proportional to the metric
leads to a new  necessary condition on the metric function $H$. For instance,  the following rank-2 tensor 
\be
{R^{cg}}_{eh}{R^{dh}}_{fg}\nabla^{(e}\nabla^{f)}C_{acbd}  \label{GT-RRnablaC}
\ee
vanishes for the metric \eqref{GT-2blocks}, \eqref{GT-BoxH} iff 
\be
\left[\sum_{\alpha=0}^{N-1}  (\Box^{(\alpha)})^2 \right] H=0. \label{GT-sumBox2}
\ee

In the following, we show that  \eqref{GT-sumBox2} is in fact a sufficient 
condition on $H$ for 2-universality of the metrics \eqref{GT-2blocks}, \eqref{GT-BoxH}.

Let us start with the following lemma:
\begin{lem}
\label{lem_ddC_tracefree}
For the metrics \eqref{GT-2blocks}, \eqref{GT-BoxH}, all rank-2 tensors constructed from the Riemann tensor and its 2nd covariant derivatives that contain any trace of $\nabla_a \nabla_b C_{cdef}$ vanish.
\end{lem}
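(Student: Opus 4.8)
The plan is to analyze which traces of $\nabla_a\nabla_b C_{cdef}$ can even be nonzero for these metrics, and then show each of them either vanishes outright or leads only to a rank-2 tensor that is not proportional to the metric but is forced to vanish by boost-weight counting. First I would recall that by proposition \ref{lemma_derbalanced}, for the metrics \eqref{GT-2blocks}, \eqref{GT-BoxH}, the tensor $\nabla^{(2)}C$ is 1-balanced, hence has only components of boost weight $\leq -2$. The traces of $\nabla_a\nabla_b C_{cdef}$ come in two essentially different flavors: contractions of the two derivative indices with each other, i.e.\ $\Box C_{cdef}$, and contractions mixing a derivative index with a Riemann index, e.g.\ $\nabla^a\nabla_b C_{acde}$ or $\nabla_a\nabla^c C_{cbde}$. (Pure traces on the Riemann indices vanish since $C$ is traceless and $\nabla$ commutes with contraction.)

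The key step is to handle the mixed traces using the contracted Bianchi identity for Einstein spacetimes: $\nabla^a C_{abcd} = 0$ (this follows from the once-contracted second Bianchi identity together with $R_{ab;c}=0$, which holds here since the spacetime is Einstein). Applying $\nabla$ to this gives $\nabla^e\nabla^a C_{abcd}=0$, and commuting the derivatives produces $\nabla^a\nabla^e C_{abcd}$ equal to curvature times $C$ — but for an Einstein spacetime the commutator $[\nabla^a,\nabla^e]C_{abcd}$ contracts through Riemann$=$Weyl$+$(background) terms, and since the background (boost-weight-0) part is decomposable while the $\nabla^{(1)}C$-type terms it would multiply are $1$-balanced, all such commutator terms have boost weight $\leq -2$ or vanish by decomposability. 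Hence the mixed traces are themselves $1$-balanced tensors of rank 2, and by part (ii) of the argument in lemma \ref{lemmaconserved} (a rank-2 tensor with all components of b.w.\ $\leq -2$ — wait, actually a rank-2 tensor always has a b.w.\ $-2$ component slot, so I must be more careful): a symmetric rank-2 tensor that is $1$-balanced can only have the single $\ell_a\ell_b$ component nonzero, and that component is a $1$-balanced scalar; any rank-2 tensor $T_{ab}$ constructed from Riemann and its derivatives which is moreover linear in $\nabla^{(2)}C$ and proportional to $\ell_a\ell_b$ must, by lemma \ref{lemmaconserved}(ii), be conserved, $T_{ab;c}=0$; but $(\eta\,\ell_a\ell_b)_{;c}$ for a $1$-balanced $\eta$ cannot vanish unless $\eta=0$ (contract with $n^a n^b$ and use $\ell_{a;b}=L_{11}\ell_a\ell_b$, $D\eta$ being $1$-balanced hence $=0$ forces $\eta$ itself to be constant along $\bl$, and a $1$-balanced scalar constant along $\bl$ is zero). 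This kills the mixed-trace contributions.

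For the remaining trace $\Box C_{cdef}$, I would note that $\Box C_{cdef}$ is again a $1$-balanced tensor (a covariant derivative of the $1$-balanced tensor $\nabla^{(1)}C$, using lemma \ref{lemma1balanced}), and any rank-2 tensor formed from it contracted against decomposable Riemann factors is, by the same boost-weight bookkeeping used throughout section \ref{sec_rank2riemann} — only the $\ell_a\ell_b$ slot survives, and the surviving scalar is $1$-balanced — again conserved by lemma \ref{lemmaconserved}(ii) and hence zero by the argument just given. The main obstacle I anticipate is making the commutator-of-derivatives argument watertight: one must check that when the Riemann curvature appearing in $[\nabla,\nabla]C$ is split as Weyl plus background, the dangerous cross-term (background$\times$Weyl) never produces a boost-weight $\geq -1$ piece. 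This is exactly the kind of reasoning already carried out in section \ref{sec_rank2riemann} for rank-2 tensors at zeroth derivative order, namely that contracting the decomposable b.w.-0 Riemann part against (b.w.\ $\leq -2$) Weyl-derivative terms in the right index pattern always vanishes by decomposability or tracelessness; so I would reduce lemma \ref{lem_ddC_tracefree} to that already-established pattern rather than redoing it. If that reduction is clean, the lemma follows; if not, a short direct computation in the explicit frame \eqref{frame-2} of appendix \ref{app_GT} would settle it.
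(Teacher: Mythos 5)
Your structural reductions are the same as the paper's: discard pure Weyl-index traces by tracelessness, kill $\nabla_a\nabla^c C_{cdef}$ by the contracted Bianchi identity for Einstein spaces, convert $\nabla^c\nabla_b C_{cdef}$ to that case plus commutator terms via \eqref{commut_der}, and reduce $\Box C_{cdef}$ to the previous case by another Bianchi identity. However, your two key closing arguments are flawed. First, the commutator $[\nabla_a,\nabla_b]C_{cdef}$ is Riemann contracted with the \emph{undifferentiated} Weyl tensor, not with $\nabla^{(1)}C$; since $C$ itself is not $1$-balanced (it has b.w.\ $0$ components), your direct product-of-boost-weights argument does not establish that these terms have boost order $\leq -2$. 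The clean route is indirect: the left-hand side of \eqref{commut_der} is a difference of two tensors of boost order $\leq-2$ by proposition \ref{lemma_derbalanced}, so the right-hand side is too.

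Second, and more seriously, your finishing move does not work. Lemma \ref{lemmaconserved}(ii) says such tensors are \emph{conserved}, i.e.\ divergence-free ($T_a{}^b{}_{;b}=0$), not covariantly constant ($T_{ab;c}=0$); and for $T_{ab}=\eta\,\ell_a\ell_b$ with $\eta$ $1$-balanced of b.w.\ $-2$, the divergence vanishes identically (since $D\eta=0$, $\bl$ is geodesic and affinely parametrized, and $\rho_{ij}=0$), so conservation carries no information. Worse, your final claim that ``a $1$-balanced scalar constant along $\bl$ is zero'' is false: $D\eta=0$ is precisely the \emph{definition} of $1$-balanced at b.w.\ $-2$, so it cannot force $\eta=0$. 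The correct closing step — which you only gesture at in your last sentence — is that after the commutator the residual contributions are rank-2 contractions of curvature terms containing \emph{no} derivatives of the Weyl tensor; by proposition \ref{prop_rank2Riem} such tensors are proportional to the metric, and since they carry only boost order $\leq -2$ the proportionality factor must vanish. Replacing your conservation argument by this appeal to $0$-universality closes the gap.
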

\begin{proof}
Since the Weyl tensor is traceless, we only have to consider terms i) $\nabla_a \nabla^c C_{cdef}$, ii) $\nabla^c \nabla_b C_{cdef}$ and iii) $\nabla^b \nabla_b C_{cdef}$. Due to the Bianchi identity, the case i) vanishes identically. The case ii) can be expressed in terms of the case i) using the expression for the
commutator of covariant derivatives for any tensor $T$
\BE
[\nabla_a,\nabla_b]T_{c_1.\dots c_k}=T_{d\dots c_k}R^d_{\ c_1 ab}+\dots +T_{c_1\dots d}R^d_{\ c_kab}. \label{commut_der}
\EE
By proposition \ref{lemma_derbalanced} for $T$ being the Weyl tensor, the left-hand side of \eqref{commut_der} is of the boost order $\leq -2$. Therefore, the right-hand side of \eqref{commut_der}
is also of the boost order $\leq -2$. The corresponding rank-2 contraction with further Riemann terms is of the boost order $\leq -2$, however, taking into account proposition 
\ref{prop_rank2Riem},
 such contraction has to vanish. The case iii) can be easily expressed in terms of case ii) terms using the Bianchi identity.
\end{proof}

Note that  for the  metrics \eqref{GT-2blocks}, \eqref{GT-BoxH}, the contraction $C_{abcd;e} \ell^a \ell^c $ vanishes. By further differentiation (and taking into account that $\bl$ is recurrent), one arrives at
the useful relation
\be
C_{abcd;ef} \ell^a \ell^c =0. \label{eqddCll}
\ee

Now, let us  proceed with possible rank-2 tensors constructed from the Riemann tensor  and its 2nd covariant derivatives. 
Cases to consider are
\begin{enumerate}
\item
$
{\mathcal{R}}^{abcd}  \nabla_{(a} \nabla_{b)} C_{c \bullet d \bullet}\ ,
$
\item
$
{\mathcal{R}}^{\bullet abcde}  \nabla_{(a} \nabla_{b)} C_{\bullet cde}\ ,
$
\item
$
{\mathcal{R}}^{\bullet \bullet abcdef}  \nabla_{(a} \nabla_{b)} C_{cdef}\ .
$ 
\end{enumerate}
 We can consider only the symmetric parts in the cases 1, 2, and 3 since from \eqref{commut_der}, the antisymmetric parts are proportional to the  terms treated in section \ref{sec_rank2riemann} that are proportional to the metric.
Note that  all cases with one or two free indices in the $\nabla_{a} \nabla_{b} C_{cdef}$ part can be reduced to the case 2 and 1, respectively. This can be
shown using the Bianchi identities, a relation for the commutator of covariant derivatives \eqref{commut_der} and  
 the results of  section \ref{sec_rank2riemann}.\footnote{ 
 1A) Terms $\nabla_{.} \nabla_{\bullet} C_{ \bullet \dots}$  can be converted to terms belonging to the case 1 using the Bianchi identities; 
1B) terms $\nabla_{\bullet} \nabla_{.} C_{ \bullet \dots}$ (using the commutator \eqref{commut_der})  consist of terms 1A) plus terms not involving covariant derivatives, discussed in section \ref{sec_rank2riemann}, that are proportional to the metric;
1C)  terms $\nabla_{\bullet} \nabla_{\bullet} C_{ ....}$ can be expressed as  terms 1B) using the Bianchi identities.
Similarly, cases with only one free index in the $\nabla_{a} \nabla_{b} C_{cdef}$ part can be converted to the case 2.  }

\begin{itemize}
\item
${\mathcal{R}}^{abcd}  \nabla_{(a} \nabla_{b)} C_{c \bullet d \bullet}$:\\
${\mathcal{R}}^{abcd}$ is either $2 \otimes 2  $ or decomposable. 
\begin{itemize}
\item
The $2 \otimes 2  $ case is trivial due to 
lemma \ref{lem_ddC_tracefree}.
\item
The decomposable case: since the tensor product commutes with permutations $\sigma(\alpha,\beta)$,  ${\mathcal{R}}^{ab}_{\ \ cd}$ 
has to preserve permutations as well. It is thus a symmetric polynomial in $\delta^{(\alpha)}$'s.
${\mathcal{R}}^{ab}_{\ \ cd}$ is therefore a linear combination of $\sum_{\alpha=0}^{N-1}
(\delta^{(\alpha)})^a_{~c}(\delta^{(\alpha)})^b_{~d}$ and $\sum_{\alpha=0}^{N-1}
(\delta^{(\alpha)})^a_{~d}(\delta^{(\alpha)})^b_{~c}$, however, due to the symmetries of the expression 1, 
we can consider just the first term that leads to \eqref{GT-sumBox2} (see also \eqref{RR-D}).
\end{itemize}
The case 1 is thus either trivial or it reduces to the  condition \eqref{GT-sumBox2} following from \eqref{GT-RRnablaC}.
\item
${\mathcal{R}}^{\bullet abcde}  \nabla_{(a} \nabla_{b)} C_{\bullet cde}$:\\
${\mathcal{R}}^{abcdef}$ belongs to one of the following subcases:\\
i) $2 \otimes 2 \otimes 2 $ \\
ii) $2  \otimes 4$ \\
iii) decomposable
\begin{itemize}
\item
In the case i), the expression 2 obviously vanishes due to lemma \ref{lem_ddC_tracefree}. 
\item
 In the case ii),  
${\mathcal{R}}^{abcdef}={\mathcal{R}}^{ab}{\mathcal{R}}^{cdef}$. The free index now belongs either to ${\mathcal{R}}^{ab}$ or to 
${\mathcal{R}}^{cdef}$. The first case reduces to the case 1, while the second case vanishes by lemma \ref{lem_ddC_tracefree}.
\item
The case iii):  Since the expression 2 has b.w.\ $-2$, both free indices correspond to b.w.\ $-1$ (i.e.\ the upper index is 0 or the lower one is 1). Since  
${\mathcal{R}}^{abcdef}$ is decomposable, all indices in the expression 2 are either 1 or 0. Taking into account that the first term 
${\mathcal{R}}^{abcdef}$ is of b.w.\ 0, while the second term $\nabla_{(a} \nabla_{b)} C_{\bullet cde}$ is of b.w.\ $-2$, we find that the expression 2
reduces to ${\mathcal{R}}^{010101} \nabla_{(1} \nabla_{1)} C_{1010}$. The second term, 
$\nabla_{(1} \nabla_{1)} C_{1010}=\nabla_{(a} \nabla_{b)} C_{cdef} \ell^c n^d \ell^e n^f n^a n^b$, vanishes due
to \eqref{eqddCll}. 
\end{itemize}
Case 2 terms thus do not lead to  new conditions on the metric.
\item
${\mathcal{R}}^{\bullet \bullet abcdef}  \nabla_{(a} \nabla_{b)} C_{cdef}$:\\ 
${\mathcal{R}}^{abcdefgh}$ belongs to one of the following subcases:\\
i) $2 \otimes 2 \otimes 2 \otimes 2$ \\
ii) $2 \otimes 2 \otimes 4$ \\
iii) $ 2 \otimes 6$ \\
iv) decomposable 
\begin{itemize}
\item
In the case i), the expression 3 obviously vanishes due to lemma \ref{lem_ddC_tracefree}.  
\item
In the case ii), there is either a rank-2 tensor ${\mathcal{R}}^{ab}$ with both dummy indices, which vanishes by lemma \ref{lem_ddC_tracefree}, or ${\mathcal{R}}^{ab}$ has one free index, which reduces to the   case 2 ii).
\item
If in the case iii), ${\mathcal{R}}^{ab}$ has no free index or one free index then again the expression 3 vanishes or reduces to the case 2 iii), respectively.
If ${\mathcal{R}}^{ab}$ has two free indices then the expression 3 is ${\mathcal{R}}^{\bullet \bullet}$ times a full contraction of a b.w.\ $-2$ tensor, which is obviously zero. Note that ${\mathcal{R}}_{11}$ vanishes since $g_{11}=0$. 
\item
The case iv) is similar to the case 2 iii) and similarly we obtain a product of a b.w.\ zero component ${\mathcal{R}}^{abcdefgh}$ with indices either 1 or 0
and $\nabla_{1} \nabla_{1} C_{1010}$ that vanishes (see \eqref{eqddCll}).
\end{itemize}
Thus case 3 terms also do not lead to  new conditions on the metric.
\end{itemize}

We can thus conclude with 
\begin{prop}
\label{prop_rank2Riem_der}
The metric \eqref{GT-2blocks}, \eqref{GT-BoxH}, obeying \eqref{GT-sumBox2} is 2-universal.
\end{prop}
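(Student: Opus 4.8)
The plan is to show that for the metrics \eqref{GT-2blocks}, \eqref{GT-BoxH} obeying \eqref{GT-sumBox2}, every conserved symmetric rank-2 tensor built from the Riemann tensor and its covariant derivatives up to second order is proportional to the metric. By proposition \ref{lemma_derbalanced} (whose hypotheses hold here, as verified in section \ref{sec_GT}), all covariant derivatives $\nabla^{(k)}C$ with $k\ge 1$ have boost order $\le -2$; hence any rank-2 tensor that is quadratic or higher in such derivatives has boost order $\le -4$ and vanishes, while any rank-2 tensor linear in $\nabla^{(k)}C$ is automatically conserved by lemma \ref{lemmaconserved}. The remaining task is therefore to enumerate all rank-2 contractions linear in a single factor $\nabla_a\nabla_b C_{cdef}$ (odd-order derivatives contribute rank-1 or odd-rank objects by parity and cannot close up to a symmetric rank-2 tensor of the right b.w., and first derivatives alone contracted against Riemann factors fall into the already-treated $\nabla^{(1)}$ discussion), and to show each such tensor is either identically zero or reduces to the single scalar condition \eqref{GT-sumBox2}.

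The core of the argument is the case analysis already begun in the preceding pages: a rank-2 tensor linear in $\nabla_{(a}\nabla_{b)}C_{cdef}$ is $\mathcal R^{\cdots}$ contracted with $\nabla\nabla C$, where $\mathcal R^{\cdots}$ is a tensor of even rank $4$, $6$, or $8$ built purely from the b.w.-zero (decomposable) part $\Rabcd$ of the Riemann tensor — the negative b.w.\ pieces of $R_{abcd}$ would again push the total b.w.\ below $-2$. Because $\Rabcd$ is a symmetric polynomial in the projectors $\delta^{(\alpha)}$, so is $\mathcal R^{\cdots}$, which forces it to be a sum of tensor products of projectors symmetrized over the blocks; by \eqref{permut} one may organize these by how the free indices and the dummy indices distribute among the $2$-blocks. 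First I would dispose of every configuration in which some rank-2 sub-block $\mathcal R^{ab}$ carries both of its indices as dummies contracted into $\nabla\nabla C$: these vanish by lemma \ref{lem_ddC_tracefree} (traces of $\nabla\nabla C$ vanish). Next, configurations where a free index sits on $\mathcal R^{ab}$ reduce to a lower case (case 2 reduces to case 1, case 3 to case 2), and configurations where both free indices sit on a rank-2 block vanish because $g_{11}=0$. The genuinely decomposable $\mathcal R^{\cdots}$ with no $2$-block splitting, when the b.w.\ bookkeeping is done, always collapses to a multiple of $\mathcal R^{0101\cdots}\,\nabla_{(1}\nabla_{1)}C_{1010}$, and this vanishes by \eqref{eqddCll} (the relation $C_{abcd;ef}\ell^a\ell^c=0$, which follows because $C_{abcd;e}\ell^a\ell^c=0$ and $\bl$ is recurrent). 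The one surviving family is the fully decomposable case 1 term $\sum_\alpha(\delta^{(\alpha)})^a{}_c(\delta^{(\alpha)})^b{}_d\,\nabla_{(a}\nabla_{b)}C_{c\bullet d\bullet}$, which by the symmetries of the Weyl tensor is exactly the representative \eqref{GT-RRnablaC}, and its vanishing is equivalent to \eqref{GT-sumBox2}.

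Assembling these observations: under \eqref{GT-sumBox2} every rank-2 tensor in case 1 either is trivial or vanishes by the assumed condition, and every tensor in cases 2 and 3 is trivial or reduces to a case already handled, so no further constraint on $H$ arises. Combining with proposition \ref{prop_rank2Riem} (the $0$-universality, i.e.\ all rank-2 tensors from the Riemann tensor without derivatives are already multiples of $\bg$) and with the fact that the only surviving derivative terms are the linear-in-$\nabla^{(2)}C$ ones just shown to be multiples of the metric (in fact zero), we conclude that all conserved symmetric rank-2 tensors constructed from the Riemann tensor and its covariant derivatives up to the second order are proportional to the metric; that is exactly $2$-universality of \eqref{GT-2blocks}, \eqref{GT-BoxH}.

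I expect the main obstacle to be the bookkeeping in the ``decomposable $\mathcal R^{\cdots}$'' subcases of cases 2 and 3: one must argue carefully that b.w.\ balance forces every index of $\mathcal R^{\cdots}$ into the frame values $0$ or $1$, that the free indices necessarily carry b.w.\ $-1$, and hence that the contraction is pinned to the single scalar $\nabla_1\nabla_1 C_{1010}$ before \eqref{eqddCll} can be invoked — the permutation-symmetry constraint on $\mathcal R^{\cdots}$ is what makes this tractable, but verifying there are no exotic index distributions (e.g.\ mixing a free index with a spacelike-block dummy index) is the delicate point, and it leans essentially on the decomposability of the b.w.-zero Riemann part together with \eqref{eqddCll} and lemma \ref{lem_ddC_tracefree}.
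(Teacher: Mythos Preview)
Your proposal is correct and follows essentially the same route as the paper: you invoke proposition~\ref{lemma_derbalanced} and lemma~\ref{lemmaconserved} to reduce to rank-2 tensors linear in a single $\nabla^{(2)}C$, then run the same three-case analysis ($\mathcal{R}$ of rank $4$, $6$, $8$) using lemma~\ref{lem_ddC_tracefree}, the relation~\eqref{eqddCll}, and the permutation symmetry of $\Rabcd$, arriving at~\eqref{GT-sumBox2} as the sole nontrivial condition. The only points you gloss over that the paper spells out are the reduction of the antisymmetric part $\nabla_{[a}\nabla_{b]}C$ via the commutator~\eqref{commut_der} to derivative-free terms, and the Bianchi/commutator argument (the paper's footnote) showing that free indices on the $\nabla$ slots of $\nabla\nabla C$ can always be moved onto the $C$ slots, justifying the restriction to the three canonical placements.
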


Thus in addition to the theories mentioned at the end of  section \ref{sec_rank2riemann}, the metric \eqref{GT-2blocks}, \eqref{GT-BoxH} obeying \eqref{GT-sumBox2}
 also solves the vacuum equations of e.g.\ all $L($Riemann$)$ gravities.


\subsubsection{Rank-2 tensors constructed from the Riemann tensor and its 4th and higher covariant derivatives}

Now, let us  proceed with studying rank-2 tensors constructed from the Riemann tensor and its covariant derivatives of the 4th and higher order.
As in the section \ref{sec_rank2-2ndorder} , it is sufficient to study only terms which are linear in  covariant derivatives of the Riemann tensor 
that are conserved by  lemma \ref{lemmaconserved}. 

For higher-order derivatives, we will have more possibilities of $\mathcal{R}^{ab...cd}$ constructed from tensor products of the b.w.\ 0 components of the Riemann tensor, eq.\ (\ref{RiemannGT}). Let us consider a general case where $N$ blocks are of a dimension $n_0$. Then, using eq.\ (\ref{RiemannGT})  as well as the permutation symmetry generated by eq.\ (\ref{permut}), the tensor  $\mathcal{R}^{ab...cd}$ should be invariant under this permutation symmetry. A classical result from invariant theory states that invariant polynomials in $N$ variables are generated by the $N$ power sum symmetric polynomials  $s_k(x_1,...,x_N)=(x_1)^k+...+(x_N)^k$, $k=1...N$. We can use this to construct tensors invariant under the permutation symmetry (\ref{permut})
\be
D_k=\left(\delta^{(0)}\right)^{\otimes k}\oplus ...\oplus \left(\delta^{(N-1)}\right)^{\otimes k},
\ee
or in the component form: 
\be
{D_k}^{a_1...a_k}_{\phantom{b_1...b_k}b_1...b_k}\!=\! \sum_{\alpha=0}^{N-1}\left(\delta^{(\alpha)}\right)^{a_1}_{~b_1}\cdots \left(\delta^{(\alpha)}\right)^{a_k}_{~b_k}
\!=\! \left(\delta^{(0)}\right)^{a_1}_{~b_1}\cdots \left(\delta^{(0)}\right)^{a_k}_{~b_k}+...+\left(\delta^{(N-1)}\right)^{a_1}_{~b_1}\cdots \left(\delta^{(N-1)}\right)^{a_k}_{~b_k}.
\ee
To see that these tensors are actually constructable from eq.\ (\ref{RiemannGT}), we first note that the trace of eq.\ (\ref{RiemannGT}) gives the Ricci tensor (there is no loss of generality to set $\Lambda=1$)
\be
R^a_{~b}={D_1}^{a}_{\phantom{b}b}.
\ee
Next, define ${\Delta^{(\alpha)}}^{ab}_{~~cd}=2\left(\delta^{(\alpha)}\right)^{a}_{~[c} \left(\delta^{(\alpha)}\right)^{b}_{~d]}$. Then
\[ {\Delta^{(\alpha)}}^{ab}_{~~cd}{\Delta^{(\beta)}}^{ce}_{~~fg}=0 ~~(\alpha\neq \beta),\] 
while if $\alpha=\beta$ and doing the double contraction:
 \[ {\Delta^{(\alpha)}}^{ab}_{~~cd}{\Delta^{(\alpha)}}^{ce}_{~~ag}=(n_0-1)\left(\delta^{(\alpha)}\right)^{b}_{~d} 
\left(\delta^{(\alpha)}\right)^{e}_{~g}-{\Delta^{(\alpha)}}^{be}_{~~dg}. \]
Therefore,  
\beq
\Rabcd\Rceag&=&\left(\sum_{\alpha=0}^{N-1}{\Delta^{(\alpha)}}^{ab}_{~~cd}\right)
\left(\sum_{\beta=0}^{N-1}{\Delta^{(\beta)}}^{ce}_{~~ag}\right)
=\left(\sum_{\alpha=0}^{N-1}{\Delta^{(\alpha)}}^{ab}_{~~cd}{\Delta^{(\alpha)}}^{ce}_{~~ag}\right)\nonumber \\
&=& (n_0-1)\sum_{\alpha=0}^{N-1}\left(\delta^{(\alpha)}\right)^{b}_{~d} 
\left(\delta^{(\alpha)}\right)^{e}_{~g}-\sum_{\alpha=0}^{N-1}{\Delta^{(\alpha)}}^{be}_{~~dg}=(n_0-1){D_2}^{be}_{\phantom{be}dg}-\Rbedg.
\eeq
Thus
\beq
{D_2}^{be}_{\phantom{be}dg}=\frac{1}{n_0-1}\left(\Rabcd\Rceag+\Rbedg\right).\label{RR-D}
\eeq
Once we have constructed $D_2$, we can constuct the rest of the symmetric tensors $D_k$ iteratively by noting that 
\beq
{D_k}^{a_1...a_k}_{\phantom{b_1...b_k}b_1...b_k}{D_2}^{b_kc}_{\phantom{b_kc}de}={D_{k+1}}^{a_1...a_kc}_{\phantom{b_1...b_ke}b_1...b_{k-1}de}. 
\eeq
 These tensors will give new necessary conditions for the space to be universal (via expressions like \eqref{boxboxC}). 

Let us consider the case where $n_0=2$ and the 4th order derivatives. We proceed similarly as in section \ref{sec_rank2-2ndorder} and it turns out that everything reduces to the following  cases:
\begin{itemize}
\item
The case 1.
$
{\mathcal{R}}^{abcdef}  \nabla_{a} \nabla_{b} \nabla_{c}\nabla_{d}C_{e \bullet f \bullet}
$: The tensor $D_3$ defined above, gives an additional possibility where 
\be
 {D_3}^{ace}_{\phantom{ace}bdf}\nabla_{a} \nabla^{b} \nabla_{c}\nabla^{d}C^{\phantom{e\bullet}f}_{e \bullet ~ \bullet}, \label{boxboxC}
\ee
giving the requirement
\BE
\left[\sum_{\alpha=0}^{N-1} (\Box^{(\alpha)})^3 \right] H=0.
\EE
\item
The case 2.
$
{\mathcal{R}}^{\bullet abcdefg}  \nabla_{a} \nabla_{b}\nabla_{c} \nabla_{d} \ C_{\bullet efg}
$
\begin{itemize}
\item
which for ${\mathcal{R}}^{\bullet abcdefg} $ decomposable, leads to a term proportional to $C_{1010;1011}$
\item
for a tensor product of 2 rank-4 decomposable tensors $4\otimes 4$, leads to  terms like \\
${\mathcal{R}}_{0101}  {\mathcal{R}}^{ijkl}   \ C_{1 i1j;01kl}$
\end{itemize}
\item 
The case 3.
$
{\mathcal{R}}^{\bullet \bullet abcdefgh}  \nabla_{a} \nabla_{b}\nabla_{c}\nabla_{d} C_{efgh}$ - this case is either trivial or reduces to the case 1 or 2.
\end{itemize}

\vspace{5mm}

We note that for higher derivatives, we will get the additional tensors $D_k$ giving additional conditions 
$\left[\sum_{\alpha=0}^{N-1} (\Box^{(\alpha)})^P \right] H=0 $, for some integer $P$. 
In our case, we notice that the symmetric polynomials give rise to symmetric polynomials in the (2nd order differential) operators $\Box^{(\alpha)}$. These operators commute: $[\Box^{(\alpha)},\Box^{(\beta)}]=0$ for the general Khlebnikov--Ghanam--Thompson space. The b.w. 0 component of the Riemann tensor is invariant under the permutation symmetry and thus we expect a rank-2 tensor to be invariant under this permutation symmetry as well. Higher-order derivatives will then give us expressions which are `symmetric polynomials' in the operators $\Box^{(\alpha)}$. Since the operators commute  any symmetric polynomial in the operators $\Box^{(\alpha)}$ can be generated by the $N$ power sum symmetric polynomials $s_k(\Box^{(0)},..., \Box^{(N-1)})$, $k=1,\dots, N$. Consequently, we do not expect any additional conditions then those of the form 
$\left[\sum_{\alpha=0}^{N-1} (\Box^{(\alpha)})^P \right] H=0$. {Indeed, since any symmetric polynomial in the operators $\Box^{(\alpha)}$  is also finitely generated, we can assume $P=1,\dots,N$.} Therefore, we conjecture

\begin{con} \label{con:symmetricCond}
Metrics \eqref{GT-2blocks}, \eqref{GT-BoxH} obeying the following set of $N$ equations
\BE
\left[\sum_{\alpha=0}^{N-1} (\Box^{(\alpha)})^P \right] H=0, \label{GT-sumBoxP}
\EE
where $P=1\dots N$,
are universal.
\end{con}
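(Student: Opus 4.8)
\textbf{Proof proposal for Conjecture \ref{con:symmetricCond}.}

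The plan is to show that, under the assumption that $H$ satisfies the $N$ equations \eqref{GT-sumBoxP} for $P=1,\dots,N$, every conserved symmetric rank-$2$ tensor $T_{ab}$ constructed from the Riemann tensor and its covariant derivatives of arbitrary order is proportional to the metric. By Proposition \ref{lemma_derbalanced} (whose hypotheses hold for the metrics \eqref{GT-2blocks}, \eqref{GT-BoxH}, as already verified in section \ref{sec_GT}) all covariant derivatives $\nabla^{(k)}C$ with $k\geq 1$ are of boost order $\leq -2$; hence by Lemma \ref{lemmaconserved}(i) any term quadratic or higher in $\nabla^{(k)}C$ vanishes, and it suffices to treat tensors that are \emph{linear} in a single covariant derivative $\nabla^{(2m)}C$ (odd orders cannot contribute a non-vanishing scalar, since every such contraction would involve an odd number of derivatives acting on the decomposable b.w.-$0$ Riemann piece, which produces zero), multiplied by a polynomial $\mathcal{R}^{\cdots}$ in the b.w.-$0$ part \eqref{RiemannGT} of the Riemann tensor. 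The main structural input, to be established as the first step, is that \emph{any} such polynomial tensor $\mathcal{R}^{\cdots}$ is, up to tensor-product factors each proportional to the metric, a linear combination of the permutation-invariant tensors $D_k$ together with the projectors $\delta^{(\alpha)}$ summed symmetrically; this follows from the discrete symmetry \eqref{permut}, the fact that $\mathcal{R}^{\cdots}$ must commute with all $\sigma(\alpha,\beta)$, and the classical generation of symmetric polynomials in $N$ variables by the power sums $s_1,\dots,s_N$, exactly as in the discussion preceding the conjecture; the identity \eqref{RR-D} and its iteration show each $D_k$ is genuinely built from the Riemann tensor.

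The second step is a boost-weight bookkeeping argument generalizing sections \ref{sec_rank2riemann} and \ref{sec_rank2-2ndorder}. A non-vanishing rank-$2$ contraction of $\mathcal{R}^{\cdots}$ with $\nabla^{(2m)}C$ must have total b.w.\ $\geq -2$; since $\nabla^{(2m)}C$ has b.w.\ $\leq -2$ and $\mathcal{R}^{\cdots}$ has b.w.\ $0$, the contracted tensor is forced to sit entirely at b.w.\ $-2$, with every free and contracted index carrying a definite b.w.\ constrained accordingly (upper indices $0$, lower indices $1$ on the Weyl factor, and complementary choices on $\mathcal{R}$). One then shows, case by case on the tensor-product structure of $\mathcal{R}^{\cdots}$ ($2\otimes 2\otimes\cdots$, $2\otimes\cdots\otimes 2\otimes D_k$, or fully ``symmetric'' $D_k$-type), that every such term either vanishes identically --- because $C_{abcd}$ is traceless (Lemma \ref{lem_ddC_tracefree} and its obvious higher-order analogue, whose proof only uses Proposition \ref{lemma_derbalanced} and Proposition \ref{prop_rank2Riem}), because $g_{11}=0$, or because of the recurrence-driven identity $C_{abcd;e_1\dots e_{2m}}\ell^a\ell^c=0$ (the iteration of \eqref{eqddCll}, valid since $\bl$ is recurrent) --- or else collapses to a single ``diagonal'' scalar of the form $\bigl[\sum_{\alpha}(\Box^{(\alpha)})^{k}\bigr]H$ times a tensor proportional to $g_{ab}$, coming from a $D_k$ factor fully contracted against $\nabla^{(2m)}C$ as in \eqref{boxboxC}. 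Because the operators $\Box^{(\alpha)}$ pairwise commute for the Khlebnikov--Ghanam--Thompson metrics, \emph{every} symmetric polynomial in them is a polynomial in the power sums $\sum_\alpha(\Box^{(\alpha)})^P$, $P=1,\dots,N$; hypothesis \eqref{GT-sumBoxP} then kills all of these, so every surviving scalar is zero. Hence all the ``new'' conserved tensors vanish, while those built from the b.w.-$0$ Riemann piece alone are proportional to $g_{ab}$ by Proposition \ref{prop_rank2Riem}, and universality follows.

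The hard part is making the case analysis of the second step genuinely exhaustive at arbitrary derivative order $2m$: the number of inequivalent index placements and tensor-product patterns for $\mathcal{R}^{\cdots}$ grows with $m$ and $N$, and one must argue uniformly (ideally by an induction on $m$ using the Bianchi identities and the commutator formula \eqref{commut_der} to reduce the number of free/contracted indices sitting on the Weyl factor, exactly as footnoted after the list of cases 1--3 in section \ref{sec_rank2-2ndorder}) that no contraction pattern escapes the trichotomy ``traceless-kill / $g_{11}=0$-kill / $\ell^a\ell^c$-kill / diagonal $\Box$-polynomial.'' A secondary subtlety is verifying that the diagonal scalars that actually arise exhaust --- or at least generate, as a ring --- the full list $\{\sum_\alpha(\Box^{(\alpha)})^P : P=1,\dots,N\}$, so that \eqref{GT-sumBoxP} is not merely necessary but sufficient; this rests on the commutativity $[\Box^{(\alpha)},\Box^{(\beta)}]=0$ and finite generation of the symmetric algebra, and is the point at which the restriction to $P\le N$ is justified. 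I expect the trace-removal and recurrence identities to dispatch all but the $D_k$-diagonal terms routinely, so the real work is the combinatorial completeness of the reduction together with this generation claim.
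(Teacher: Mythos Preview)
The statement you are trying to prove is labelled \emph{Conjecture} in the paper, not Theorem or Proposition: the authors do not give a proof. What precedes the conjecture is only a heuristic argument --- they verify $0$-universality (Proposition~\ref{prop_rank2Riem}) and $2$-universality (Proposition~\ref{prop_rank2Riem_der}) by explicit case analysis, sketch the $4$th-order case, observe that the only ``new'' conditions appearing take the form $\bigl[\sum_\alpha(\Box^{(\alpha)})^P\bigr]H=0$, invoke commutativity of the $\Box^{(\alpha)}$ and finite generation of symmetric polynomials to argue that $P=1,\dots,N$ should suffice, and then write ``Therefore, we conjecture\ldots''. There is no proof in the paper for you to compare against.

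Your proposal faithfully reproduces this heuristic and is honest about where it falls short. The gap you name in your final paragraph --- making the reduction of all contraction patterns of $\mathcal{R}^{\cdots}\cdot\nabla^{(2m)}C$ to the trichotomy ``trace-kill / $g_{11}=0$-kill / $\ell^a\ell^c$-kill / diagonal $\Box$-polynomial'' genuinely exhaustive for arbitrary $m$ --- is precisely the obstruction that keeps this a conjecture rather than a theorem. The paper's explicit case analyses at orders $0$, $2$ and (partially) $4$ already show a proliferation of subcases, and no uniform inductive mechanism is supplied; your suggestion to induct on $m$ via Bianchi identities and the commutator \eqref{commut_der} is plausible but is not carried out, either by you or by the authors. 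In particular, one must rule out contributions from mixed terms where some derivative indices land in one block and others in a different block, and where the resulting differential operator on $H$ is a \emph{product} of different $\Box^{(\alpha)}$'s rather than a pure power --- such cross-terms are handled by the symmetric-polynomial argument only once one has shown that the full tensorial contraction respects the permutation symmetry at the level of operators, which requires the exhaustive bookkeeping you flag as missing. Until that is done, the argument remains a proof sketch, in agreement with the paper's own assessment.
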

 We note that all these conditions are necessary for all $P=1,\dots,N$, which can be easily seen by assuming  
$H$ to be a simultaneous eigenvector for all $\Box^{(\alpha)}$: $\Box^{(\alpha)}H=\lambda_{\alpha}H$. 



\subsection{A generalization of the Khlebnikov--Ghanam--Thompson metric consisting of 3-blocks }
\label{sec_GT_3}

Let us study a higher-dimensional generalization of the  Khlebnikov--Ghanam--Thompson metric consisting of  $N$ 3-blocks (see appendix \ref{app_GT_3} for 
the Ricci rotation coefficients and components of the Weyl tensor)
\begin{equation}
  \d s^2 = 2 \d u \d v + H(u, z, x_\alpha, y_\alpha, z_\alpha) \d u^2 + 2 \frac{2v}{z} \d u \d z
  - \frac{2}{\lambda z^2} \d z^2 - \frac{2}{\lambda} \sum_{\alpha=1}^{N-1} \left[\d x_\alpha^2 + sh^2_\alpha \, (\d y_\alpha^2 + s^2_\alpha \d z_\alpha^2)\right],
  \label{appGT3:GTmetric}
\end{equation}
where, in this form, only negative $\lambda$ is allowed and  $s_\alpha = \sin (y_\alpha)$,  $sh_\alpha = \sinh (x_\alpha)$.

The metric \eqref{appGT3:GTmetric}   
is a Kundt metric of the type II, where now,  in contrast with the 2-block case studied in   section \ref{sec_GT},  the mWAND $\bl$ is not recurrent.
 The metric \eqref{appGT3:GTmetric}   
is Einstein iff $H$ obeys
\be
\Box H-2\lambda zH_{,z}=0\label{GT-BoxH_3}
\ee
and then it { satisfies all the assumptions of proposition \ref{lemma_derbalanced} (see appendix \ref{app_GT_3}) and thus }
all covariant derivatives of the Riemann (and Weyl) tensor are of the  boost order $\leq -2$.

Now, let us study conserved symmetric rank-2 tensors constructed from the Riemann tensor. 
The b.w.\ zero part of the Riemann tensor is decomposable as in the 2-block case \eqref{RiemannGT}. 
In order to show that such rank-2 tensors are proportional to the metric, one can essentially  repeat the proof of  section \ref{sec_rank2riemann} except for terms like 
\be
{\mathcal{R}}^{abc}_{\phantom{abc} 1} \C_{ abc  1}, \  
\ee
appearing e.g.\ in \eqref{case2}, that now  do not vanish  due to the decomposability of ${\mathcal{R}}^{abcd}$, since in principle terms like 
${\mathcal{R}}^{212}_{\phantom{abc} 1} \C_{ 2121}={\mathcal{R}}^{212}_{\phantom{abc} 1} {\Omega'}_{22}$ might appear. For the metric 
\eqref{appGT3:GTmetric}, \eqref{GT-BoxH_3},   
${\Omega'}_{22}$ vanishes iff
\be
\sum_{\alpha=1}^{N-1} \Box^{(\alpha)} H=0.\label{GT-sumBox3}
\ee

Rescaling $H$
\be
H=\frac{{\tilde H}}{z^2},
\ee
the Einstein equation reads
\be
\sum_{\alpha=0}^{N-1}  \Box^{(\alpha)} {\tilde H}=0\label{GT-Ein-3}
\ee
and together with \eqref{GT-sumBox3} gives
\be
\Box^{(0)} {\tilde H}=-\frac{\lambda z}{2}\left(z{\tilde H},_{zz}-{\tilde H},_z\right) =0,
\ee
which is satisfied for
\be
{\tilde H}=h_1(u, x_\alpha, y_\alpha, z_\alpha)+h_0 (u,x_\alpha, y_\alpha, z_\alpha) z^2, 
\ \ H=h_0 +\frac{h_1}{z^2}.\label{solH}
\ee

We thus arrive at

\begin{prop}
\label{prop_rank2Riem_GT3}
The metric \eqref{appGT3:GTmetric},  
obeying 
\eqref{GT-sumBox3} and \eqref{solH},
is 0-universal.
\end{prop}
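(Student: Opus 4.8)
The plan is to follow essentially the same strategy that was used in section \ref{sec_rank2riemann} for the 2-block case, namely to show that every symmetric rank-2 tensor $T_{ab}$ constructed from the Riemann tensor alone is proportional to the metric. The starting point is that the b.w.\ zero part of the Riemann tensor is decomposable and proportional to $\Lambda$ times a sum of projector products (the analog of \eqref{RiemannGT} for the metric \eqref{appGT3:GTmetric}), so any rank-2 tensor built only from $\Rabcd$ is decomposable and hence proportional to the metric. It then remains to handle the terms that involve the negative b.w.\ components, and here one uses that for the metric \eqref{appGT3:GTmetric} the only surviving negative b.w.\ curvature components after imposing the Einstein condition are b.w.\ $-2$ (this should be read off from the spin coefficients and curvature components collected in appendix \ref{app_GT_3}, using that $\bl$, although not recurrent here, is still an mWAND). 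Since a non-vanishing rank-2 tensor has all components of b.w.\ $\geq -2$, at most one factor can carry a b.w.\ $-2$ Weyl component, so the cases to consider are exactly \eqref{SC1}, \eqref{SC2} and \eqref{SC3} with the appropriate rank-2, 4 and 6 tensors ${\mathcal{R}}^{\dots}$ built from $\Rabcd$.

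The argument for cases \eqref{SC1} and \eqref{SC3} goes through verbatim as in section \ref{sec_rank2riemann}: in \eqref{SC1} the factor ${\mathcal{R}}^{ab}$ is proportional to the metric and $\C_{abcd}$ is traceless, so the term vanishes; in \eqref{SC3} one splits ${\mathcal{R}}^{abcdef}$ into the decomposable, $2\otimes 2\otimes 2$ and $2\otimes 4$ possibilities and kills each either by tracelessness of $\C$, by decomposability (terms like ${\mathcal{R}}^{0i1j}$ vanish) or by $g_{11}=0$. The difference, as flagged in the text preceding the proposition, is case \eqref{SC2}, specifically the rank-4 decomposable subcase leading to \eqref{case2}: in the 2-block case $\C_{abc1}$ could be non-zero only for $c$ belonging to a spacelike block and then decomposability of ${\mathcal{R}}^{abc}{}_{1}$ finished the argument, but in the 3-block case terms such as ${\mathcal{R}}^{212}{}_{1}\C_{2121}={\mathcal{R}}^{212}{}_{1}{\Omega'}_{22}$ survive. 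The key extra input is therefore to compute ${\Omega'}_{22}$ for the metric \eqref{appGT3:GTmetric}, \eqref{GT-BoxH_3} (from appendix \ref{app_GT_3}), observe that it is proportional to $\sum_{\alpha=1}^{N-1}\Box^{(\alpha)}H$, and hence that it vanishes precisely under condition \eqref{GT-sumBox3}. Once ${\Omega'}_{22}=0$ is imposed, all remaining negative b.w.\ Weyl components of the form entering \eqref{case2} vanish, and case \eqref{SC2} is disposed of exactly as before.

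Finally one records that, after the rescaling $H=\tilde H/z^2$, the Einstein equation \eqref{GT-BoxH_3} becomes \eqref{GT-Ein-3}, and combining it with \eqref{GT-sumBox3} forces $\Box^{(0)}\tilde H=-\tfrac{\lambda z}{2}(z\tilde H_{,zz}-\tilde H_{,z})=0$, whose general solution is \eqref{solH}. So the hypotheses \eqref{GT-sumBox3} together with \eqref{solH} are exactly what is needed for the curvature to have the structure that makes the case analysis close. Assembling the three cases yields that every symmetric rank-2 tensor built from the Riemann tensor is proportional to the metric, i.e.\ the metric is $0$-universal, which is the assertion of Proposition \ref{prop_rank2Riem_GT3}.

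The main obstacle I expect is precisely the case \eqref{SC2}/\eqref{case2} bookkeeping: in the 3-block setting one must be careful that decomposability of ${\mathcal{R}}^{abc}{}_{1}$ no longer kills the term on its own, so one has to enumerate which index assignments can give a non-zero $\C_{abc1}$ within a single 3-dimensional block, identify all of them as (traces of) ${\Omega'}_{ij}$-type components, and verify that imposing ${\Omega'}_{22}=0$ — equivalently \eqref{GT-sumBox3} — really suppresses every one of them. Checking that this single scalar condition controls all the problematic components (rather than needing several independent conditions) is the delicate point; everything else is a routine repetition of the 2-block argument together with solving the resulting ODE in $z$ to get \eqref{solH}.
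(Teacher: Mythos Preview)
Your proposal is correct and follows the same approach as the paper. Regarding your stated concern: the only b.w.\ $-2$ Weyl frame component with all indices in the Lorentzian 3-block is $C_{1212}=\Omega'_{22}$, so the single condition \eqref{GT-sumBox3} indeed controls every problematic term---including the decomposable subcase of \eqref{SC3}, which (as the 2-block proof already indicates by referring back to \eqref{case2}) reduces to the same ${\mathcal R}^{\dots}\,\Omega'_{22}$ expression rather than going through strictly verbatim.
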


Thus, similarly as in section \ref{sec_GT}, these metrics   solve the vacuum equations
of e.g.\ Lovelock gravity or quadratic gravity.

We expect that a modification of conjecture \ref{con:symmetricCond} is valid also for the metric \eqref{appGT3:GTmetric}.
\begin{con} \label{con:symmetricCond-3block}
The higher-dimensional generalization of the Khlebnikov--Ghanam--Thompson metric constructed from $N$ $3$-dimensional blocks \eqref{appGT3:GTmetric},   
satisfying   
$H,_z=0$  
and  the following set of $N$ equations
\BE
\left[\sum_{\alpha=1}^{N-1} (\Box^{(\alpha)})^P \right] H=0, \label{GT-sumBoxP-gen}
\EE
where $P=1\dots N$, is universal.
\end{con}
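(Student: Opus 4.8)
The plan is to transplant, step by step, the argument that established $0$- and $2$-universality for the $2$-block metrics (Sections \ref{sec_rank2riemann}--\ref{sec_rank2-2ndorder}) and the heuristic behind Conjecture \ref{con:symmetricCond}, adapting each piece to the $3$-block geometry \eqref{appGT3:GTmetric}. First I would note that under $H_{,z}=0$ together with the $P=1$ member of \eqref{GT-sumBoxP-gen} — which, via \eqref{GT-sumBox3} and \eqref{solH}, simultaneously encodes the Einstein condition and the vanishing of $\Omega'_{22}$ — the metric satisfies all hypotheses of Proposition \ref{lemma_derbalanced} (cf.\ Appendix \ref{app_GT_3}), so every covariant derivative $\nabla^{(k)}C$ with $k\ge 1$ has boost order $\le -2$. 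By Lemma \ref{lemmaconserved} it then suffices to treat two families of conserved symmetric rank-$2$ tensors: those built from the Riemann tensor alone, which are proportional to the metric by Proposition \ref{prop_rank2Riem_GT3} (its $H_{,z}=0$ specialisation), and those linear in a single even-order derivative $\nabla^{(2m)}C$, which must be shown to vanish.

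For the derivative terms I would write each such tensor schematically as $\mathcal{R}^{\cdots}\,\nabla^{(2m)}C_{\cdots}$, where $\mathcal{R}$ is assembled from tensor products of the boost-weight-$0$ part of the Riemann tensor \eqref{RiemannGT}. Since that part is invariant under the permutation group generated by \eqref{permut}, and a conserved rank-$2$ tensor must inherit this invariance, $\mathcal{R}$ is a symmetric polynomial in the projection operators $\delta^{(\alpha)}$; by the fundamental theorem on symmetric functions it is generated by the power-sum tensors $D_1,\dots,D_N$ built as in \eqref{RR-D}. Contracting a product of the $D_k$ against $\nabla^{(2m)}C$, and using that the block Laplacians $\Box^{(\alpha)}$ commute, I expect each surviving contraction to collapse — after repeated use of the Bianchi identities, the commutator formula \eqref{commut_der}, tracelessness of the Weyl tensor, and the boost-order bookkeeping of Proposition \ref{lemma_derbalanced} to discard the lower-weight pieces — to a symmetric polynomial in the $\Box^{(\alpha)}$ acting on $H$, i.e.\ a combination of $\bigl[\sum_{\alpha}(\Box^{(\alpha)})^P\bigr]H$ with $1\le P\le N$, exactly as in \eqref{boxboxC}. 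Imposing \eqref{GT-sumBoxP-gen} for all these $P$ then annihilates every derivative term and universality follows. It is natural to first carry this out for $\nabla^{(2)}C$ and $\nabla^{(4)}C$ in full detail — obtaining finite-order analogues of Propositions \ref{prop_rank2Riem_der} — and only then organise the general order by the power-sum structure.

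The main obstacle is two-fold, and it is why the statement is left as a conjecture. First, one must rigorously classify the index contractions of $D_k$-type tensors against arbitrarily high covariant derivatives of the Weyl tensor and prove that every ``non-power-sum'' piece vanishes; in the $2$-block case this rested on Lemma \ref{lem_ddC_tracefree} and on the identity $C_{abcd;ef}\ell^a\ell^c=0$ in \eqref{eqddCll}, but the latter was derived using that $\bl$ is recurrent, which \emph{fails} here ($\tau_i\ne 0$). A replacement identity must be produced by carefully propagating the spin-coefficient relations of Appendix \ref{app_GT_3} to all derivative orders, controlling the extra terms generated by $\tau_i$. Second, the operator $\Box^{(0)}$ of the $3$-block metric is not a flat block Laplacian: it involves the coordinate $z$ and the off-diagonal term $2(2v/z)\,\d u\,\d z$, and one has to check that, restricted to the scalars that actually arise (which are $z$-independent once $H_{,z}=0$), it enters the symmetric-polynomial reduction on the same footing as $\Box^{(1)},\dots,\Box^{(N-1)}$, so that the conditions genuinely produced are those of \eqref{GT-sumBoxP-gen} and no further ones. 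I expect these combinatorial and spin-coefficient bookkeeping steps, rather than any single conceptual point, to be the crux.
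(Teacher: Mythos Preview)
The statement you are asked to prove is labelled \texttt{con} in the paper, i.e.\ it is a \emph{Conjecture}: the authors do not supply a proof, and the text immediately following it only remarks that analogous conjectures should hold for general $n_0$-blocks. The only rigorous result the paper establishes for the $3$-block metric is Proposition~\ref{prop_rank2Riem_GT3} ($0$-universality under \eqref{GT-sumBox3} and \eqref{solH}); no $2$-universality or higher is proved in the $3$-block case. So there is no ``paper's own proof'' to compare your proposal against.

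That said, your outline is a faithful adaptation of the $2$-block strategy, and you have correctly isolated the two places where the argument of Sections~\ref{sec_rank2riemann}--\ref{sec_rank2-2ndorder} does not carry over verbatim: (i) the recurrence-based identity \eqref{eqddCll} fails because $\tau_i\neq 0$ for the $3$-block metric (cf.\ \eqref{Ricii_coeff_3b}), so the ``decomposable'' subcases in the classification of $\mathcal{R}^{\cdots}\nabla^{(2m)}C_{\cdots}$ need a replacement mechanism; and (ii) the Lorentzian block now has a spacelike direction, so the role of $\Box^{(0)}$ in the symmetric-polynomial reduction must be handled explicitly (the paper sidesteps this by imposing $H_{,z}=0$, which makes $\Box^{(0)}H=0$ trivially, but one still has to check that the contractions generating the power sums do not see the Lorentzian block differently from the Riemannian ones). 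These are precisely the reasons the authors leave the statement as a conjecture; your proposal is an honest roadmap, not a proof, and you have correctly flagged it as such.
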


Obviously, in a similar way, one could also study a generalization of the Khlebnikov--Ghanam--Thompson metric constructed from 
an appropriate Kundt form of a maximally
symmetric $n_0$-dimensional Lorentzian space and $N-1$ $n_0$-dimensional maximally symmetric Riemannian spaces with $n_0>3$.
Indeed, we expect  a modification of conjecture \ref{con:symmetricCond-3block} 
to be valid for  Khlebnikov--Ghanam--Thompson metrics constructed from an arbitrary number of $n_0$-blocks, where $H$ does not depend on $v$ and spacelike coordinates of the Lorentzian block (then $\Box^{(0)}H=0$).
One can find examples of such spaces e.g.\ by choosing $H=\sum_{\alpha=1}^{N-1} H^{(\alpha)}$, where each $H^{(\alpha)} $
is harmonic on their respective piece, $\Box^{(\alpha)} H^{(\alpha)}=0$, and does not depend on the other coordinates.

\section*{Acknowledgments}
V.P. and A.P. acknowledge support from research plan RVO: 67985840 and research
grant GA\v CR 13-10042S. 

\appendix

\section{Weyl tensor of the product manifolds}
\label{Sec_Weylprod}
Let us consider an $n$-dimensional manifold $M$ constructed as a direct product of $N$ 
$n_\alpha$-dimensional maximally symmetric manifolds $M_\alpha$, $\alpha=0,\dots,N-1$,  $n=n_0+n_1+\dots+n_{N-1}$.
In order to derive frame components of  the Weyl tensor, let us start with $N=2$, i.e. first, let us consider an $n=n_0+n_1$ dimensional Lorentzian manifold $M$, which is a direct product of two maximally symmetric manifolds $M_0$ and $M_1$. $M_0$ is $n_0$-dimensional and Lorentzian with the Ricci scalar $R_0$.
$M_1$ is $n_1$-dimensional and Riemannian with the Ricci scalar $R_1$. As in \cite{PraPraOrt07}, we adapt a frame to the natural product structure with 
\be
g_{ab} = 2 \ell_{(a} n_{b)} +  \delta_{i_0 j_0} m^{(i_0)}_a m^{(j_0)}_b 
+ \delta_{i_1 j_1} m^{(i_1)}_a m^{(j_1)}_b ,
 \ee
where $i_0,j_0=2\dots n_0-1$ and $i_1,j_1= n_0 \dots n-1$.

First, let us observe that $M$ is Einstein iff \eqref{Einstcond} holds. 
Even though  $M_0$ and $M_1$ are conformally flat, $M$ in general is not, since b.w. zero terms of the Weyl tensor are non-trivial.
Using results of \cite{PraPraOrt07} (already assuming  that  \eqref{Einstcond} holds),  we obtain  the following non-vanishing components of the Weyl tensor
\beq
\Phi&=&-\frac{n_1 R_0 }{n_0(n_0-1)(n-1)},\nonumber\\
\Phi_{i_0 j_0}&=&\frac{n_1 R_0 }{n_0(n_0-1)(n-1)}\delta_{i_0j_0},\nonumber\\
\Phi_{i_1j_1 }&=&-\frac{R_0 }{n_0(n-1)}\delta_{i_1j_1},\nonumber\\
\Phi_{i_0j_0k_0l_0 }&=&\frac{n_1 R_0}{n_0(n_0-1)(n-1)} 2\delta_{i_0[k_0}\delta_{l_0]j_0},\nonumber\\
\Phi_{i_1 j_1 k_1 l_1}&=&\frac{R_0}{(n_1-1)(n-1)} 2\delta_{i_1 [k_1}\delta_{l_1 ]j_1},\nonumber\\
\Phi_{i_0 j_1 k_0 l_1}&=&-\frac{R_0 }{n_0(n-1)} \delta_{i_0k_0}\delta_{j_1l_1}.
\eeq

Now, we demand that $M$ is also a vacuum solutions to the quadratic gravity, i.e.\    
$S^{(2)}_{ab} = C_{acde}{C_b}^{cde}=Kg_{ab}$ (see \eqref{QGterm}).
 Components of  $S^{(2)}_{ab}$  are
\beq
S^{(2)}_{01}&=&\frac{2n_1R_0^2}{n_0^2(n_0-1)(n-1)},\nonumber\\ 
S^{(2)}_{i_0 j_0}&=&\frac{2n_1R_0^2}{n_0^2(n_0-1)(n-1)}\delta_{i_0j_0},\nonumber \\
S^{(2)}_{j_1j_1}&=&\frac{2R_0^2}{n_0(n_1-1)(n-1)}\delta_{i_1j_1}. \label{direct2c}
\eeq
For \eqref{direct2c} to be compatible with \eqref{QGterm}, 
\be
n_1=n_0,\ \ \ R_1=R_0.
\ee

Thus, the Weyl tensor frame components read
\beq
\Phi&=&-\frac{R_0}{(n_0-1)(n-1)},\nonumber\\
\Phi_{i_0j_0}&=&\frac{R_0}{(n_0-1)(n-1)}\delta_{i_0j_0},\nonumber\\
\Phi_{i_1j_1}&=&-\frac{R_0 }{n_0(n-1)}\delta_{i_1j_1},\nonumber\\
\Phi_{i_0j_0k_0l_0}&=&\frac{R_0}{(n_0-1)(n-1)} 2\delta_{i_0[k_0}\delta_{l_0]j_0},\nonumber\\
\Phi_{i_1j_1k_1l_1}&=&\frac{R_0}{(n_0-1)(n-1)} 2\delta_{i_1[k_1}\delta_{l_1]j_1},\nonumber\\
\Phi_{i_0j_1k_0l_1}&=&-\frac{R_0 }{n_0(n-1)} \delta_{i_0k_0}\delta_{j_1l_1}.
\eeq

If for a  direct 
product of $p+1$ manifolds $M_0=[n_0,R_0],\  \dots,\  M_{p}=[n_{p},R_{p}]$,
$n_0=\dots=n_{p}$ and $R_0=\dots=R_{p}$ then the product is Einstein, satisfies 
\eqref{QGterm} and the  Weyl frame components have the form
\beq
\Phi^{(p)}&=&-\frac{pR_0}{(n_0-1)(n-1)},\nonumber\\
\Phi_{i_0j_0}^{(p)}&=&\frac{pR_0}{(n_0-1)(n-1)}\delta_{i_0j_0}, \nonumber \\ 
\Phi_{i_\alpha j_\alpha}^{(p)}&=&-\frac{R_0 }{n_0(n-1)}\delta_{i_\alpha j_\alpha},\ \ \alpha \not= 0, \nonumber \\
\Phi_{i_\alpha j_\alpha k_\alpha l_\alpha}^{(p)}&=&\frac{pR_0}{(n_0-1)(n-1)} 2\delta_{i_\alpha[k_\alpha}\delta_{l_\alpha]j_\alpha},\nonumber\\
\Phi_{i_\alpha j_\beta k_\alpha l_\beta}^{(p)}&=&-\frac{R_0 }{n_0(n-1)} \delta_{i_\alpha k_\alpha}\delta_{j_\beta l_\beta},\ \ \alpha\not= \beta .
\label{PhiIJprodkplus}
\eeq
 Let us prove this by the mathematical induction.

We assume that we have $p$ manifolds $M_0=[n_0,R_0],\  \dots,\  M_{p-1}=[n_{p-1},R_{p-1}]$ with $n_0=\dots=n_{p-1}$, $R_0=\dots=R_{p-1}$ 
and $n=pn_0$ with
\beq
\Phi^{(p-1)}&=&-\frac{(p-1)R_0}{(n_0-1)(n-1)},\nonumber\\
\Phi_{i_0j_0}^{(p-1)}&=&\frac{(p-1)R_0}{(n_0-1)(n-1)}\delta_{i_0j_0}, \nonumber\\ 
\Phi_{i_\alpha j_\alpha}^{(p-1)}&=&-\frac{R_0 }{n_0(n-1)}\delta_{i_\alpha j_\alpha},\ \ \alpha \not=0  \nonumber \\
\Phi_{i_\alpha j_\alpha k_\alpha l_\alpha}^{(p-1)}&=&\frac{(p-1)R_0}{(n_0-1)(n-1)} 2\delta_{i_\alpha[k_\alpha}\delta_{l_\alpha]j_\alpha},\nonumber\\
\Phi_{i_\alpha j_\beta k_\alpha l_\beta}^{(p-1)}&=&-\frac{R_0 }{n_0(n-1)} \delta_{i_\alpha k_\alpha}\delta_{j_\beta l_\beta},\ 
\ \alpha\not= \beta ,\label{PhiIJprodk}
\eeq
where $\alpha, \beta = 0,\dots, p-1$ (if not stated otherwise)
and we add $M_{p}=[n_{p},R_{p}]$ with frame indices denoted by $i_{p}$ 
\beq
\Phi^{(p)}&=&-\frac{R_0 [(p-1)n_0+n_{p}]}{n_0(n_0-1)(n-1)},\nonumber\\
\Phi_{i_0j_0}^{(p)}&=&\frac{R_0 [(p-1)n_0+n_{p}]}{n_0(n_0-1)(n-1)}\delta_{i_0j_0},\nonumber\\
\Phi_{i_\alpha j_\alpha}^{(p)}&=&-\frac{R_0 }{n_0(n-1)}\delta_{i_\alpha j_\alpha},\ \ \alpha \not=0 \nonumber\\
\Phi_{i_\alpha j_\alpha k_\alpha l_\alpha}^{(p)}&=&
\frac{[(p-1)n_0+n_{p}]R_0}{n_0(n_0-1)(n-1)} 2\delta_{i_\alpha[k_\alpha}\delta_{l_\alpha]j_\alpha},\ \ \alpha\not=p\nonumber\\
\Phi_{i_\alpha j_\beta k_\alpha l_\beta}^{(p)}&=&-\frac{R_0 }{n_0(n-1)} \delta_{i_\alpha k_\alpha}\delta_{j_\beta l_\beta},\ 
\ \alpha\not= \beta,\ \nonumber\\
\Phi_{i_p j_p k_p l_p}^{(p)}&=&\frac{pR_0}{(n_{p}-1)(n-1)} 2\delta_{i_p[k_p}\delta_{l_p]j_p}, 
\eeq
where now $n=p n_0+n_p$.

Considering the second-rank tensor 
$S^{(2)}_{ab} \equiv C_{acde}{C_b}^{cde}=Kg_{ab}$ (see \eqref{QGterm}),
we get
\beq
S^{(2)}_{01}&=&\frac{2[(p-1)n_0+n_{p}]R_0^2}{n_0^2(n_0-1)(n-1)},\nonumber\\
S^{(2)}_{i_p j_p}&=&\frac{2pR_0^2}{n_0(n_{p}-1)(n-1)}\delta_{i_p j_p},
\eeq
which implies
\be
n_0=\dots=n_{p},\ \ \ R_0=\dots=R_{p} \label{equalnR}
\ee
and \eqref{PhiIJprodkplus} follows.

\section{The generalization of the Khlebnikov--Ghanam--Thompson metric consting of  2-blocks}
\label{app_GT}

In this section, we derive  Christoffel symbols,  Ricci coefficients and Weyl coordinate and frame components for the higher-dimensional generalization
of the Khlebnikov--Ghanam--Thompson metric consisting of $N$ 2-blocks \eqref{GT-2blocks}, considered in section \ref{sec_GT}.

The non-trivial Christoffel symbols for the metric \eqref{GT-2blocks} 
read
\begin{align}
  & \Gamma^u_{uu} = \lambda v, \ \ \Gamma^v_{uu} = \frac{1}{2} H_{,u} + \lambda v (\lambda v^2 + H), \ \
 \Gamma^v_{uv} = \lambda v, \ \
 \Gamma^v_{u x_\alpha} = \frac{1}{2} H_{,x_\alpha}, \ \
 \Gamma^v_{u y_\alpha} = \frac{1}{2} H_{,y_\alpha}, \nonumber\\
 &\Gamma^{x_\alpha}_{uu} = - \frac{|\lambda|}{2} H_{,x_\alpha}, \ \
   \Gamma^{y_\alpha}_{uu} = - \frac{|\lambda|}{2 \s^2(x_\alpha)} H_{,y_\alpha}, \ \
   \Gamma^{x_\alpha}_{y_\beta y_\gamma} = - \s(x_\alpha) \cs(x_\alpha) \delta_{\alpha\beta\gamma}, \ \
   \Gamma^{y_\alpha}_{x_\beta y_\gamma} = \ct(x_\alpha) \delta_{\alpha\beta\gamma},
\end{align}
where $\delta_{\alpha\beta\gamma}=1$ for $\alpha=\beta=\gamma$, i.e.\ all indices correspond to the same block, 
and otherwise $\delta_{\alpha\beta\gamma}$ vanishes.
The independent non-trivial components of the Riemann and Ricci tensors read
\begin{align}
  & R_{uvuv} = - \lambda, \ \
   R_{u x_\alpha u x_\beta} = - \frac{1}{2} H_{,x_\alpha x_\beta}, \ \
   R_{u x_\alpha u y_\beta} = - \frac{1}{2} H_{,x_\alpha y_\beta} + \frac{1}{2} \ct(x_\alpha) H_{,y_\alpha} \delta_{\alpha\beta}, \nonumber \\
  & R_{u y_\alpha u y_\beta} = - \frac{1}{2} H_{,y_\alpha y_\beta} 
	- \frac{1}{2} \s(x_\alpha) \cs(x_\alpha) H_{,x_\alpha} \delta_{\alpha\beta}, \ \
   R_{x_\alpha y_\beta x_\gamma y_\delta} = \frac{\s^2(x_\alpha)}{\lambda} \delta_{\alpha\beta\gamma\delta} ,
\end{align}
\begin{align}
   R_{uu} = \lambda (\lambda v^2 + H) - \frac{1}{2} \Box H, \ \ R_{uv} = \lambda, \ \
   R_{x_\alpha x_\beta} = \sgn(\lambda) \delta_{\alpha\beta}, \ \
   R_{y_\alpha y_\beta} = \sgn(\lambda) \s^2(x_\alpha) \delta_{\alpha\beta},
\end{align}
where
\begin{equation}
  \Box H = |\lambda| \sum_{\alpha=1}^{N-1} \left( H_{,x_\alpha x_\alpha} + \frac{1}{\s^2(x_\alpha)} H_{,y_\alpha y_\alpha} + \ct(x_\alpha) H_{,x_\alpha} \right).
\end{equation}
The Ricci scalar is constant $R = n \lambda$. The metric \eqref{GT-2blocks} 
is Einstein, $R_{ab} = \lambda g_{ab}$, iff $H(u, x_\alpha, y_\alpha)$ is a harmonic function.

The non-trivial components of the Weyl tensor 
are
\begin{align}
  & C_{uvuv} = - \frac{n - 2}{n - 1} \lambda, \nonumber\\
  & C_{u x_\alpha u x_\beta} = - \frac{1}{2} H_{,x_\alpha x_\beta} - \sgn(\lambda) \left( \frac{\lambda v^2 + H}{n - 1}
    - \frac{\Box H}{2 \lambda (n - 2)} \right) \delta_{\alpha\beta}, \nonumber\\
  & C_{u x_\alpha u y_\beta} =  \frac{1}{2}\left( - H_{,x_\alpha y_\beta} 
	+  \ct(x_\alpha) H_{,y_\alpha} \delta_{\alpha\beta}\right), \nonumber\\
  & C_{u y_\alpha u y_\beta} = - \frac{1}{2} H_{,y_\alpha y_\beta} 
	- \s^2(x_\alpha) \left[ \frac{1}{2} \ct(x_\alpha) H_{,x_\alpha}
    + \sgn(\lambda) \left( \frac{\lambda v^2 + H}{n - 1} - \frac{\Box H}{2 \lambda (n - 2)} \right) \right] \delta_{\alpha\beta}, \nonumber\\
  & C_{u x_\alpha v x_\beta} = - \frac{\sgn(\lambda)}{n - 1} \delta_{\alpha\beta}, \nonumber\\
  & C_{u y_\alpha v y_\beta} = - \s^2(x_\alpha) \frac{\sgn(\lambda)}{n - 1} \delta_{\alpha\beta}, \nonumber\\
  & C_{x_\alpha x_\beta x_\gamma x_\delta} = \frac{2}{\lambda (n - 1)} \delta_{\alpha[\delta} \delta_{\gamma]\beta}, \nonumber\\
  & C_{x_\alpha y_\beta x_\gamma y_\delta} = \frac{\s^2(x_\beta)}{\lambda} (\delta_{\alpha\beta\gamma\delta} 
	- \frac{1}{n - 1} \delta_{\alpha\gamma} \delta_{\beta\delta}), \nonumber\\
  & C_{y_\alpha y_\beta y_\gamma y_\delta} = \frac{2 \s^2(x_\alpha) \s^2(x_\beta)}{\lambda (n - 1)} 
	\delta_{\alpha[\delta} \delta_{\gamma]\beta}.
\end{align}

One may introduce a parallelly propagated frame
\begin{align}
  & \ell^a = (0, 1, 0, \ldots, 0), \nonumber\\
  & n^a = (1, - \frac{1}{2} g_{uu}, 0, \ldots, 0), \nonumber\\
  & m^a_{(\chi_\alpha)} = (0, \dots, 0, \overbrace{\sqrt{|\lambda|}, 0}^{\text{$\alpha^{\mbox{th}}$ block}}, 0, \ldots, 0), \nonumber\\  
  & m^a_{(\upsilon_\alpha)} = (0, \dots, 0, \underbrace{0, \frac{\sqrt{|\lambda|}}{\s(x_\alpha)}}_{\text{$\alpha^{\mbox{th}}$ block}}, 0, \ldots, 0)
	\label{frame-2}
\end{align}
to show that only b.w.\ 0 and $-2$ components of the Weyl tensor are non-vanishing. Moreover, the b.w.\ 0
components are constant. 

One can use the relation between the Riemann and Weyl tensor \eqref{RiemannWeyl} and 
 \eqref{RiemannGT} to obtain the b.w.\ 0 components of the Weyl tensor for arbitrary N $n_0$-blocks 
\BEA
{C^{ab}}_{cd}&=&\Lambda \sum_{\alpha=0}^{N-1}\left[ (\delta^{(\alpha)})^a_{~c}(\delta^{(\alpha)})^b_{~d}
-(\delta^{(\alpha)})^a_{~d}(\delta^{(\alpha)})^b_{~c}\right]
\nonumber\\
&&-\frac{\Lambda (n_0-1)}{n-1}\left[ \sum_{\alpha=0}^{N-1} (\delta^{(\alpha)})^a_{~c}\sum_{\beta=0}^{N-1} (\delta^{(\beta)})^b_{~d}
-\sum_{\alpha=0}^{N-1}(\delta^{(\alpha)})^a_{~d}\sum_{\beta=0}^{N-1}(\delta^{(\beta)})^b_{~c}\right].\label{C0}
\EEA
By contracting \eqref{C0} with the frame vectors, one obtains the frame components
\BEA
\Phi&=&{C^{ab}}_{cd}\ell_a n_b \ell^c n^d=-\Lambda\left(1-\frac{n_0-1}{n-1}\right)=-\frac{n-n_0}{n-1}\Lambda=
-\frac{n_0(N-1)}{n-1}\Lambda,\label{Phi_max}\nonumber\\
\Phi_{i_0 j_0 }&=&{C^{ab}}_{cd}\ell_a 
m^{(i_0)}_b n^c m^{(j_0)d} =\frac{n_0(N-1)}{n-1}\Lambda\delta_{i_0 j_0}=-\Phi\delta_{i_0 j_0 },\nonumber\\
\Phi_{i_\alpha j_\beta }&=&{C^{ab}}_{cd}\ell_a 
m^{(i_\alpha)}_b n^c m^{(j_\beta)d}=-\Lambda\frac{n_0-1}{n-1}\delta_{i_\alpha j_\beta},\ \ \alpha,\beta \not=0,\nonumber\\
\Phi_{i_\alpha j_\beta k_\gamma l_\delta }&=&{C^{ab}}_{cd} 
m^{(i_\alpha)}_a m^{(j_\beta)}_b m^{(k_\gamma) c} m^{(l_\delta )d}=
2\Lambda\delta_{i_\alpha  [k_\gamma}\delta_{l_\delta]j_\beta}\left(\delta_{\alpha\beta\gamma\delta}
- 
\frac{n_0-1}{n-1} \right). 
\EEA
Thus $\Phi_{ij}$ is diagonal
\be
\Phi_{ij}=\frac{\Lambda}{n-1}\mbox{diag}\left[n_0(N-1)\delta_{i_0  j_0 },-(n_0-1)\delta_{i_\alpha j_\beta } \right], \alpha,\beta\not=0.\label{Phijkl_max}
\ee
For $n_0=2$, $\Lambda=\lambda/(n_0-1)=\lambda$, one gets non-vanishing b.w. 0 components
\begin{align}
  & \Phi_{\chi_\alpha \chi_\beta \chi_\gamma \chi_\delta} = 
	\Phi_{\upsilon_\alpha \upsilon_\beta \upsilon_\gamma \upsilon_\delta} =\frac{2 \lambda}{(n - 1)} \delta_{\alpha[\delta} \delta_{\gamma]\beta}, \nonumber \\
  & \Phi_{\chi_\alpha \upsilon_\beta \chi_\gamma \upsilon_\delta} = \lambda (\delta_{\alpha\beta\gamma\delta} 
	- \frac{1}{n - 1} \delta_{\alpha\gamma} \delta_{\beta\delta}), \nonumber\\
  & \Phi_{\chi_\alpha \chi_\beta} = \Phi_{\upsilon_\alpha \upsilon_\beta} =-\frac{\lambda}{n - 1} \delta_{\alpha\beta},\ \ 
   \Phi = -\frac{n - 2}{n - 1} \lambda. \label{Weyl_2b}
\end{align}
The b.w.  $-1$ components, $\Psi'_{ijk}$, vanish and b.w. $-2$ components 
\begin{align}
  & \Omega'_{\chi_\alpha \chi_\beta} = -\frac{|\lambda|}{2} H_{,x_\alpha x_\beta} + \frac{\Box H}{2 (n - 2)} \delta_{\alpha\beta}, \nonumber\\
  & \Omega'_{\chi_\alpha \upsilon_\beta} = -\frac{|\lambda|}{2 \s(x_\beta)} H_{,x_\alpha y_\beta} 
	+ \frac{|\lambda| \ct(x_\alpha)}{2 \s(x_\alpha)} H_{,y_\alpha} \delta_{\alpha\beta}, \nonumber\\
  & \Omega'_{\upsilon_\alpha \upsilon_\beta} = -\frac{|\lambda|}{2 \s(x_\alpha) \s(x_\beta)} H_{,y_\alpha y_\beta}
	- \frac{1}{2} \left( |\lambda| \ct(x_\alpha) H_{,x_\alpha}
    + \frac{\Box H}{(n - 2)} \right) \delta_{\alpha\beta}
\end{align}
are independent of $v$ and thus $D\Omega'_{ij}=0$.

The Ricci rotation coefficients read
\begin{align}
\kappa_{i}& =L_{10}= \tau'_i 
= {\stackrel{i}{M}}_{j0} = 0, \ \
 \rho_{ij}  = L_{1i} = \tau_i =\rho'_{ij}=\Mi_{j1}= 0, \ \
 L_{11}  = \lambda v, \nonumber\\
\kappa'_{\chi_\alpha} & =-\frac{\sqrt{|\lambda|}}{2} H,_{x_\alpha},\ \
\kappa'_{v_\alpha }  =-\frac{\sqrt{|\lambda|}}{2s(x_\alpha)} H,_{y_\alpha},\ \
\Mchal_{v_\beta v_\gamma} =ct(x_\alpha)  
\sqrt{|\lambda|}\ \delta_{\alpha\beta\gamma},\label{Ricii_coeff_2b}
\end{align}
the remaining spin coefficients $\Mi_{jk}$ vanish.
From \eqref{Ricii_coeff_2b}, it follows  that the frame \eqref{frame-2} is parallelly propagated and that the metric
\eqref{GT-2blocks} 
belongs to the RNV subclass of the Kundt spacetimes.

Now, let us study components of $\nabla^{(1)}C$. 
In the b.w.\ 0 components of $\nabla^{(1)}C$, terms like
\be 
\Phi_{ij}\Mi_{kl}+\Phi_{ki}\Mi_{jl}\label{block2-derCA}
\ee
 and 
\be
\Phi_{sjkl}\Ms_{ih}+\Phi_{iskl}\Ms_{jh}+\Phi_{ijsl}\Ms_{kh}+\Phi_{ijks}\Ms_{lh}\label{block2-derCB}
\ee
appear and they all vanish due to the form of $\Mi_{jk}$ \eqref{Ricii_coeff_2b} and $\Phi_{ij}$ and $\Phi_{ijkl}$ \eqref{Weyl_2b}.
Since $\tau_i=\rho'_{ij}=\Mi_{j1}=D\Omega'_{ij}=0$ and $\Phi_{ijkl}$ are constant, the only terms contributing to 
b.w.\ $-1$ components of $\nabla^{(1)}C$ are 
\be
\ell_{a;b}=L_{11}\ell_a\ell_b,\ \ n_{a;b}=-L_{11}n_a\ell_{b}.\label{block2-derCC}
\ee
However, in the expression of the b.w.\ 0 part of the Weyl tensor \eqref{eq:rscalars},
 there is same number of frame vectors $\bl$'s and $\bn$'s and thus such terms cancel out, 
e.g. $C_{abcd;e}=\Phi L_{11}(1-1+1-1)\ell_a n_b\ell_c n_d \ell_e+\dots$.

Thus the metric \eqref{GT-2blocks}  obeys the assumptions of  proposition \ref{lemma_derbalanced}.


\section{The generalization of the Khlebnikov--Ghanam--Thompson metric consisting of 3-blocks}
\label{app_GT_3}

In this section, we derive Christoffel symbols, Ricci coefficients and Weyl coordinate and frame components 
for another higher-dimensional generalization
of the Khlebnikov--Ghanam--Thompson metric consisting of $N$ 3-blocks \eqref{appGT3:GTmetric}, considered in section \ref{sec_GT_3}.

The non-trivial Christoffel symbols for the metric \eqref{appGT3:GTmetric} are given by
\begin{align}
  & \Gamma^u_{uz} = -\frac{1}{z}, \qquad
  \Gamma^v_{uu} = \frac{1}{2} \left(H_{,u} - \lambda vz H_{,z} \right), \qquad
  \Gamma^v_{uv} = \lambda v, \nonumber\\
  & \Gamma^v_{uz} = \frac{1}{2} H_{,z} + \frac{1}{z}(2 \lambda v^2 + H), \qquad
  \Gamma^v_{u x_\alpha} = \frac{1}{2} H_{,x_\alpha}, \qquad
  \Gamma^v_{u y_\alpha} = \frac{1}{2} H_{,y_\alpha}, \qquad
  \Gamma^v_{u z_\alpha} = \frac{1}{2} H_{,z_\alpha}, \nonumber\\
  & \Gamma^v_{vz} = \frac{1}{z}, \qquad
  \Gamma^z_{uu} = \frac{\lambda z^2}{4} H_{,z}, \qquad
  \Gamma^z_{uv} = - \frac{\lambda z}{2}, \qquad
  \Gamma^z_{uz} = - \lambda v, \qquad
  \Gamma^z_{zz} = - \frac{1}{z}, \nonumber\\
  & \Gamma^{x_\alpha}_{uu} = \frac{\lambda}{4} H_{,x_\alpha}, \qquad
  \Gamma^{x_\alpha}_{y_\beta y_\gamma} = - sh_\alpha ch_\alpha 
	\, \delta_{\alpha\beta\gamma}, \qquad
  \Gamma^{x_\alpha}_{z_\beta z_\gamma} = - sh_\alpha ch_\alpha s^2_\alpha  
	\, \delta_{\alpha\beta\gamma}, \nonumber\\
  & \Gamma^{y_\alpha}_{uu} = \frac{\lambda}{4 sh_\alpha^2 
	} H_{,y_\alpha}, \qquad
  \Gamma^{y_\alpha}_{x_\beta y_\gamma} = 
	cth_\alpha\, 
	\delta_{\alpha\beta\gamma}, \qquad
  \Gamma^{y_\alpha}_{z_\beta z_\gamma} = - s_\alpha c_\alpha 
	\, \delta_{\alpha\beta\gamma}, \nonumber\\
  & \Gamma^{z_\alpha}_{uu} = \frac{\lambda}{4 sh^2_\alpha s^2_\alpha 
	} H_{,z_\alpha}, \qquad
  \Gamma^{z_\alpha}_{x_\beta z_\gamma} = 
	cth_\alpha\, \delta_{\alpha\beta\gamma}, \qquad
  \Gamma^{z_\alpha}_{y_\beta z_\gamma} = 
	ct_\alpha\, \delta_{\alpha\beta\gamma},
\end{align}
where $ch_\alpha=\cosh x_\alpha$,  $c_\alpha=\cos y_\alpha$, $cth_\alpha=ch_\alpha/sh_\alpha$ and $ct_\alpha =c_\alpha / s_\alpha$.
The  non-zero components of the Riemann tensor read
\begin{align}
  & R_{uvuv} = - \frac{\lambda}{2}, \ \
   R_{uvuz} = - \frac{\lambda v}{z}, \ \
   R_{u z u z} = - \frac{1}{2 z^2} (4 \lambda v^2 + 2H + 3z H_{,z} + z^2 H_{,zz}), \nonumber\\
  & R_{u z u x_\alpha} = - \frac{1}{2z} (H_{,x_\alpha} + z H_{,z x_\alpha}), \ \
   R_{u z u y_\alpha} = - \frac{1}{2z} (H_{,y_\alpha} + z H_{,z y_\alpha}), \ \
   R_{u z u z_\alpha} = - \frac{1}{2z} (H_{,z_\alpha} + z H_{,z z_\alpha}), \nonumber\\
  & R_{u z v z} = - \frac{1}{z^2}, \ \
   R_{u x_\alpha u x_\beta} = - \frac{1}{2} H_{,x_\alpha x_\beta}, \ \
   R_{u x_\alpha u y_\beta} = \frac{1}{2} \left( -H_{,x_\alpha y_\beta} 
	+ cth_\alpha 
	H_{,y_\alpha} \delta_{\alpha\beta}\right), \nonumber\\
  & R_{u x_\alpha u z_\beta} =  \frac{1}{2}\left( - H_{,x_\alpha z_\beta} + cth_\alpha 
	H_{,z_\alpha} \delta_{\alpha\beta}\right), \ \
   R_{u y_\alpha u y_\beta} = - \frac{1}{2} \left( H_{,y_\alpha y_\beta} + sh_\alpha ch_\alpha 
	H_{,x_\alpha} \delta_{\alpha\beta}\right), \nonumber\\
  & R_{u y_\alpha u z_\beta} = \frac{1}{2}\left( - H_{,y_\alpha z_\beta} +ct_\alpha 
	H_{,z_\alpha} \delta_{\alpha\beta}\right), \ \
   R_{u z_\alpha u z_\beta} = - \frac{1}{2}\left[ H_{,z_\alpha z_\beta} +  s_\alpha 
	(c_\alpha 
	H_{,y_\alpha} + sh_\alpha ch_\alpha s_\alpha 
	H_{,x_\alpha}) \delta_{\alpha\beta}\right], \nonumber\\
  & R_{x_\alpha y_\beta x_\gamma y_\delta} = \frac{2 sh^2_\alpha 
	}{\lambda} \delta_{\alpha\beta\gamma\delta}, \qquad
  R_{x_\alpha z_\beta x_\gamma z_\delta} = \frac{2 sh^2_\alpha s^2_\alpha 
	}{\lambda} \delta_{\alpha\beta\gamma\delta}, \ \
   R_{y_\alpha z_\beta y_\gamma z_\delta} = \frac{2 sh^4_\alpha s^2_\alpha 
	}{\lambda} \delta_{\alpha\beta\gamma\delta}
\end{align}
and those of the Ricci tensor
\begin{align}
  & R_{uu} = \lambda \left( H + z H_{,z} \right) - \frac{1}{2} \Box H, \ \
   R_{uv} = \lambda, \ \
   R_{uz} = \frac{2 \lambda v}{z}, \ \
   R_{zz} = - \frac{2}{z^2}, \nonumber\\
  & R_{x_\alpha x_\beta} = -2 \delta_{\alpha\beta}, \ \
   R_{y_\alpha y_\beta} = -2 sh^2_\alpha 
	\delta_{\alpha\beta}, \ \
   R_{z_\alpha z_\beta} = -2sh^2_\alpha s^2_\alpha 
	\delta_{\alpha\beta},
\end{align}
where
\begin{equation}
  \Box H\! = \! -\frac{\lambda}{2} \left[ z^2 H_{,zz} - z H_{,z} + \!\!\sum_{\alpha=1}^{N-1}\!\! \left( H_{,x_\alpha x_\alpha}
	+ 2 cth_\alpha 
	H_{,x_\alpha}
  + \frac{1}{sh^2_\alpha 
	} \left( H_{,y_\alpha y_\alpha} + ct_\alpha 
	H_{,y_\alpha} + \frac{1}{s^2_\alpha 
	} H_{,z_\alpha z_\alpha} \right) \right) \right].
\end{equation}
The Ricci scalar is constant $R = n \lambda$. The metric \eqref{appGT3:GTmetric} is Einstein with $R_{ab} = \lambda g_{ab}$ iff
\begin{equation}
  \Box H = 2 \lambda z H_{,z}.
\end{equation}

Let us choose a parallelly propagated frame
\begin{align}
  & \ell^a = (0, 1, 0, \ldots, 0),\nonumber \\
  & n^a = (1, - \frac{3 \lambda v^2 + 2 H}{4}, \frac{\lambda v z}{2}, 0, \ldots, 0) ,\nonumber\\
  & m_{(2)}^a = (0, - \sqrt{\frac{-\lambda}{2}} v, \sqrt{\frac{-\lambda}{2}} z, 0, \ldots, 0) ,\nonumber\\
  & m^a_{(\chi_\alpha)} = (0, \dots, 0, \overbrace{\sqrt{\frac{-\lambda}{2}}, 0, 0}^{\text{$\alpha^{\mbox{th}}$ block}}, 0, \ldots, 0),\nonumber \\  
  & m^a_{(\upsilon_\alpha)} = (0, \dots, 0, \underbrace{0, \frac{\sqrt{-\lambda}}{\sqrt{2} sh_\alpha 
	}, 0}_{\text{$\alpha^{\mbox{th}}$ block}}, 0, \ldots, 0) ,\nonumber\\ 
  & m^a_{(\zeta_\alpha)} = (0, \dots, 0, \underbrace{0, 0, \frac{\sqrt{-\lambda}}{\sqrt{2} sh_\alpha s_\alpha 
	}}_{\text{$\alpha^{\mbox{th}}$ block}}, 0, \ldots, 0) 
\end{align}
and express the frame components of the Weyl tensor. As in the 2-block case,
b.w.\ 0 components are constant and can be derived using \eqref{Phi_max}--\eqref{Phijkl_max}
\begin{align}
& \Phi=-\frac{(n-3)\lambda}{2(n-1)},\ \ 
\Phi_{ij}=\frac{\lambda}{n-1}\mbox{diag}\left(\frac{n-3}{2},-\delta_{ij}\right),\nonumber\\
  & \Phi_{2 \chi_\alpha 2 \chi_\beta} =  \Phi_{2 \upsilon_\alpha 2 \upsilon_\beta}
	= \Phi_{2 \zeta_\alpha 2 \zeta_\beta} = - \frac{\lambda}{n - 1} \delta_{\alpha\beta}, \nonumber\\
  & \Phi_{\chi_\alpha \chi_\beta \chi_\gamma \chi_\delta}
	=  \Phi_{\upsilon_\alpha \upsilon_\beta \upsilon_\gamma \upsilon_\delta} 
	= \Phi_{\zeta_\alpha \zeta_\beta \zeta_\gamma \zeta_\delta}
	= - \frac{2 \lambda}{n - 1} \delta_{\alpha[\gamma} \delta_{\delta]\beta}, \nonumber\\
  & \Phi_{\chi_\alpha \upsilon_\beta \chi_\gamma \upsilon_\delta} 
	=  \Phi_{\chi_\alpha \zeta_\beta \chi_\gamma \zeta_\delta} 
	= \Phi_{\upsilon_\alpha \zeta_\beta \upsilon_\gamma \zeta_\delta} 
	= \frac{\lambda}{2} \delta_{\alpha\beta\gamma\delta}
	- \frac{\lambda}{n - 1} \delta_{\alpha\gamma} \delta_{\beta\delta},\label{Weyl_3b}
\end{align}
b.w.\ $-1$ components $\Psi'_{ijk}$ vanish and b.w.\ $-2$ non-vanishing components are independent of $v$
\begin{align}
  & \Omega'_{22} = \frac{1}{2(n - 2)} \left( \Box H - 2 \lambda z H_{,z} \right) + \frac{\lambda z}{4} \left( z H_{,zz} + 3 H_{,z} \right), \nonumber\\
  & \Omega'_{2 \chi_\alpha} = \frac{\lambda}{4} \left( H_{,x_\alpha} + z H_{,z x_\alpha} \right), \ \ 
   \Omega'_{2 \upsilon_\alpha} = \frac{\lambda}{4 sh_\alpha 
	} \left( H_{,y_\alpha} + z H_{,z y_\alpha} \right), \ \ 
   \Omega'_{2 \zeta_\alpha} = \frac{\lambda}{4 sh_\alpha s_\alpha 
	} \left( H_{,z_\alpha} + z H_{,z z_\alpha} \right), \nonumber\\
  & \Omega'_{\chi_\alpha \chi_\beta} 
	= \frac{\lambda}{4} H_{,x_\alpha x_\beta} + \frac{1}{2(n - 2)} \left( \Box H - 2 \lambda z H_{,z} \right) \delta_{\alpha\beta}, \nonumber\\
  & \Omega'_{\upsilon_\alpha \upsilon_\beta} = \frac{\lambda}{4 sh_\alpha  
	} \left( \frac{1}{sh_\beta} H_{,y_\alpha y_\beta} + ch_\alpha 
	H_{,x_\alpha} \delta_{\alpha\beta}\right)
    + \frac{1}{2(n - 2)} \left( \Box H - 2 \lambda z H_{,z} \right) \delta_{\alpha\beta}, \nonumber\\
  & \Omega'_{\zeta_\alpha \zeta_\beta} = \frac{\lambda}{4 sh_\alpha sh_\beta s_\alpha s_\beta 
	} H_{,z_\alpha z_\beta}
    + \frac{\lambda}{4} \left( cth_\alpha 
		H_{,x_\alpha} + \frac{ct_\alpha 
		}{  		sh^2_\alpha 
		} H_{,y_\alpha} \right) \delta_{\alpha\beta}
    + \frac{1}{2(n - 2)} \left( \Box H - 2 \lambda z H_{,z} \right) \delta_{\alpha\beta}, \nonumber\\
  & \Omega'_{\chi_\alpha \upsilon_\beta} = \frac{\lambda}{4 sh_\beta 
	} \left( H_{,x_\alpha y_\beta} 
	- 	cth_\alpha H_{,y_\alpha} \delta_{\alpha\beta}\right), \nonumber\\
  & \Omega'_{\chi_\alpha \zeta_\beta} = \frac{\lambda}{4 sh_\beta s_\beta  
	} \left(H_{,x_\alpha z_\beta} 
	- 	cth_\alpha H_{,z_\alpha} \delta_{\alpha\beta} \right),\nonumber\\
  & \Omega'_{\upsilon_\alpha \zeta_\beta} = \frac{\lambda}{4 sh_\alpha sh_\beta s_\beta 
	} \left(H_{,y_\alpha z_\beta} 
	- 	ct_\alpha H_{,z_\alpha} \delta_{\alpha\beta}\right).
\end{align}

The Ricci rotation coefficients are
	\begin{align}
\kappa_{i} &=L_{10}= \tau'_i 
= {\stackrel{i}{M}}_{j0} = 0, \ \  
\rho_{ij}  =\Mi_{j1}= 0, \ \nonumber\\
L_{11} &= \lambda v, \ \ \tau_i = L_{1i} = \left(\sqrt{\frac{-\lambda}{2}}, 0, \dots, 0 \right), \ \ 
 \rho'_{ij} = \mbox{diag}\left(- \frac{\lambda v}{2}, 0, \dots, 0 \right), \ \ 
\nonumber\\
   {\stackrel{2}{M}}_{jk} &= 0, \ \
   {\stackrel{\chi_\alpha}{M}}_{\upsilon_\beta \upsilon_\gamma} = \sqrt{-\frac{\lambda}{2}}
	cth_\alpha 
	\delta_{\alpha\beta\gamma} = {\stackrel{\chi_\alpha}{M}}_{\zeta_\beta \zeta_\gamma}, \ \
   {\stackrel{\upsilon_\alpha}{M}}_{\zeta_\beta \zeta_\gamma} = \sqrt{-\frac{\lambda}{2}}
	\frac{ct_\alpha 	}{ 	sh_\alpha }\delta_{\alpha\beta\gamma},\nonumber\\
\kappa'_2 &= \frac{1}{2} \sqrt{\frac{-\lambda}{2}} \left( \frac{\lambda}{2} v^2 - H - z H_{,z} \right),\nonumber\\
\kappa'_{\chi_\alpha}& = - \frac{1}{2} \sqrt{\frac{-\lambda}{2}} H_{,x_\alpha}, \ \
\kappa'_{\upsilon_\alpha} = - \frac{1}{2 sh_\alpha} \sqrt{\frac{-\lambda}{2}} H_{,y_\alpha},\ \
\kappa'_{\zeta_\alpha} = - \frac{1}{2 sh_\alpha s_\alpha} \sqrt{\frac{-\lambda}{2}} H_{,z_\alpha}.
	\label{Ricii_coeff_3b}
\end{align}
Note that the metric \eqref{appGT3:GTmetric} is not RNV since $\tau_i$ is non-vanishing.
One can see that $D\Omega'_{ij}$ vanishes also directly from the Bianchi equation (A.17) in \cite{Durkeeetal10}
\be
D\Omega'_{ij}=\Phi\rho'_{ij}+\Phi_{si}\rho'_{sj}=0.\label{DOmega}
\ee

Now, let us study components of $\nabla^{(1)}C$. Similarly as in the 2-block case,
in the b.w.\ 0 components of $\nabla^{(1)}C$, terms like \eqref{block2-derCA}, \eqref{block2-derCB}
appear and they all again vanish due to the form of $\Mi_{jk}$ \eqref{Ricii_coeff_3b} and $\Phi_{ij}$ and $\Phi_{ijkl}$ \eqref{Weyl_3b}.
Moreover, there are b.w.\ 0 terms 
\[
\Phi\tau_i+\Phi_{ij}\tau_j
\]
and
\[
\Phi_{ik}\tau_j-\Phi_{ij}\tau_k-\Phi_{sijk}\tau_s
\]
that vanish as well.
In b.w.\ $-1$  components of $\nabla^{(1)}C$, there are also terms \eqref{block2-derCC} that again cancel 
each other and
terms
\[
\Phi\rho'_{ij}+\Phi_{ki}\rho'_{kj}
\]
and 
\[
\Phi_{ki}\rho'_{jl}-\Phi_{ji}\rho'_{kl}-\Phi_{sijk}\rho'_{sl}
\]
that vanish as well.

Thus the metric \eqref{appGT3:GTmetric} obeys the assumptions of  proposition \ref{lemma_derbalanced}.


\begin{thebibliography}{99}

\bibitem{Coleyetal08}
A. Coley, G. W. Gibbons, S.~Hervik, and C. N. Pope.
\newblock Metrics with vanishing quantum corrections.
\newblock {\em Class. Quantum Grav.}, 25:145017, 2008.


\bibitem{AmaKli89}
D.~Amati and C.~Klim\v{c}\'{\i}k.
\newblock Nonperturbative computation of the {W}eyl anomaly for a class of
  nontrivial backgrounds.
\newblock {\em Phys. Lett. {\rm B}}, 219:443--447, 1989.

\bibitem{HorSte90}
G.~T. Horowitz and A.~R. Steif.
\newblock Spacetime singularities in string theory.
\newblock {\em Phys. Rev. Lett.}, 64:260--263, 1990.



\bibitem{universal1}
S. Hervik, V. Pravda, A. Pravdov\'a, 
\newblock { Type III and N universal spacetimes}, 
\newblock {\em Class. Quantum Grav.} 31:215005, 2014.

\bibitem{Coleyetal04}
A.~Coley, R.~Milson, V.~Pravda, and A.~Pravdov\'a.
\newblock Classification of the {W}eyl tensor in higher dimensions.
\newblock {\em Class. Quantum Grav.}, 21:L35--L41, 2004.

\bibitem{OrtPraPra12rev}
M.~Ortaggio, V.~Pravda, and A.~Pravdov\'a.
\newblock Algebraic classification of higher dimensional spacetimes based on
  null alignment.
\newblock {\em Class. Quantum Grav.}, 30:013001, 2013.

\bibitem{Durkeeetal10}
M.~Durkee, V.~Pravda, A.~Pravdov\'a, and H.~S. Reall.
\newblock Generalization of the {G}eroch-{H}eld-{P}enrose formalism to higher
  dimensions.
\newblock {\em Class. Quantum Grav.}, 27:215010, 2010.

\bibitem{DurRea09}
M. Durkee, H. S. Reall.
\newblock A higher-dimensional generalization of the geodesic part of the Goldberg-Sachs theorem.
\newblock {\em Class. Quantum Grav.}, 26:245005, 2009. 



\bibitem{ColHer10}
A. Coley and S. Hervik.
\newblock Higher dimensional bivectors and classification of the Weyl operator.
\newblock {\em Class. Quantum Grav.}, 27:015002, 2010. 

\bibitem{ColHerPel06}
A.~Coley, S.~Hervik, and N.~Pelavas.
\newblock On spacetimes with constant scalar invariants.
\newblock {\em Class. Quantum Grav.}, 23:3053--3074, 2006.










 
\bibitem{Coleyetal04vsi}
A.~Coley, R.~Milson, V.~Pravda, and A.~Pravdov\'a.
\newblock Vanishing scalar invariant spacetimes in higher dimensions.
\newblock {\em Class. Quantum Grav.}, 21:5519--5542, 2004.


\bibitem{PraPraOrt07}
V.~Pravda, A.~Pravdov\'a, and M.~Ortaggio.
\newblock Type {D} {E}instein spacetimes in higher dimensions.
\newblock {\em Class. Quantum Grav.}, 24:4407--4428, 2007.




\bibitem{Khleb}
V. I. Khlebnikov.
\newblock Gravitational radiation in  electromagnetic universes.
\newblock {\em Class. Quantum Grav.}, 3:169--173, 1986.






\bibitem{GT2001}
R. Ghanam and G. Thompson.
\newblock Two special metrics with $R_{14}$-type holonomy.
 \newblock {\em Class. Quantum Grav.}, 18:2007, 2001.

\bibitem{GibbonsPope2008}
G. W. Gibbons  and C. N. Pope.  
 \newblock Time-dependent multi-centre solutions from new metrics with holonomy
Sim$(n-2)$. 
\newblock {\em Class. Quantum Grav.} 25:125015, 2008.

\bibitem{Ort2002}
M. Ortaggio.
\newblock  Impulsive waves in the Nariai universe.
\newblock {\em Phys. Rev. D}, 65:084046, 2002.

\bibitem{OrtPod2003}
J. Podolsk\'y, M. Ortaggio.
\newblock Explicit Kundt type II and N solutions as gravitational waves in various type D and O universes.
\newblock {\em Class. Quantum Grav.}, 20:1685--1701, 2003.

\bibitem{KPZK2012}
P. Krtou\v s, J. Podolsk\'y, A. Zelnikov, H. Kadlecov\'a.
\newblock {Higher-dimensional Kundt waves and gyratons}.
\newblock {\em Phys. Rev. D}, 86:044039, 2012.


\bibitem{OrtPraPra07}
M.~Ortaggio, V.~Pravda, and A.~Pravdov\'a.
\newblock Ricci identities in higher dimensions.
\newblock {\em Class. Quantum Grav.}, 24:1657--1664, 2007.


\bibitem{Pravdaetal04}
V.~Pravda, A.~Pravdov\'a, A.~Coley, and R.~Milson.
\newblock Bianchi identities in higher dimensions.
\newblock {\em Class. Quantum Grav.}, 21:2873--2897, 2004.
\newblock See also V. Pravda, A. Pravdov\'a, A. Coley and R. Milson {\em Class.
  Quantum Grav.} {\bf 24} (2007) 1691 (corrigendum).

\bibitem{MalekPravdaQG}
T. M\'alek and V. Pravda.
\newblock {Type III and N solutions to quadratic gravity}.
\newblock {\em Phys. Rev. D}, 84:024047, 2011.






\bibitem{Ficken39}
F.~A. Ficken.
\newblock The {R}iemannian and affine differential geometry of product-spaces.
\newblock {\em Ann. Math.}, 40:892--913, 1939.




\end{thebibliography}
\end{document}